\title{Adversarially-Robust Gossip Algorithms for Approximate Quantile and Mean Computations}
\author{Bernhard Haeupler \thanks{Partially funded by the Ministry of Education and Science of Bulgaria's support for INSAIT as part of the Bulgarian National Roadmap for Research Infrastructure and through the European Research Council (ERC) under the European Union's Horizon 2020 research and innovation program (ERC grant agreement 949272).} \\
\texttt{bernhard.haeupler@inf.ethz.ch}\\
INSAIT, Sofia University ``St. Kliment Ohridski'' \& ETH Z\"urich\\ \and Marc Kaufmann \thanks{The authors gratefully acknowledge support by the Swiss National Science Foundation [grant number 200021\_192079].} \\
\texttt{marc.kaufmann@inf.ethz.ch}\\
ETH Z\"urich\\ \and Raghu Raman Ravi \thanks{The author gratefully acknowledges support by the Swiss National Science Foundation [grant number 10003390].}\\\texttt{raghu.ravi@inf.ethz.ch}\\ETH Z\"urich\\\and Ulysse Schaller \footnotemark[2] \\
\texttt{ulysse.schaller@inf.ethz.ch}\\ETH Z\"urich\\ }
\date{}
\newcommand{\R}{\mathbb{R}}
\DeclarePairedDelimiter\abs{\lvert}{\rvert}
\DeclarePairedDelimiter\ceil{\lceil}{\rceil}
\newcommand{\ex}[1]{\mathbb{E} \left[ #1 \right]}
\newcommand{\pr}[1]{\mathbb{P} \left[ #1 \right]}
\newtheorem{theorem}{Theorem}
\newtheorem{corollary}[theorem]{Corollary}
\newtheorem{lemma}[theorem]{Lemma}
\newtheorem{claim}[theorem]{Claim}
\begin{document}

\maketitle

\begin{abstract}
    \noindent This paper presents gossip algorithms for aggregation tasks that demonstrate both robustness to adversarial corruptions of any order of magnitude and optimality across a substantial range of these corruption levels.
     
    Gossip algorithms distribute information in a scalable and efficient way by having random pairs of nodes exchange small messages. Value aggregation problems are of particular interest in this setting, as they occur frequently in practice, and many elegant algorithms have been proposed for computing aggregates and statistics such as averages and quantiles. An important and well-studied advantage of gossip algorithms is their robustness to message delays, network churn, and unreliable message transmissions. However, these crucial robustness guarantees only hold if all nodes follow the protocol and no messages are corrupted.

    In this paper, we remedy this by providing a framework to model both adversarial participants and message corruptions in gossip-style communications by allowing an adversary to control a small fraction of the nodes or corrupt messages arbitrarily. Despite this very powerful and general corruption model, we show that robust gossip algorithms can be designed for many important aggregation problems. Our algorithms guarantee that almost all nodes converge to an approximately correct answer with optimal efficiency and essentially as fast as without corruptions.

    The design of adversarially-robust gossip algorithms poses completely new challenges. Despite this, our algorithms remain very simple variations of known non-robust algorithms with often only subtle changes to avoid non-compliant nodes gaining too much influence over outcomes. While our algorithms remain simple, their analysis is much more complex and often requires a completely different approach than the non-adversarial setting. 
\end{abstract}

\section{Introduction}

Distributed computing provides a resource-efficient framework for scalable algorithms that can handle the vast amounts of data produced by the increasingly interconnected and digitalized world. 
Designing fast and reliable algorithms to compute aggregates, such as averages, quantiles, maxima, or minima, is one active area of research in this field. These algorithms are especially vital in the context of sensor networks and P2P networks \cite{10.1145/1055558.1055597, 10.1145/844128.844142,  10.1145/1011767.1011809, 10.1145/1031495.1031524}. 

A promising class of distributed algorithms are gossip algorithms, which are typically simple, fast, scalable, and robust to errors. Due to this, many gossip algorithms have been developed for the aggregation of information in the real world. Their communication protocol is strikingly simple. In these algorithms, communication between nodes occurs in (synchronous) rounds - in each round, every node selects a communication partner uniformly at random from the set of all nodes and chooses to either push or pull a message from its partner. These messages are usually small -- $O(\log n)$ bits, where $n$ is the total number of nodes.

Push-based and pull-based protocols each have their own unique advantages and shortcomings. In \cite{kempe2003gossip}, the authors gave a simple, fast gossip algorithm to compute the mean and built on this to compute other aggregates, including arbitrary quantiles. The main algorithm to compute the mean is a simple yet elegant push-based protocol where each node pushes half of its current value to a random node in every round. Since such a push-protocol preserves the total sum of the values, it is not hard to see why this algorithm computes the mean, and it can be shown that it converges in $O(\log n)$ rounds with high probability. Improved algorithms that reduce the total number of messages sent were given in \cite{doi:10.1137/100793104} and \cite{10.1145/1142351.1142395}. In \cite{haeupler2018optimal}, the authors gave a faster, elegant, and in fact optimal algorithm to compute exact quantiles in $O(\log n)$ rounds with high probability, even in the presence of random faults. Their main algorithm is a pull-based tournament-style algorithm, where each node repeatedly updates its value to the median of the values of three randomly chosen nodes. This computes an approximate median extremely quickly - in $O(\log \log n + \log \frac{1}{\varepsilon})$ rounds for an $\varepsilon$-approximation - and can be bootstrapped to compute exact quantiles in $O(\log n)$ rounds.

One obstacle to the real-world deployment of these algorithms is the presence of message corruptions, be they malicious or benign. In the case of some algorithms, such as the median finding algorithm in \cite{haeupler2018optimal}, they are shown to be robust to random failures, that is, where nodes fail to perform their (pull- or push-) operation. The much more challenging problem, however, lies in the design of algorithms that are also robust to - potentially adversarial - message corruption. This arguably also greatly increases their applicability in the real world.
Indeed, some algorithms break completely in the presence of even a few adversarial faults in each round. This is the case for the push-protocol in \cite{kempe2003gossip}, where if even a single node reports a value that is abnormally large, all nodes would converge to an average that is much larger than the true average. Moreover, the proof techniques used there, such as the mass preservation property, completely break down in the presence of adversarial corruption. However, a pull-based approach like the one used in \cite{haeupler2018optimal} naturally limits the number of corrupted messages, making it a powerful paradigm to address adversarial message faults. Our algorithms additionally ensure that every node sends the same value in a given round, regardless of how many nodes pull from it in that round. This approach simplifies the problem of adversarial corruption and makes the algorithm itself simpler.

To ensure that our algorithms are suitable for realistic settings, we strive to model faults in the most general manner possible. Some fundamental questions that arise include the amount of (computational) power that a potential adversary may have or how many messages may be affected by corruptions. We need to strike a balance and allow enough power to retain a meaningful adversary, but not so much that it stifles all reasonable attempts at node communication and eliminates nontrivial results. In the following, we briefly sketch the setting, concentrating for convenience on the example of an adversary, but the model readily covers nonmalicious message corruptions.

We propose an adversary that can influence a $\beta$ fraction of the nodes in each round, which we call a \emph{$\beta$-strong adversary}, see section \ref{sec:model-algo} for a more formal definition. The adversary can corrupt any message pulled from the nodes under its influence in this round and can arbitrarily change their contents. We do not make any assumptions about the computational power of the adversary. This model covers the special case where the same $\beta n$ nodes are controlled by the adversary throughout all rounds (in other words, the case where there are $\beta n$ adversarial nodes).

With the introduction of such adversarial faults, it is not reasonable to ask the nodes to produce exact answers. Thus, the best that we can hope for is an approximate solution, which is exactly the type of problem we investigate. In the \emph{$\varepsilon$-approximate quantile problem}, the goal is to compute a value whose rank is in the interval $[(\phi - \varepsilon)n, (\phi + \varepsilon)n]$. We will start by tackling the special case $\phi=1/2$ of computing an $\varepsilon$-approximate median, and show afterwards how the problem of computing an approximation of arbitrary quantiles can be reduced to that case. We then turn to the problem of finding the (approximate) mean, which is not as robust to outliers as the median. Thus, computing the mean, even approximately, in the presence of adversarial faults is difficult, as they can sabotage any algorithm by injecting extreme outliers into the computation. The best one can hope for this \emph{$\varepsilon$-approximate mean problem} is therefore to find a value that is at most an additive $\varepsilon M$ away from the true mean, where the values are assumed to be restricted to the interval $[0, M]$.

The scenario with $\beta n$ adversarial nodes tells us that we cannot hope for an approximation factor $\varepsilon$ smaller than $\beta$. Indeed, one could simply have all the adversarial nodes reliably report an extremely large or an extremely small value, which would shift the quantiles of the ``good nodes" by $\beta$, and the same scenario can be applied to the mean approximation problem as well. Hence, all our results have $\varepsilon \ge \beta$ as an (implicit) assumption.
In some applications -- and given that a $\beta$ fraction of nodes might never output a correct answer anyway -- it might be acceptable that some other nodes also hold an incorrect answer, provided this number is small enough, especially if this accelerates the algorithm in return. To investigate the possibility of trading off speed with correctness, we analyze our algorithms from the perspective of parametrized correctness, which is new in this setting. Our gossip algorithms are said to have \emph{correctness level $\gamma$} if at most $\gamma n$ nodes are storing an incorrect value (e.g.\ a value with rank outside $[1/2-\varepsilon, 1/2+\varepsilon]$ for the $\varepsilon$-approximate median problem) upon termination.
Note that we can always assume that $\gamma \ge \frac{1}{2n}$, since $\gamma < \frac{1}{n}$ already implies that there must be no incorrect node.

\textbf{Paper Organization.} The remainder of the paper is organized as follows. We give an overview of our results in section~\ref{sec:results}, followed by a description of our model and an intuitive description of our three algorithms - complemented by their pseudocode - in section~\ref{sec:model-algo}. Section~\ref{sec:rel_work} contains a brief discussion of related work. Then, we proceed to provide a proof sketch of our main theorems in order: section~\ref{sec:approx_med} is dedicated to our median algorithm and the proof of Theorem~\ref{thm:med_find}; section~\ref{sec:shift_quant} concerns itself with our quantile shifting algorithm and the proof of Theorem~\ref{thm:shift_qnt}; section \ref{sec:approx_mean} deals with our mean algorithm. Then, section~\ref{sec:low_bound} contains the discussion of our lower bounds on the round complexity, including a proof of Theorem~\ref{thm:low_bound}. Finally, in the appendix, we first collect standard concentration results and prove some technical inequalities which are used throughout the paper, then give a complete proof of all our main results.

\section{Results}\label{sec:results}

In the remainder of the paper, we say that an event $\mathcal{E}$ occurs \emph{with high probability (w.h.p.)} if $\pr{\mathcal{E}} = 1-n^{-\Omega(1)}$. Our first main result gives a fast and robust algorithm for estimating the median. The pseudocode can be found in Algorithm~\ref{alg:3tourn} and is complemented by the runtime and correctness guarantees provided in our first theorem.

\begin{restatable}{theorem}{medfind}
     \label{thm:med_find}
     For any $\varepsilon(n), \beta(n), \gamma(n) \in (0, 1)$ satisfying $\beta \leq\frac{\varepsilon}{14}$ and $\varepsilon = \Omega \left( \frac{1} {n^{0.0019}} \right)$, there exists a gossip algorithm using messages of size $O(\log n)$ that solves the $\varepsilon$-approximate median problem in the presence of a $\beta$-strong adversary with correctness level $\gamma$ with high probability in $O \left(\log \frac{1}{\varepsilon} + \log \log \frac{1}{\gamma}  + \frac{\log (1 / \gamma)}{\log (1 / \beta)} \right)$ rounds.
\end{restatable}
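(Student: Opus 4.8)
The plan is to analyze a three-sample tournament algorithm (à la \cite{haeupler2018optimal}) in which each node repeatedly replaces its value by the median of three values pulled from uniformly random nodes, but with the key modification that every node broadcasts a single fixed value in each round (so the adversary corrupting a node corrupts at most one ``copy'' of that node's message per round, and the number of corrupted samples received across all nodes in a round is stochastically dominated by something like $\mathrm{Bin}(3n, \beta)$). I would track the \emph{rank profile} of the configuration: let $p_t$ denote the fraction of good nodes holding a value with rank below $(1/2-\varepsilon)n$ and $q_t$ the fraction with rank above $(1/2+\varepsilon)n$; the goal is to drive $p_t + q_t$ below $\gamma$. The argument naturally splits into three phases matching the three terms in the runtime bound.

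\textbf{Phase 1 (coarse convergence, $O(\log\tfrac1\varepsilon)$ rounds).} Here I would show that the ``spread'' of the good nodes' values contracts geometrically. The clean way is to argue about the empirical quantiles: if at the start of a round the good values all lie in a rank-window of width $w$ around the true median, then the median-of-three operation roughly squares the tail mass on each side, so after one round the bulk of good nodes lie in a window of width $\approx w/\text{const}$ — up to an additive $O(\beta)$ error term coming from corrupted samples and an $O(1/n)$ sampling-fluctuation term. Since we only need to reach window-width $\Theta(\varepsilon)$ and $\beta \le \varepsilon/14$, this error term is harmless, and $O(\log\tfrac1\varepsilon)$ rounds suffice. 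Concentration here is via Chernoff/Azuma over the $3n$ independent sample choices in a round, using $\varepsilon = \Omega(n^{-0.0019})$ to guarantee the relevant counts $\varepsilon n$ are large enough that w.h.p.\ bounds survive a union bound over all rounds.

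\textbf{Phase 2 (doubly-exponential clean-up, $O(\log\log\tfrac1\gamma)$ rounds).} Once all but a small constant fraction of good nodes are ``correct'' (rank within $[1/2-\varepsilon,1/2+\varepsilon]$), I would show that the fraction of incorrect good nodes obeys a recursion of the form $x_{t+1} \lesssim x_t^2 + \beta + O(1/n)$: a good node stays/ becomes incorrect only if at least two of its three pulls are either from currently-incorrect nodes or corrupted. Iterating $x_{t+1}\approx x_t^2$ gives doubly-exponential decay, so after $O(\log\log\tfrac1\gamma)$ rounds we reach $x_t = O(\beta)$ (the recursion's fixed point is $\Theta(\beta)$). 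The subtlety is that this squaring recursion must be shown to hold \emph{with high probability} simultaneously for all rounds, which again needs $x_t n = \Omega(\text{poly}\log n)$ or a more careful stochastic-domination argument once $x_t$ gets very small — this is where one must be careful not to claim concentration for events involving only $o(\log n)$ nodes.

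\textbf{Phase 3 (adversarial floor, $O(\tfrac{\log(1/\gamma)}{\log(1/\beta)})$ rounds).} From $x_t = O(\beta)$ down to $x_t \le \gamma$, the squaring no longer helps (we're at the fixed point), but a linear-in-$\log$ argument does: conditioned on the incorrect fraction being some small $x$, a good node becomes incorrect essentially only if $\ge 2$ of its $3$ pulls are corrupted, an event of probability $O(\beta^2)$, \emph{plus} the contribution from pulling incorrect-but-uncorrupted nodes which is $O(x\beta + x^2)$. So $x_{t+1} \lesssim \beta^2 + x_t\beta$, and iterating the factor-$\beta$ contraction from $x=\Theta(\beta)$ down to $\gamma$ takes $\Theta(\log(\gamma/\beta)/\log(1/\beta)) = O(\log(1/\gamma)/\log(1/\beta))$ rounds. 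The final correctness-level claim needs that we can actually certify ``$\le \gamma n$ incorrect nodes'': when $\gamma n$ is below a constant one invokes the remark that $\gamma = 1/(3n)$ already forces zero incorrect nodes, so one just runs until the incorrect set is empty, absorbing this into the same asymptotic round count.

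\textbf{Main obstacle.} The hard part is Phase 2/3 concentration when the number of incorrect nodes becomes sublogarithmic: standard Chernoff bounds no longer give high-probability control, so the squaring/contraction recursion must be replaced by an explicit coupling or a supermartingale argument (e.g., bounding the expected number of incorrect nodes next round and using that once it drops below $1$ it stays $0$ forever w.h.p.), and one must verify that the constants — in particular the slack in $\beta \le \varepsilon/14$ — are chosen so that the error terms in \emph{every} phase never overwhelm the contraction. Ensuring the three phase-analyses compose (the output configuration of one phase satisfies the input hypothesis of the next, w.h.p., under a single union bound of size $O(\text{round count})$) is the main bookkeeping challenge, and the $\varepsilon = \Omega(n^{-0.0019})$ hypothesis is precisely what makes the union bound go through.
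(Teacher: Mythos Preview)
Your Phases~1 and~2 match the paper's analysis of the 3-tournament (Lemma~\ref{lem:l_t<beta} first shows the deterministic recursion $l_{i+1}=3(l_i+\beta)^2-2(l_i+\beta)^3$ drops from $\tfrac12-\varepsilon$ to a constant in $O(\log\tfrac1\varepsilon)$ steps and then squares down in $O(\log\log\tfrac1{\gamma'})$ steps, and Lemmas~\ref{lem:exp}--\ref{lem:oneround-concentration} supply concentration). The multiplicative error $(1+\delta)^{(3^i-1)/2}$ accumulated over $t$ rounds is exactly why $3^t=O(n^{0.33})$, i.e.\ why the peculiar hypothesis $\varepsilon=\Omega(n^{-0.0019})$ appears; this is slightly more delicate than ``union bound survives'' but your instinct is right.

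Your Phase~3, however, does not work. The recursion you write, $x_{t+1}\lesssim \beta^2 + x_t\beta$, has an \emph{additive} $\beta^2$ term, so its fixed point is $\Theta(\beta^2)$, not $0$: once $x_t$ drops to order $\beta$, the next iterate is $\Theta(\beta^2)$ and then it \emph{stays} at $\Theta(\beta^2)$ forever---there is no ``factor-$\beta$ contraction'' to iterate. (More precisely, the 3-tournament gives $x_{t+1}\le 3(x_t+\beta)^2$, whose fixed point is $\Theta(\beta^2)$.) So for any $\gamma\ll\beta^2$ the 3-tournament alone cannot reach correctness level $\gamma$, no matter how many rounds you run. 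This is a genuine gap, and it is also why your worry about sublogarithmic concentration has no clean resolution along the lines you sketch: you are trying to push the 3-tournament into a regime where it has stalled anyway.

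The paper's fix is to \emph{change algorithm} for this last leg: stop the 3-tournament once the bad fraction is $O(\gamma')$ with $\gamma'\ge\beta$, and then have every node collect $K=\Theta\bigl(\log(1/\gamma)/\log(1/\gamma')\bigr)$ fresh samples over $K$ sequential rounds and output the median of all $K$ of them (line~9 of Algorithm~\ref{alg:3tourn}). A node fails only if $\ge K/2$ of its $K$ samples are bad, which has probability $\le\binom{K}{K/2}(5\gamma')^{K/2}\le(20\gamma')^{K/2}\le(\gamma/4)^4$; from here a per-node Markov/Chernoff argument (Lemma~\ref{lem:final-median-approx}) gives the $\gamma$-correctness w.h.p., and the $K$ rounds contribute exactly the $O\bigl(\log(1/\gamma)/\log(1/\beta)\bigr)$ term. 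The point is that a single $K$-sample median has no analogous fixed-point floor, and the per-node failure probability can be driven polynomially small in $n$, which simultaneously resolves the concentration issue you flagged.
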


Next, we investigate the problem of approximate quantiles, where the objective is to compute an $\varepsilon$-approximate $\phi$-quantile, i.e. the $\ceil{\phi}$-th smallest value. Notice that it suffices to first transform the values of the nodes such that an $\varepsilon$-approximate $\phi$-quantile in the original setting is the same as an $\varepsilon$-approximate median in this new setting. Our next main result allows for adversarially robust quantile shifting, which along with our previous result yields an adversarially robust algorithm to approximate quantiles. The pseudocode of the quantile shifting algorithm is provided in Algorithm~\ref{alg:2tourn}, the corresponding runtime guarantees are given in the theorem below.

\begin{restatable}{theorem}{shiftqnt}
    \label{thm:shift_qnt}
    For any $\varepsilon(n), \beta(n) \in (0, 1)$ satisfying $\varepsilon \leq \frac{1}{6}$, $\beta \leq \frac{\varepsilon^{2.5}}{16}$, and $\varepsilon = \Omega \big(\frac{(\log n)^{1/5}}{n^{1/5}} \big)$, there exists a gossip algorithm using messages of size $O(\log n)$ that reduces any $\varepsilon$-approximate quantile problem into an $\varepsilon$-approximate median problem in the presence of a $\beta$-strong adversary with high probability in $O \left( \log \frac{1}{\varepsilon} \right)$ rounds.
\end{restatable}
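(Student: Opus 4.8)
The plan is to realize the reduction by morphing the distribution of the honest nodes' values over $T=O(\log\tfrac1\varepsilon)$ rounds of a biased comparison step — the ``$2$-tournament'' of Algorithm~\ref{alg:2tourn} — so that the value currently of (honest-)rank $\phi n$ ends up of rank $\tfrac12 n$, after which every node freezes its value and we invoke the median algorithm of Theorem~\ref{thm:med_find}. Concretely, in round $t$ each node pulls from two uniformly random nodes and, with a precomputed probability $p_t\in[\tfrac12,1]$ depending only on $\phi$ and $\varepsilon$, replaces its value by the \emph{minimum} of the two pulled values, and by the \emph{maximum} with probability $1-p_t$. We may assume $\varepsilon\le\phi\le\tfrac12$: the case $\phi>\tfrac12$ is symmetric (use $p_t\le\tfrac12$, i.e.\ swap min and max), $\phi\in(\tfrac12-\varepsilon,\tfrac12+\varepsilon)$ needs no work, and if $\phi<\varepsilon$ it suffices to produce a value whose rank is polynomially-in-$\varepsilon$ small below $\varepsilon n$, which the same algorithm does — note that simply ``computing the minimum'' is useless against a $\beta$-strong adversary. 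In every case $\log\tfrac1\phi=O(\log\tfrac1\varepsilon)$.

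In the idealized setting (no adversary, $n=\infty$) one round sends the honest CDF $G$ to $\Psi_{p_t}\circ G$, where $\Psi_p(x)=p(1-(1-x)^2)+(1-p)x^2=(1-2p)x^2+2px$ is the convex combination of ``min of two'' ($\Psi_1$) and ``max of two'' ($\Psi_0$). The point is that $\Psi_p$ acts on quantiles \emph{pointwise}, so the quantile of any fixed value $v$ follows the scalar recursion $x_{t+1}=\Psi_{p_t}(x_t)$ regardless of the data, that $\Psi_1(x)=2x-x^2\approx 2x$ near $0$, and that $\Psi_p'(x)=2p-2x(2p-1)\ge 1$ for $x\in[0,\tfrac12]$ whenever $p\ge\tfrac12$. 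I would then pick the schedule in two stages: a \emph{coarse} stage of $L=\log_2\tfrac1\phi+O(1)$ rounds with $p_t$ very close to $1$, along which $x_t$ roughly doubles and reaches a fixed constant in $(\tfrac18,\tfrac12)$, followed by a \emph{fine} stage of $O(\log\tfrac1\varepsilon)$ rounds with $p_t\downarrow\tfrac12$ chosen so that $\tfrac12-x_t$ halves each round down to $\Theta(\varepsilon)$. Because each $\Psi_{p_t}$ is non-contracting on $[0,\tfrac12]$, the composed quantile map $Q=\Psi_{p_T}\circ\cdots\circ\Psi_{p_1}$ has $Q'\ge1$ there, so $Q$ carries an interval of length $O(\varepsilon)$ around $\phi$ onto a superset of $[\tfrac12-\varepsilon,\tfrac12+\varepsilon]$; hence an $\varepsilon$-approximate median of the morphed values is an $O(\varepsilon)$-approximate $\phi$-quantile of the originals, and rescaling $\varepsilon$ by a constant before starting yields the clean statement, with total running time $O(\log\tfrac1\varepsilon)$ by Theorem~\ref{thm:med_find}.

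What remains — and what I expect to be the crux — is to show the actual random, adversarial process stays within $o(\varepsilon)$ of this idealized orbit. Sampling noise is routine: conditioned on the configuration before round $t$, the new empirical quantile of a fixed value is an average of $n$ independent indicators with mean $\Psi_{p_t}(x_t)$, so Chernoff plus a union bound over the $O(\log\tfrac1\varepsilon)$ rounds and the finitely many relevant thresholds — using that the $\Psi_{p_t}$ are $O(1)$-Lipschitz — gives a total deviation that is $o(\varepsilon)$ precisely under the hypothesis $\varepsilon=\Omega((\log n/n)^{1/5})$. The adversary is the hard part. In round $t$ each pulled node is adversarial with probability at most $\beta$, so w.h.p.\ at most about $2\beta n$ honest nodes receive a corrupted pull, and the worst the adversary can do to such a node is to send its value to $-\infty$ (if it is taking a minimum) or $+\infty$ (if a maximum). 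This creates a \emph{parasitic population of honest nodes parked at the extremes}, and the danger is that during the coarse doubling stage this population is itself repeatedly fed through ``min of two'', so its fraction $f_t$ obeys $f_{t+1}\approx 2f_t+2\beta$ and grows to $f_L\approx 2^{L}\cdot\Theta(\beta)=\Theta(\beta/\phi)$ by the end of the coarse stage (the fine stage adds only $O(\beta\log\tfrac1\varepsilon)$ more). For the morphed distribution to still place the target value within $\varepsilon$ of the median we need $\Theta(\beta/\phi)\ll\varepsilon$; since the smallest quantile one must treat robustly within $\varepsilon$-accuracy is a fixed power of $\varepsilon$, balancing this against the $\Theta(1/\phi)$ blow-up is what produces a bound of the form $\beta\lesssim\varepsilon^{2.5}$, matching the hypothesis $\beta\le\varepsilon^{2.5}/16$ up to constants (pinning down the exact exponent is part of the work). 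The remaining effort is to make this accounting rigorous: bound the parasitic growth, argue the adversary gains nothing by placing corrupted values away from the extremes (for a ``min'' node, any finite injected value is dominated by its honest pull whenever that pull is small), show the orbit never overshoots $\tfrac12$ so non-contraction of the $\Psi_{p_t}$ is never lost, and check that the $\Theta(\varepsilon)$-gap between the orbits of $\phi\pm\Theta(\varepsilon)$ survives the mildly contracting corruption term. Finally, after the shift only an $O(\varepsilon)$ fraction of nodes hold aberrant values, so running the median algorithm of Theorem~\ref{thm:med_find} on the frozen values reports a value whose honest-rank is off by only $O(\varepsilon)$, completing the reduction.
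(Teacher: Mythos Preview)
Your plan diverges from the paper in two substantive ways. First, the algorithm: you mix ``min of two'' with ``max of two'', whereas Algorithm~\ref{alg:2tourn} mixes ``min of two'' with a \emph{single copy} pull, and in fact sets $\delta=1$ (pure min) on every round except the last. Second, the analysis: you track the scalar quantile map $\Psi_{p_t}$ and appeal to non-contraction $\Psi_{p_t}'\!\ge 1$ on $[0,\tfrac12]$; the paper instead tracks the two sets $H_i$ (fraction above the $(\phi+\varepsilon)$-quantile) and $M_i$ (fraction in the target window) via the deterministic proxies $h_i,h_i'$ with recursion $h_{i+1}=(h_i+\beta)^2$, $h'_{i+1}=(h'_i-\beta)^2$, and proves separately that $|H_t|/n\in[\tfrac12-\tfrac{15\varepsilon}{8},\tfrac12-\tfrac{\varepsilon}{8}]$ (Lemma~\ref{lem:Hbound}) and $|M_t|/n\ge 2\varepsilon$ (Lemma~\ref{lem:fin_mid}). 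The adversary is absorbed into the $\pm\beta$ of the recursion rather than modeled as a parasitic mass at the extremes.

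Your sketch has a real gap at the fine stage. You propose to halve $\tfrac12-x_t$ each round starting from $x_L\in(\tfrac18,\tfrac12)$, but this is impossible for small $x_L$: since $\Psi_p(x)\le\Psi_1(x)=2x-x^2$, one has $\Psi_p(\tfrac18)\le\tfrac{15}{64}<\tfrac{5}{16}$, so no choice of $p\in[0,1]$ halves the distance from $\tfrac18$. More seriously, the non-contraction argument you use to land the upper endpoint fails as stated: you need $Q(\phi+C\varepsilon)\ge\tfrac12+\varepsilon$, but $Q'(x)=\prod_t\Psi_{p_t}'(x_t)\ge 1$ only if every intermediate $x_t\le\tfrac12$, and the orbit of $\phi+C\varepsilon$ \emph{must} cross $\tfrac12$ at some round before the last if its final value is to exceed $\tfrac12$ --- at which point the corresponding derivative factor drops below $1$. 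You flag ``show the orbit never overshoots $\tfrac12$'' as remaining work, but this is in direct tension with your superset conclusion. The paper sidesteps both issues by steering $|H_i|/n$ (equivalently $1-x^{(\phi+\varepsilon)}$) with a \emph{single} interpolated last round to land slightly below $\tfrac12$, and by bounding $|M_i|$ from below directly rather than through a derivative argument; this also removes any need to case-split on $\phi<\varepsilon$, since $h_0'=1-(\phi+\varepsilon)\le 1-\varepsilon$ uniformly. Finally, your account of where $\beta\le\varepsilon^{2.5}$ enters is off: in the paper it comes from the bound $2^t\le\varepsilon^{-1.5}$ on the number of rounds (Lemma~\ref{lem:iter}) feeding into the multiplicative drift of Lemma~\ref{lem:hbound}, not from a ``smallest-quantile is a power of $\varepsilon$'' balancing; your parasitic estimate $f_L=O(\beta/\phi)=O(\beta/\varepsilon)$ would only require $\beta\ll\varepsilon^2$, which is weaker and not what drives the stated hypothesis.
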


Next, we investigate the approximate mean problem with $M$-bounded values. Here, every node is initially given a value in the range $[0, M]$ (where $M$ is polynomially bounded in $n$), and the problem is to find the mean of the values up to an additive error of $\varepsilon M$. In this setting, we provide the following adversarially robust result for computing approximate means. The pseudocode for our algorithm is given in Algorithm~\ref{alg:pullavg}, its runtime and correctness guarantees are stated in the subsequent theorem.

\begin{restatable}{theorem}{mean}
\label{thm:mean}
    For any $\varepsilon(n), \beta(n), \gamma(n) \in (0, 1)$ satisfying $\beta \leq \left( \frac{\varepsilon}{100} \right)^{2.5}$ and $\varepsilon = \Omega \big( \frac{(\log n)^{6/5}}{n^{1/5}}\big)$, there exists a gossip algorithm using messages of size $O(\log n)$ that solves the $\varepsilon$-approximate mean problem with $M$-bounded values in the presence of a $\beta$-strong adversary with correctness level $\gamma$ with high probability in $O \left(\log \frac{1}{\varepsilon} + \log \frac{1}{\gamma + \beta} + \frac{\log (1 / \gamma)}{\log (1 / \beta)} \right)$ rounds.
\end{restatable}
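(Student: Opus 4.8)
The plan is to analyze Algorithm~\ref{alg:pullavg} as a pull-based robust-averaging process, tracking two potentials over the rounds $t$: the mean $\bar x_t$ of the values currently held by the honest nodes, and their empirical variance $V_t$. Every received value is first clamped to $[0,M]$, so a corrupted sample can displace an average by at most $M$ per unit of weight; and since the adversary influences only a $\beta$-fraction of nodes and $\beta \le (\varepsilon/100)^{2.5} \le \varepsilon$, the honest mean differs from the true mean $\mu$ by at most $\beta M \le \frac{\varepsilon}{100}M$, so it suffices to drive the honest values to within $O(\varepsilon M)$ of $\bar x_0$. A preprocessing step that clips away the extreme order statistics using the approximate-quantile algorithm of Theorem~\ref{thm:shift_qnt} — which perturbs $\mu$ by only $O(\varepsilon M)$ — may be used to shrink the effective range and would account for the inherited relation $\beta \le (\varepsilon/100)^{2.5}$ and the lower bound on $\varepsilon$.

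First I would carry out the contraction phase. Each node replaces its value by a robust aggregate — a trimmed/clipped average of a constant number $k$ of sampled values. Since the number of corrupted samples seen by a fixed node in one round is stochastically dominated by $\mathrm{Bin}(k,\beta)$, one obtains in expectation a recursion of the form $V_{t+1} \le V_t/c + O(\beta M^2)$ for a constant $c>1$: the first term is the usual variance reduction from averaging roughly independent draws, and the additive term bounds the contamination contributed by the corrupted samples. Iterating, $V_t$ decays geometrically down to a floor of order $\beta M^2$; because $\beta \ll \varepsilon^2$, after $O(\log\frac1\varepsilon)$ rounds $V_t \le (\varepsilon M)^2$, and a second-moment (Chebyshev-type) bound then shows that all but a small fraction of the honest nodes lie within $\frac{\varepsilon}{2}M$ of $\bar x_t$. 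Simultaneously I must control the drift: only the $\beta$-fraction of corrupted samples can move $\bar x_t$, and each by at most $M$, so $|\bar x_{t+1}-\bar x_t| \le \beta M$, and over the $O(\log\frac1\varepsilon + \log\frac1{\gamma+\beta})$ averaging rounds the total drift is $o(\varepsilon M)$ using $\beta \le (\varepsilon/100)^{2.5}$. All of this holds with high probability by concentrating the per-node count of corrupted samples and the empirical moments and then union-bounding over the $n$ nodes and $O(\log n)$ rounds; the adversary's adaptivity is handled by exposing its choices round by round and conditioning on the high-probability event that the configuration so far is as described. The requirement that these concentration statements hold is what forces $\varepsilon$, and hence $\beta$, to be polynomially bounded away from $0$, i.e.\ $\varepsilon = \Omega((\log n)^{6/5}/n^{1/5})$.

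Next I would carry out the correctness-amplification phase, which produces the two lower-order terms. After the contraction phase the honest values occupy the window $I = [\mu - \varepsilon M, \mu + \varepsilon M]$ up to a small fraction of exceptions, and no further drift can be tolerated, so I switch the update rule to a filtering aggregate: the median of a constant number of sampled values, or the median of $\Theta\big(\frac{\log(1/\gamma)}{\log(1/\beta)}\big)$ samples collected one per round. The median of values lying in $I$ again lies in $I$, so the honest bulk stays put, while a node falls outside $I$ in a step only if at least half of its samples are corrupted or already outside $I$. This is exactly the correctness-amplification recursion already established for the median algorithm (proof of Theorem~\ref{thm:med_find}), now applied to the indicator event ``the stored value lies in $I$'': continued averaging handles the regime where $\gamma + \beta$ is not too small, costing $O(\log\frac1{\gamma+\beta})$ rounds, and the final median-of-many-samples step drives the number of incorrect nodes below $\gamma n$ in a further $O\big(\frac{\log(1/\gamma)}{\log(1/\beta)}\big)$ rounds. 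Adding the three phase lengths gives the claimed running time.

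The main obstacle is that the mean, unlike the median, is not robust to outliers, so the averaging update has to do two things at once that pull in opposite directions: contract the variance geometrically, yet keep $\bar x_t$ from drifting by more than $\varepsilon M$ even though every round the adversary may inject arbitrary values into a $\beta$-fraction of all samples, adaptively choosing both the values and the corrupted set. Reconciling these is what forces the clamping step and the strong relation $\beta \ll \varepsilon^2$ (tightened to $\beta \le (\varepsilon/100)^{2.5}$ in the statement), and pushing the number of incorrect nodes below the $\Theta(\beta)$-scale contamination floor intrinsic to any pure averaging scheme is what makes the separate median-filtering phase necessary — whose validity in turn depends on the averaging phase having first confined the honest bulk to a $\Theta(\varepsilon M)$-window. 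The most delicate part of the write-up is the moment and concentration bookkeeping needed to turn ``the variance is small'' into ``few nodes are far from the mean'' uniformly over all $O(\log n)$ rounds against the adaptive adversary.
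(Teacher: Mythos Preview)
Your proposal is correct and follows essentially the same approach as the paper: track a variance-like potential and the running mean, show the former contracts geometrically to an $O(\beta M^2)$ floor while the latter drifts by only $O(\beta M)$ per round (both made high-probability via Azuma-type concentration), conclude that after $O(\log\frac{1}{\eta})$ rounds all but a small fraction of nodes lie in the target window, and then run a median-of-$K$-samples phase to drive the incorrect fraction below $\gamma$. The paper's Algorithm~\ref{alg:pullavg} simply averages two samples clipped to $[0,M]$ (no trimmed mean and no quantile-based preprocessing --- that step you mention is unnecessary here), but the analysis skeleton you outline matches Lemmas~\ref{lem:expval}--\ref{lem:final-count} and Theorem~\ref{thm:run-time} almost exactly.
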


We note that the approximation here is weaker as it depends on the promised bound on $M$. The reason for this is simple: the adversary can always claim to have a value that is very large, and this would inevitably skew the computed average to a very large value. Thus, the best one can hope for is to give an error bound as a function of a bound on the possible communicated values.

Note that in the special case where all nodes have values in $\{ 0, 1\}$, the above yields an approximate \emph{counting algorithm} with an additive error of at most $\varepsilon n$. 

Finally, we prove the following theorem, which can be used to lower-bound the round complexity of gossip algorithms in the presence of adversarial nodes.

\begin{restatable}{theorem}{lowbound}
\label{thm:low_bound}
    Let $\beta(n), \gamma(n) \in (0, 1)$. Then, in the presence of a $\beta$-strong adversary, for any gossip algorithm that runs for less than $\frac{\log (1/2\gamma)}{\log(1/\beta)}$ rounds,  with probability at least $1/10$ more than a $\gamma$ fraction of the nodes have only received corrupted messages (if any).
\end{restatable}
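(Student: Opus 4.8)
The plan is to \emph{exhibit} a single $\beta$-strong adversary and show that against it no gossip algorithm running for $T<\frac{\log(2\gamma)}{\log\beta}$ rounds can avoid leaving more than $\gamma n$ nodes with only corrupted messages, with probability at least $\frac13$. The adversary I would use is the simplest one available: fix once and for all an arbitrary set $S\subseteq V$ of $\lceil\beta n\rceil$ nodes and, in every round, corrupt every message sent by a node of $S$ (say, replace its content by fresh uniform garbage). This is precisely the ``$\beta n$ adversarial nodes'' special case noted in the model section, so it is a legitimate $\beta$-strong adversary; it is moreover oblivious, hence valid regardless of how the per-round random partner choices are exposed to it. Since communication is pull-based, the only messages a node $w$ receives in a given round are those of the partner it pulls from, so $w$ receives an uncorrupted message in that round precisely when it pulls from a node outside $S$.

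For each node $w\in V$, let $X_w$ be the indicator of the event that in \emph{every} one of the rounds $1,\dots,T$ the partner $w$ pulls from lies in $S$, and put $X=\sum_{w\in V}X_w$. If $X_w=1$ then $w$ has received only corrupted messages, so it suffices to prove $\pr{X>\gamma n}\ge\frac13$. The key structural observation is that distinct nodes draw their partner choices from independent randomness, so $X_1,\dots,X_n$ are mutually independent; moreover, in each round a fixed node's partner lies in $S$ with probability $|S|/n\ge\beta$ (using the convention that a node may pick itself, or else absorbing an additive $O(1/n)$ loss), and the $T$ rounds are independent, so $\pr{X_w=1}\ge\beta^T$ for every $w$. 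Consequently $X$ stochastically dominates a $\mathrm{Bin}(n,\beta^T)$ random variable $Y$, and in particular $\ex{X}\ge\ex{Y}=n\beta^T$.

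It remains to feed in the round bound. From $T<\frac{\log(2\gamma)}{\log\beta}$ and $\log\beta<0$ one gets $T\log\beta>\log(2\gamma)$, i.e.\ $\beta^T>2\gamma$, hence $\mu:=\ex{Y}=n\beta^T>2\gamma n$; together with the hypothesis $\gamma n\ge 2$ this gives $\mu>4$ and $\gamma n<\mu/2$. A Chernoff bound on the lower tail of $Y$ then yields $\pr{X\le\gamma n}\le\pr{Y\le\gamma n}\le\pr{Y\le\mu/2}\le e^{-\mu/8}\le e^{-\gamma n/4}\le e^{-1/2}<\frac23$, so $\pr{X>\gamma n}\ge 1-e^{-1/2}>\frac13$, which is the claim. (The hypotheses $\gamma<\beta$ and $4\gamma\le\beta$ are used only to guarantee $\frac{\log(2\gamma)}{\log\beta}>1$, i.e.\ that the statement is non-vacuous; note also that $\frac{\log(2\gamma)}{\log\beta}=\Theta\!\big(\frac{\log(1/\gamma)}{\log(1/\beta)}\big)$, matching the last term in the running times of Theorems~\ref{thm:med_find} and~\ref{thm:mean}.)

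I expect the main subtlety to be twofold. First, one must count \emph{all} $n$ nodes rather than just the $(1-\beta)n$ nodes outside $S$: when $\beta$ is close to $1$ there are simply fewer than $\gamma n$ nodes outside $S$, so the bound is only salvaged by the fact that a node of $S$ is itself perfectly capable of pulling exclusively from $S$ and thus ending up in the dark --- which is why the ``corrupted messages'' reading of the model (all nodes run the protocol, the adversary only garbles what the influenced nodes send) and the exact counting/self-loop conventions matter. Second, bounding $\ex{X}$ alone is not enough: a reverse-Markov argument gives only $\pr{X>\gamma n}\gtrsim\gamma$, which may be far below $\frac13$, so one genuinely needs the independence of the $X_w$ to run a concentration bound, with the constants tracked carefully --- the hypothesis $\gamma n\ge 2$ is exactly what forces $\mu$ past the threshold $8\ln\frac32$ needed for $e^{-\mu/8}<\frac23$. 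If one wanted to also cover push-based communication, one would additionally have to rule out messages pushed to $w$ from outside $S$; for the pull-based model considered here this does not arise.
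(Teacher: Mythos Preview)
Your proposal is correct and follows essentially the same approach as the paper: fix a set of $\beta n$ adversarial nodes, observe that each node independently hits only that set in all $T$ rounds with probability $\beta^T>2\gamma$, and then apply a lower-tail Chernoff bound together with $\gamma n\ge 2$ to obtain $1-e^{-1/2}>\tfrac13$. Your write-up is somewhat more explicit about the adversary construction and the Chernoff constants, and you flag the push-versus-pull subtlety that the paper glosses over, but the core argument is identical.
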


We also extend the $\Omega \big( \log \log n + \log\frac{1}{\varepsilon}\big)$ lower bound given in \cite{haeupler2018optimal} to an $\Omega \big( \log \log \frac{1}{\gamma} + \log \frac{1}{\varepsilon}\big)$ lower bound in our setting, see Proposition \ref{prop:lower-bound} in section \ref{sec:low_bound}. These results prove that our median and quantile algorithms have optimal round complexity and the mean algorithm has almost optimal round complexity. We note that our theorems require varying upper bounds of $\beta$ in terms of $\varepsilon$. As mentioned in the introduction, it is not possible to hope for $\varepsilon$-approximation with $\beta > \varepsilon$, as in particular all adversarial nodes might always output values that are completely out of the range of the correct values. Our analysis here is likely not tight and we did not optimize these bounds since our algorithms already cover a substantial range of values. The same holds for the upper bound on $\varepsilon$ in Theorem~\ref{thm:shift_qnt}. For lower bounds of $\varepsilon$, note that any $\varepsilon < \frac{1}{2n}$ would yield an exact algorithm, which is out of reach due to the presence of the adversary. Hence a lower bound which is inversely polynomial in $n$ follows naturally, where again we did not optimize for the precise exponent, and simply stated what was required for our analysis in order to prove concentration of the expected behavior of the algorithms. However, the key reason why we do not view this restriction as major is a different one: For many applications, approximation factors that are a small constant fraction are completely sufficient. Moreover, this is also the parameter range in which our algorithms are extremely fast. More concretely, when both $\varepsilon$ and $\gamma$ are constant, our algorithms are guaranteed to terminate in constant time, no matter how large $n$ is.

\section{Model Definition and Algorithm Overview}\label{sec:model-algo}

We consider a network of $n$ nodes that are all connected to each other (one can think of it as a complete graph on $n$ vertices). For all the problems that we study, each node is initially assigned a value that is polynomially bounded in $n$. Communication occurs in synchronous rounds, during each of which every node contacts some node chosen uniformly at random (including itself) and then either pushes or pulls a message - of size $O(\log n)$ bits - from the chosen node. As mentioned before, all our algorithms use the pull operation only, and the message sent out by a pulled node is the same, no matter which (and how many) nodes are pulling from it.

In every round, the messages sent out by at most $\beta n$ of the nodes can become corrupted. More precisely, we assume that these corruptions can be adversarial in the following sense: At the beginning of each round, the adversary can choose at most $\beta n$ nodes so that any message pulled from or pushed to these nodes can be entirely and arbitrarily modified by the adversary, that is, the message can be changed to whatever the adversary wants. Crucially, the adversary does not know beforehand the random choice of partners of the nodes in this round. We do not assume any additional restrictions on the adversary. For example, the adversary may have unbounded computational power and know the entire history of the algorithm at any given moment of time. 

Before describing our new adversarially-robust algorithms, we briefly describe the state-of-the-art in the non-adversarial setting. The approximate median find algorithm given by \cite{haeupler2018optimal} uses a pull-based tournament-style protocol. Here, every node repeatedly updates its value to the median of three randomly chosen node values. The effect is that after $O\big( \log \log n + \log\frac{1}{\varepsilon}\big)$ rounds, every node has communicated directly or indirectly with $\log n / \varepsilon^2$ other nodes. Notice that the problem becomes straightforward with large message sizes, as every node can collect all the information about every node it has directly or indirectly heard from, and $\log n / \varepsilon^2$ many uniformly randomly chosen node values are sufficient to prove the concentration of the median value for this node. As the authors show, we do not need to store all such information, and we can make do with just maintaining the median. Their algorithm then concludes with a majority vote over a constant number of nodes to allow for a union bound.

Our algorithms retain the simplicity of these previous algorithms with minor changes while being robust to adversarial corruptions. Some of the crucial changes include giving a more nuanced termination condition which takes into account $\beta$ and $\gamma$, and deriving tailored bounds on expressions involving binomial coefficients to bound the effect of the adversaries on each round. We also introduce an extended second phase -- now collecting values from $O\big( \frac{\log(1/\gamma)}{ \log(1/\beta)} \big)$ many nodes -- to ensure that at most a $\gamma$ fraction of the nodes are storing an incorrect value upon termination, as desired. 

More specifically, for the approximate median computation, as outlined above, we first let the nodes conduct a 3-tournament, that is, in the first phase, each node pulls three node values uniformly at random and updates its own value to the median of the three. In the second, additional, phase, the algorithm then uniformly samples a number of nodes and outputs their median. The pseudocode for our approximate median algorithm is given in Algorithm~\ref{alg:3tourn}.

\begin{algorithm}[H]
    \label{alg:3tourn}
    \caption{$3$-TOURNAMENT($v$)}
    \DontPrintSemicolon
    \SetAlgoLined
    $\delta \gets \left(\frac{30\log n}{n}\right)^{1/3}$ \;
    $\gamma' \gets \max(\delta, \beta, \min(\frac{\gamma}{4}, \frac{\varepsilon}{14}))$ \;
    $t \gets \ceil{\log_{(157/156)} (\frac{1}{3(\varepsilon - \beta)})} + \ceil{\log_2(\log_{9/8}(\frac{1}{\gamma'}))} + 2$ \;
    \tcp{Phase 1}
    \For{$i = 1$ to $t$}{
        Select $3$ nodes $u_1(v), u_2(v), u_3(v)$ uniformly at random \;
        $x_v \gets \mathrm{median}(x_{u_1(v)}, x_{u_2(v)}, x_{u_3(v)})$ \;
    }
    \tcp{Phase 2}
    \If{$\gamma' > \frac{\gamma}{4}$}{
        Sample $K = \ceil{8 \cdot \log_{20 \gamma'} \left( \frac{\gamma}{4} \right)}$ nodes uniformly at random and set $x_v$ to the median value of these $K$ nodes.
    }
    Output $x_v$
\end{algorithm}

To compute approximate quantiles $\phi$, \cite{haeupler2018optimal} gives another tournament-style algorithm which reduces the problem to that of finding an approximate median. Our algorithm is based on their ingeniously simple algorithm. Its basic principle can be described in one sentence: In each round, each node pulls two node values uniformly at random and updates its value to be the minimum (if $\phi < 1/2$) or maximum (if $\phi > 1/2$) of the two. Our algorithm here requires a more refined tracking of its own progress and knowing when to terminate, taking into account the magnitude of the potential corruption, which we measure in terms of $\beta$. This also greatly affects the required analysis. The pseudocode for our quantile shifting algorithm is given in Algorithm~\ref{alg:2tourn} (for the case $\phi < 1/2$).

\begin{algorithm}[H]
    \label{alg:2tourn}
    \caption{$2$-TOURNAMENT($v$)}
    \DontPrintSemicolon
    \SetAlgoLined
    $h'_0 \gets 1 - (\phi + \varepsilon)$ \;
    $i \gets 0$ \;
    $T \gets \frac{1}{2} - \frac{21\varepsilon}{16}$ \;
    \While{$h'_i > T$}{
        $h'_{i + 1} \gets \left(h'_{i} - \beta \right)^2$ \;
        $\delta \gets \min \left(1, \frac{h'_i - T}{h'_i - h'_{i + 1}} \right)$ \;
        With probability $\delta$ \textbf{do} \;
        \hspace{1em} Select $2$ nodes $u_1(v), u_2(v)$ uniformly at random \;
        \hspace{1em} $x_v \gets \min(x_{u_1(v)}, x_{u_2(v)})$ \;
        Otherwise \textbf{do} \;
        \hspace{1em} Select a node $u_1(v)$ uniformly at random \;
        \hspace{1em} $x_v \gets x_{u_1(v)}$ \;
        $i \gets i + 1$ \;
    }
\end{algorithm}

To compute the approximate mean, \cite{kempe2003gossip} gives the elegant push-sum algorithm, which works roughly as follows: In each round, every node pushes half of its value to another node which is chosen uniformly at random. After $O\big(\log n + \log \frac{1}{\varepsilon}\big)$ rounds, the algorithm converges to an approximate solution. The authors use a crucial property of this push-based protocol, mass preservation, to show that the sample variance of the values decreases exponentially, while the sample mean remains more or less unchanged.

As discussed before, the mass preservation property breaks down in the presence of adversarial corruption. Additionally, push-based protocols are not ideal as adversarially corrupted nodes gain a significant amount of influence by pushing faulty values to an arbitrary number of nodes (unless we restrict their power). Moreover, pull-based approaches generally work better than push-based ones in the presence of adversarial faults, as they inherently restrict the influence that adversarial corruptions can have. We adopt a novel approach and design a tournament-style algorithm, but use, however, a similar proof technique as in \cite{kempe2003gossip}, despite the absence of mass preservation. The algorithm itself is very simple - each node repeatedly updates its value to the average of two node values chosen uniformly at random. Finally, we also end with a majority vote, as in the approximate median algorithm, to ensure that at most a $\gamma$ fraction of the nodes output a wrong answer. The pseudocode of our adversarially robust mean approximation algorithm can be found in Algorithm~\ref{alg:pullavg}.

\begin{algorithm}[H]
    \label{alg:pullavg}
    \caption{Pull-Avg($v$)}
    \DontPrintSemicolon
    \SetAlgoLined
    $\delta \gets \frac{(\log n)^{1/2}}{n^{1/2}} $ \;
    $\eta \gets \max(\beta, \delta, \min(\gamma^5, \left(\frac{\varepsilon}{100}\right)^{2.5}))$ \;
    $T \gets \ceil{\log_{9/5} \frac{1}{\eta}}$  \;
    \tcp{Phase 1}
    \For{$t = 1$ to $T$}{
        Select $2$ nodes $u_1(v), u_2(v)$ uniformly at random \;
        If $x_{u_1(v)}, x_{u_1(v)} \notin [0, M]$, set them to the closest value in $[0, M]$ \;
        $x_v \gets \frac{x_{u_1(v)} + x_{u_2(v)}}{2}$ \;
    } 
    \tcp{Phase 2}
    \If{$\eta > \min(\gamma^5, \left(\frac{\varepsilon}{100}\right)^{2.5}))$}{
        Sample $K = \max \left(100, \ceil{40\log_{32\beta} \left( \frac{\gamma}{2} \right)} \right)$ nodes uniformly at random and set $x_v$ to the median value of these $K$ nodes.
    }
    Output $x_v$
\end{algorithm}

We highlight the trade-off between the tournament-style aggregation in the first phase and the sequential collection of values for the majority vote in the second phase. Though the former is more efficient, it also gives the adversary more (exponentially more in fact) influence over the values.

There are a variety of directions for follow-up work. In terms of other aggregation tasks, beyond computation of median, other quantiles, and mean, the main open problem which may robustly be solved in our setting is the computation of the \emph{mode} (since there is no hope for computing the minimum or maximum value in the presence of adversaries). 
We also think it would be interesting to investigate the model in the asynchronous setting or relax the assumption that communications occur on an underlying complete graph.

\section{Related Work}\label{sec:rel_work}

The Gossip Protocol is also known as the Epidemic Protocol, as these algorithms were first developed to mimic the spread of epidemics \cite{10.1145/41840.41841}. One of the first gossip algorithms studied was rumor spreading or randomized broadcast \cite{892324, 652705be-95a4-3886-b436-08eb167091e1}, where in \cite{892324} the authors give an algorithm for rumor spreading in $O(\log n)$ rounds and $O(n \log \log n)$ messages. 

In \cite{kempe2003gossip}, the authors study a gossip protocol to compute aggregates such as sums and counts in $O(\log n)$ rounds. Furthermore, they also develop algorithms for random sampling and quantile computation (also known as randomized selection), the latter of which can be computed in $O(\log^2 n)$ rounds. The work in \cite{haeupler2018optimal} improves that and gives an algorithm to compute exact quantiles in $O(\log n)$ rounds and approximate quantiles in $O\big(\log \log n + \log\frac{1}{\varepsilon} \big)$ rounds. The problem of computing quantiles has also been studied in the centralized setting \cite{BLUM1973448} and the distributed setting \cite{10.1145/1248377.1248401}. Other problems such as computing the mode \cite{10.1145/2933057.2933097} (also known as plurality consensus), have also been studied in the gossip setting. 

In both \cite{10.1007/978-3-642-14162-1_10} and \cite{892324}, the authors investigate a gossip-based rumor spreading in the presence of adaptive failures, but the adversary here only has the power to crash certain messages and cannot send altered messages (which is much more powerful). To the best of our knowledge, this is the first time a gossip model has been investigated for general aggregation problems in the presence of adversarial nodes. We note that for the special case of median computation, restricting the adversarial influence to $\beta \le \frac{1}{n^{1/2}}$, the authors of~\cite{doerr2011stabilizing} derive a runtime bound of $O(\log n \log \log n)$ for the computation of an approximate median from their approach to the stabilizing consensus problem. In this special case, at the cost of an additional factor of $O(\log \log n)$ compared to our runtime, they get an approximation factor of $O\big(\frac{(\log n)^{1/2}}{n^{1/2}}\big)$, thus yielding a tighter approximation than provided by our Theorem~\ref{thm:med_find}, which stipulates that $\varepsilon=\Omega(\frac{1}{n^{0.0019}})$. We note that we did not try to optimize the exponent in this lower bound on $\varepsilon$. We note that Byzantine-robust gossip algorithms have been widely studied in the context of decentralized machine learning ~\cite{he2023byzantinerobustdecentralizedlearningclippedgossip}, ~\cite{gaucher2025byzantinerobustgossipinsightsdual}, ~\cite{gaucher2025unifiedbreakdownanalysisbyzantine}.

In the deterministic setting, the approximate median computation in the presence of byzantine nodes has been studied~\cite{stolz2015byzantine}. The algorithm presented there was shown to produce an approximation factor of at most $\frac{\beta}{2}$ and runs in time $O(\beta n)$, as long as $\beta<\frac{1}{3}$. The latter requirement on the fraction of adversarial nodes was proven to be tight. This was later generalized to approximate the $k$ -th largest values in the presence of Byzantine nodes~\cite{melnyk2018byzantine}. 

\section{Approximate Median}\label{sec:approx_med}

In this section, we sketch the ideas behind our median approximation algorithm and the proof of its correctness and runtime guarantees stated in Theorem \ref{thm:med_find}, whose proof can be found in Appendix \ref{sec:approx_med_appendix}.

\medfind*

The algorithm, which draws on ideas from \cite{haeupler2018optimal}, is described in Algorithm \ref{alg:3tourn}, and proceeds in two phases. In the first phase, every iteration consists of $3$ rounds. More concretely, in every iteration, each node pulls the value of three random nodes and updates its value to the median of these three values. In the second phase, every node obtains $K$ many independent samples and outputs the median of these $K$ values.  

We consider the sets of nodes whose quantiles lie in $\left[ 0, \frac{1}{2} - \varepsilon\right), \left[ \frac{1}{2} - \varepsilon, \frac{1}{2} + \varepsilon \right]$ and $\left(\frac{1}{2} + \varepsilon, 1 \right]$ at the end of iteration $i$ in line $7$ of Algorithm \ref{alg:3tourn}, and call these sets $L_i,M_i,H_i$ respectively. Additionally, we also define $l_i$ to be such that $l_0 \coloneqq \frac{1}{2} - \varepsilon$ and $l_{i + 1} \coloneqq \overline{B}_{3, 1}(l_i + \beta) = 3(l_i + \beta)^2 - 2(l_i + \beta)^3$.\footnote{$\overline{B}_{3, 1}(l_i + \beta)$ is defined as $1-B_{3, 1}(l_i + \beta)$ with $B_{n,k}(p)$ the Cumulative Distribution Function of the binomial distribution with parameters $n,k,p$.  For details, see the Appendix.} These values capture the expected values of $\frac{|L_i|}{n}$ as we will show later. We remark that $H_i$ and $L_i$ behave symmetrically, hence without loss of generality, we focus on bounding $|L_i|$ in the following proof. In the end, we use this symmetry to infer matching bounds for $|H_i|$.

\textbf{Proof Sketch.} We first show that, at the end of the first phase, that is, after $t$ rounds (line $8$ of the pseudocode of Algorithm~\ref{alg:3tourn}), $l_t$ is small enough (smaller than $\gamma'$ as defined in line 2 of Algorithm \ref{alg:3tourn}). The proof is a straightforward induction based on the definition of $l_i$, but we utilize tailored bounds on the expressions (involving binomial coefficients) that arise from the recursive definition of $l_i$. 

Next, we show that $\ex{\frac{|L_{i + 1}|}{n}\ \mid\ L_i} \leq \overline{B}_{3, 1}\left(\frac{|L_i|}{n} + \beta \right)$. This follows from an analysis of the main loop of the algorithm. Then, using Chernoff's bounds, we show that $\frac{|L_{i + 1}|}{n}$ is concentrated around its expectation. 

Notice that $l_i$ is defined recursively in such a way as to mirror the evolution of $\frac{|L_{i}|}{n}$ in a single step. In fact, we make this connection more explicit by showing that for all $0 \leq i \leq t$, $|L_i|/n$ is approximately equal to $l_i$ in expectation and with high probability. Thus, we use the bound on $l_t$ to conclude that after the first phase, at most a $\gamma'$ fraction of the nodes have a wrong answer. 

Finally, the second phase amplifies the number of correct nodes in each step and precisely runs the number of steps needed to reduce this fraction of bad nodes from $\gamma'$ to $\gamma$.

\section{Shifting Quantiles}\label{sec:shift_quant}

In this section, we give some intuition underlying our quantile shifting algorithm and the proof of its correctness and runtime guarantees given in Theorem \ref{thm:shift_qnt}, whose proof can be found in Appendix \ref{sec:shift_quant_appendix}.

\shiftqnt*

Note that we can assume $\phi \leq 1/2$ without loss of generality, as the algorithm and analysis are exactly symmetric for the other case.

The pseudocode for our algorithm is described in Algorithm \ref{alg:2tourn}. We consider the sets of nodes whose quantiles lie in $\left[ 0, \phi - \varepsilon\right), \left[ \phi - \varepsilon, \phi + \varepsilon \right]$ and $\left(\phi + \varepsilon, 1 \right]$ at the end of iteration $i$ of the while loop and call these sets $L_i,M_i,H_i$ respectively. We want to show that $|L_i|$ and $|H_i|$ decrease rapidly as the algorithm progresses.

The values $h_i'$ in Algorithm \ref{alg:2tourn} determine when the algorithm terminates, and will also be helpful for the analysis. We define another quantity recursively as 
\[h_0 \coloneqq h_0', \qquad h_{i+1} \coloneqq (h_i+\beta)^2,\]
which we will use for the analysis as well. As will be shown later, the values $h_i, h_i'$ capture the expected values of $\frac{|H_i|}{n}$.

\textbf{Proof Sketch.} Let $t$ be the total number of iterations of the while-loop executed by Algorithm \ref{alg:2tourn}. We first show that $t$ is bounded by $\mathcal{O} \left( \log \frac{1}{\varepsilon}\right)$ as required. Notice that $h_i, h_i'$ are defined so that they start out the same ($h_0 = h_0'$) and drift apart very slowly. We make this explicit by showing that after $t$ steps, they are within about a $(1 + \varepsilon)$ factor of each other. Next, we show that $\left(\frac{|H_i|}{n} - \beta \right)^2 \leq \ex{\frac{|H_{i + 1}|}{n}\ \mid\ H_i} \leq \left(\frac{|H_i|}{n} + \beta \right)^2$. This follows from an analysis of the main loop of the algorithm. 

As in the median find algorithm, notice that the recursive definition of $h_i', h_i$ mirrors the evolution of $\frac{|H_i|}{n}$. In fact, we make this explicit by giving an upper and lower bound of $\frac{|H_i|}{n}$ in terms of $h_i, h_i'$, respectively. This is the most technically involved part of the proof, utilizing tedious but straightforward bounds that we tailored for this analysis. Consequently, from the termination condition $h_t' \leq T \coloneqq \frac{1}{2} - \frac{21 \varepsilon}{16}$, we deduce that $\frac{|H_t|}{n}$ is around $\frac{1}{2} - \varepsilon$.

The final part of the proof involves showing that $|M_i|$ does not change significantly during the entire algorithm and stays around its initial value of $2 \varepsilon$. This follows from an analysis of the expected value of $|M_{i + 1}|$ given $M_i$ and then using Chernoff's bounds. This would imply that $\frac{|L_t|}{n}$ is also around $\frac{1}{2} - \varepsilon$, which concludes our proof.

\section{Approximate Mean}\label{sec:approx_mean}

In this section, we give a proof sketch for the runtime and correctness guarantees of our adversarially robust approximate mean algorithm stated in Theorem~\ref{thm:mean}. The complete proof can be found in Appendix \ref{sec:approx_mean_appendix}.

\mean*

The pseudocode is described in Algorithm \ref{alg:pullavg}. In the first phase of the algorithm, in each round, every node $u$ samples two nodes and updates its value to their arithmetic mean. Samples of node value that exceed the acceptable value interval $[0,M]$ are cut to the interval limits. In the second phase, the algorithm globally samples a set of nodes uniformly at random and outputs the median value of these nodes. Let $x_v(t)$ denote the value $x_v$ stored in the node $v$ at the end of iteration $t$. Additionally, let us denote by $V$ the set of all nodes. 

To help analyze the progress of the algorithm, we define the potential function.
\[ \Phi(t) \coloneqq \sum_{u \neq v \in V} (x_u(t) - x_v(t))^2,\]
where the sum covers all subsets of $V$ of size $2$ (that is, we do not double-count the pairs).
Notice that this potential captures the variation in the values stored in the nodes. As the algorithm progresses, we expect the potential to decrease to a small value as the values stored in the nodes converge to the average.
We also define the quantity $\psi(t) \coloneqq \sum_{u \in V} x_u(t)$ as the sum of the node values. Notice that $\frac{\psi(0)}{n}$ is the true average that we want to approximate. 

\textbf{Proof Sketch.} Intuitively, the $2$-tournament algorithm must decrease the variation of the node values in each step as each node updates its value to the average of two other node values. Through a more precise analysis, we show that $\Phi(t)$ approximately halves in expectation in each step. Thus, by using Azuma's inequality, we can obtain concentration on this expectation and show that at the end of the $T$ steps it has been reduced to less than about a $\beta$ fraction of its original value.

On the other hand, in each step, the influence of the adversary is limited to a $\beta$ fraction of the nodes. This suffices to limit the change in $\psi(t)$ in a single step to about $\beta n M$ in expectation. Using Azuma's inequality again, we can bound the total change in $\psi(t)$ after $T$ steps by about $\varepsilon n M$.

If the average of the values does not change much but the variance decreases substantially, we can conclude that most of the values must be close to the average. This is exactly what we make next explicit, showing that the number of nodes that can have a wrong answer at the end of the first phase is bounded by a polynomial in $\beta$.

Finally, as in the median find algorithm, the second phase amplifies the number of correct nodes in each step and precisely runs the number of steps needed to reduce this fraction of bad nodes to $\gamma$.

\section{Lower Bounds on Round Complexity}\label{sec:low_bound}

In this section, we prove Theorem~\ref{thm:low_bound}, which yields a general tool for calculating the lower bounds for the round complexity of gossip algorithms in the presence of adversaries. In essence, the theorem quantifies a lower bound on the number of rounds needed for all but a $\gamma$ fraction of the nodes to produce a correct answer, in the presence of $\beta n$ adversaries. Note that a similar lower bound was proven for the non-adversarial setting in~\cite{haeupler2018optimal}, but its proof does not extend readily to our case because our parametrized notion of correctness allows a $\gamma$ fraction of nodes to be incorrect in addition to the $ \beta n$ adversarial nodes. However, what remains true is that even with the slack provided by $\gamma$, a correct algorithm cannot terminate in fewer than $\Omega\big(\frac{\log(1/\gamma)}{\log(1/\beta)}\big)$ rounds, since up to this point, more than a $\gamma$ fraction of nodes will have communicated only with bad nodes. Therefore, they will have received no accurate information whatsoever and could not possibly output the correct result. This is formalized in Theorem \ref{thm:low_bound}.

\lowbound*

\begin{proof}
    First, suppose that $2\gamma > \beta$. Then, notice that $\frac{\log(1/2\gamma)}{\log(1/\beta)} < 1$ and hence the claim holds trivially. Thus, we can now assume that $2 \gamma \leq \beta$.

    Consider a single good node $v$. In one round, the probability of communicating with an adversarially affected node is $\beta$. Thus, after $t < \frac{\log (1/2\gamma)}{\log(1/\beta)}$ rounds, the probability that it has communicated only with adversarial nodes is at least $\left(\beta \right)^t > 2\gamma$. Thus, in expectation after $t$ rounds, there is at least a $2 \gamma$ fraction of such nodes. 

    Now, notice that the communication partner chosen by an arbitrary node in an arbitrary round is completely independent of what is chosen by any other node in any round. Thus, by the Chernoff bounds, the probability that there is at least a $\gamma$ fraction of the nodes that have only communicated with bad nodes during the first $t < \frac{\log (1/2\gamma)}{\log(1/\beta)}$ rounds is at least
    \[1 - \exp \left(- \frac{2\gamma n}{2^2 \cdot 2} \right) \geq 1 - e^{-1/8} > \frac{1}{10},\]
    where we assumed without loss of generality that $\gamma \geq \frac{1}{2n}$.
\end{proof}

Notice that we immediately get a $\Omega \big( \frac{\log(1/\gamma)}{\log(1/\beta)} \big)$ lower bound for the tasks that we study in this paper. This is because any node that has only communicated with adversarial nodes cannot reliably output a correct answer.

Similarly, one can adapt the proof of \cite[Theorem 1.3]{haeupler2018optimal} to show a $\Omega \left( \log \log \frac{1}{\gamma} + \log \frac{1}{\varepsilon} \right)$ lower bound for these problems. We formalize this statement in Proposition \ref{prop:lower-bound} and give its proof in Appendix \ref{sec:low_bound_appendix}.

\begin{restatable}{proposition}{lowerboundprop}\label{prop:lower-bound}
    For any $\varepsilon \in (\frac{10 \log n}{n}, \frac{1}{6})$ and any $\gamma \in (\frac{40e \log n}{n}, \frac{1}{2})$, any gossip algorithm (even with unlimited message size) that uses less than $\log_2 \log_{4e} \frac{1}{\gamma}+\log_4 \frac{1}{6\varepsilon}$ rounds fails to solve either of the $\varepsilon$-approximate median or (bounded) mean problem with probability at least $1/3$. 
\end{restatable}

Together, Theorem \ref{thm:low_bound} and Proposition \ref{prop:lower-bound} show that our algorithm for the approximate median is optimal. Moreover, our algorithm for the approximate mean is also almost optimal with only an extra $O (\log \frac{1}{\gamma + \beta})$ factor and is optimal whenever $\varepsilon$ is polynomially smaller than $\gamma$ or $\beta$.

\bibliographystyle{alpha}
\bibliography{refs}

\appendix

\section{Tools and Bounds}

In this section, we collect useful tools and bounds that we use for our analysis throughout the paper. We begin with a statement of the standard Chernoff bounds.

\begin{lemma}(Chernoff Bounds)
    \label{lem:chernoff}
    Suppose that $X_1, X_2, \ldots, X_n$ are independent indicator random variables, and let $X = \sum_{i = 1}^n X_i$ be their sum. Then, for $\delta > 0$,
    \[ \pr{X \geq (1 + \delta)\ex{X}} \leq \left( \frac{e^{\delta}}{(1 + \delta)^{(1 + \delta)}}\right)^{\ex{X}},\]
    \[ \pr{X \leq (1 - \delta)\ex{X}} \leq \left( \frac{e^{\delta}}{(1 - \delta)^{(1 - \delta)}}\right)^{\ex{X}}.\]
    The following simpler statements hold for $0 < \delta < 1$
    \[ \pr{X \geq (1 + \delta)\ex{X}} \leq e^{-\delta^2 \ex{X}/ 3}, \]
    \[ \pr{X \leq (1 - \delta)\ex{X}} \leq e^{-\delta^2 \ex{X}/ 2}. \]
\end{lemma}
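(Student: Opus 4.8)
The plan is to apply the standard exponential moment method (Chernoff's technique). For the upper tail, I would fix $s>0$ and bound, via Markov's inequality applied to the nonnegative variable $e^{sX}$,
\[\pr{X \geq (1+\delta)\ex{X}} = \pr{e^{sX} \geq e^{s(1+\delta)\ex{X}}} \leq e^{-s(1+\delta)\ex{X}}\,\ex{e^{sX}}.\]
Independence of the $X_i$ factorizes the moment generating function, $\ex{e^{sX}} = \prod_{i=1}^n \ex{e^{sX_i}}$, and for an indicator with $p_i := \pr{X_i=1}$ one has $\ex{e^{sX_i}} = 1 + p_i(e^s-1) \leq \exp\!\left(p_i(e^s-1)\right)$ by $1+x \leq e^x$; multiplying over $i$ gives $\ex{e^{sX}} \leq \exp\!\left((e^s-1)\ex{X}\right)$. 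Substituting and optimizing over $s$ (the minimum is at $s=\ln(1+\delta)$) yields the first claimed bound $\left(e^{\delta}/(1+\delta)^{1+\delta}\right)^{\ex{X}}$. For the lower tail I would repeat the argument with $e^{-sX}$, $s>0$, using $\ex{e^{-sX}} \leq \exp\!\left((e^{-s}-1)\ex{X}\right)$ and the optimal choice $s = -\ln(1-\delta)$, to obtain $\left(e^{-\delta}/(1-\delta)^{1-\delta}\right)^{\ex{X}}$.

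It then remains to derive the two simplified bounds on $0<\delta<1$ from these, which I would reduce to analytic inequalities on the exponents. The upper-tail simplification $e^{\delta}/(1+\delta)^{1+\delta} \leq e^{-\delta^2/3}$ is equivalent to $f(\delta) := (1+\delta)\ln(1+\delta) - \delta - \delta^2/3 \geq 0$; here $f(0)=0$ and $f'(\delta) = \ln(1+\delta) - \tfrac{2}{3}\delta$ satisfies $f'(0)=0$, with $f''(\delta) = \tfrac{1}{1+\delta} - \tfrac{2}{3}$ positive on $[0,\tfrac12]$ and negative on $[\tfrac12,1)$, so $f'$ is unimodal on $[0,1)$ with boundary values $f'(0)=0$ and $\lim_{\delta\to1^-}f'(\delta)=\ln 2 - \tfrac23 > 0$, hence $f' \geq 0$ and $f\geq 0$ throughout. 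The lower-tail simplification $e^{-\delta}/(1-\delta)^{1-\delta} \leq e^{-\delta^2/2}$ is equivalent to $(1-\delta)\ln(1-\delta) \geq -\delta + \delta^2/2$; expanding $\ln(1-\delta) = -\sum_{k\geq1}\delta^k/k$ and regrouping gives $(1-\delta)\ln(1-\delta) = -\delta + \sum_{k\geq2}\frac{\delta^k}{k(k-1)} = -\delta + \frac{\delta^2}{2} + \sum_{k\geq3}\frac{\delta^k}{k(k-1)} \geq -\delta + \frac{\delta^2}{2}$, since every remaining term is nonnegative.

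I expect the only mild obstacle to be the upper-tail analytic inequality: unlike the lower-tail case, the series for $\ln(1+\delta)$ is alternating, so one cannot simply discard tail terms, and the constant $1/3$ is not forced by a single global convexity argument — hence the need to split $(0,1)$ into $(0,\tfrac12]$ and $[\tfrac12,1)$ as above. (Alternatively, since the lemma is stated as ``standard,'' one can simply cite a textbook such as Mitzenmacher--Upfal or Motwani--Raghavan.) Everything else is routine bookkeeping with the moment generating function.
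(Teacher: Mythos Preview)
The paper states this lemma without proof, treating it as a standard textbook result; your proposal supplies exactly the standard exponential-moment derivation one would find in, e.g., Mitzenmacher--Upfal, and is correct. (In fact you derive the sharper lower-tail form $\bigl(e^{-\delta}/(1-\delta)^{1-\delta}\bigr)^{\ex{X}}$, which of course implies the $e^{\delta}$ version printed in the statement.)
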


We also state the following extension of the Chernoff bounds for larger deviations, which we get by using \cite[Lemma A.2]{haeupler2018optimal}.

\begin{lemma}
    \label{lem:chernplus}
    Suppose that $X_1, X_2, \ldots, X_n$ are independent indicator random variables, and let $X = \sum_{i = 1}^n X_i$ be their sum. Then, for some $M \geq \ex{X}$ and $0 < \delta < 1$,
    \[ \pr{X > \ex{X} + \delta M} \leq \left( \frac{e^{\delta}}{(1 + \delta)^{(1 + \delta)}}\right)^{M} \leq e^{-\delta^2 M / 3}.\]
\end{lemma}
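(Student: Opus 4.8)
The plan is to run the standard exponential-moment (Chernoff) argument, but to carry the parameter $M$ through the optimization so that it lands in the exponent in place of $\ex{X}$. Write $\mu \coloneqq \ex{X} = \sum_i p_i$ with $p_i \coloneqq \ex{X_i}$, and recall the elementary estimate $\ex{e^{sX_i}} = 1 + p_i(e^s - 1) \le e^{p_i(e^s - 1)}$, valid for every $s \ge 0$. By independence and Markov's inequality applied to $e^{sX}$, for any $s > 0$,
\[
\pr{X > \mu + \delta M} \le e^{-s(\mu + \delta M)} \prod_i \ex{e^{sX_i}} \le e^{-s(\mu + \delta M)}\, e^{\mu(e^s - 1)}.
\]
Choosing $s = \ln(1 + \delta)$, which makes $e^s - 1 = \delta$, the right-hand side becomes $e^{\mu\delta}\,(1+\delta)^{-(\mu + \delta M)}$.

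The crux is then to show this is at most $\left(\frac{e^\delta}{(1+\delta)^{1+\delta}}\right)^{\!M} = e^{\delta M}\,(1+\delta)^{-(1+\delta)M}$. Taking logarithms of both sides and collecting terms, the difference (log of our bound minus log of the target) simplifies exactly to $(\mu - M)\bigl(\delta - \ln(1+\delta)\bigr)$. This is $\le 0$ because $M \ge \mu$ by hypothesis while $\delta - \ln(1+\delta) > 0$ for all $\delta > 0$. This gives the first inequality of the lemma. For the second inequality, $\left(\frac{e^\delta}{(1+\delta)^{1+\delta}}\right)^{\!M} \le e^{-\delta^2 M/3}$, I would invoke the same scalar fact underlying the simpler form of the Chernoff bound in Lemma \ref{lem:chernoff}, namely $\frac{e^\delta}{(1+\delta)^{1+\delta}} \le e^{-\delta^2/3}$ for $0 < \delta < 1$ (equivalently $(1+\delta)\ln(1+\delta) \ge \delta + \delta^2/3$ on that interval), and raise both sides to the $M$-th power. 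Alternatively one can simply cite \cite[Lemma A.2]{haeupler2018optimal} for this last step, as indicated in the text.

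The only delicate point is the bookkeeping in the logarithmic comparison — verifying that the $\mu$- and $M$-weighted exponents recombine into precisely $(\mu - M)(\delta - \ln(1+\delta))$ — but this is a one-line rearrangement, and the proof otherwise rests only on the two standard scalar inequalities $\delta \ge \ln(1+\delta)$ and $\frac{e^\delta}{(1+\delta)^{1+\delta}} \le e^{-\delta^2/3}$. I do not anticipate any genuine obstacle; the lemma is essentially a refinement of the textbook argument that keeps the slack created by taking $M$ larger than the true mean.
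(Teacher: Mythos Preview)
Your proof is correct. The paper's own proof is a one-line appeal to the standard Chernoff bound (Lemma~\ref{lem:chernoff}) together with \cite[Lemma~A.2]{haeupler2018optimal}; you simply unpack the same exponential-moment argument explicitly, and your key logarithmic comparison $(\mu - M)\bigl(\delta - \ln(1+\delta)\bigr) \le 0$ is precisely the content hidden behind those citations.
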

\begin{proof}
    Lemma \ref{lem:chernoff} and \cite[Lemma A.2]{haeupler2018optimal} immediately imply the desired claim.
\end{proof}

Next, we state Azuma's inequality.

\begin{lemma}(Azuma's Inequality)
    \label{lem:azuma}
    Let $\Omega$ be the product of $m$ probability spaces $\Omega_1, \ldots, \Omega_m$, and let $Z: \Omega\to\R$ be a random variable. Suppose there is $c>0$ such that for all $\omega, \omega' \in \Omega$ that differ in exactly one component $\Omega_i$, we have $|Z(\omega)-Z(\omega')| \le c$. Then for all $t > 0$
    \[ \pr{\abs{Z - \ex{Z}} \geq t} \leq 2 e^{-\frac{t^2}{2mc^2}}. \]
    
\end{lemma}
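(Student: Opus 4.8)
The plan is to prove this by the standard Doob martingale (Azuma--Hoeffding) argument; the stated inequality is the bounded-differences inequality, and its weaker-than-optimal constant $\frac{1}{2mc^2}$ in the exponent is exactly what the crude Azuma--Hoeffding MGF bound yields. Write $\omega = (\omega_1, \dots, \omega_m)$ for a generic element of $\Omega$ and let $\mathcal{F}_i$ be the $\sigma$-algebra generated by the first $i$ coordinates, with $\mathcal{F}_0$ trivial. Define the \emph{Doob martingale} $Z_i \coloneqq \ex{Z \mid \mathcal{F}_i}$, so $Z_0 = \ex{Z}$ and $Z_m = Z$, and set $D_i \coloneqq Z_i - Z_{i-1}$. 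Then $Z - \ex{Z} = \sum_{i=1}^m D_i$, and it suffices to prove concentration for this sum of martingale differences.

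The first and main step is to show $\abs{D_i} \le c$ almost surely. Fixing the first $i-1$ coordinates, $Z_{i-1}$ is a constant (the average of $Z$ over coordinates $i, \dots, m$), while $Z_i$, viewed as a function of the $i$-th coordinate, is the average of $Z$ over coordinates $i+1, \dots, m$. By the bounded-differences hypothesis, changing only the $i$-th coordinate changes $Z$ pointwise by at most $c$; averaging this inequality over the later coordinates shows the conditional range of $Z_i$ given $\mathcal{F}_{i-1}$ is at most $c$, i.e.\ there is an $\mathcal{F}_{i-1}$-measurable $A_i$ with $A_i \le D_i \le A_i + c$. Since $\ex{D_i \mid \mathcal{F}_{i-1}} = 0$, this forces $A_i \le 0 \le A_i + c$, hence $\abs{D_i} \le c$. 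This coupling/averaging argument is the only genuinely technical point; everything afterward is the textbook Chernoff computation.

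Finally I would bound the moment generating function. Conditioned on $\mathcal{F}_{i-1}$, $D_i$ has mean $0$ and satisfies $\abs{D_i}\le c$, so the (conditional) Hoeffding lemma gives $\ex{e^{\lambda D_i} \mid \mathcal{F}_{i-1}} \le e^{\lambda^2 c^2 / 2}$ for all $\lambda \in \R$. Peeling off the differences one at a time via the tower property,
\[ \ex{e^{\lambda (Z - \ex{Z})}} = \ex{e^{\lambda \sum_{i=1}^m D_i}} \le e^{m \lambda^2 c^2 / 2}. \]
Markov's inequality applied to $e^{\lambda(Z - \ex{Z})}$ with $\lambda > 0$ gives $\pr{Z - \ex{Z} \ge t} \le e^{-\lambda t + m \lambda^2 c^2 / 2}$, and optimizing with $\lambda = t/(mc^2)$ yields $\pr{Z - \ex{Z} \ge t} \le e^{-t^2/(2mc^2)}$. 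Applying the same argument to $-Z$ bounds the lower tail identically, and a union bound over the two events produces the factor $2$, completing the proof. The main obstacle, as noted, is the bounded-difference step for the martingale increments; the rest is routine.
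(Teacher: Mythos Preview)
Your proof is correct; it is the standard Doob-martingale derivation of the bounded-differences (McDiarmid) inequality, and the crude step $|D_i|\le c$ (rather than using only that the conditional range is $c$) is precisely what produces the constant $2mc^2$ appearing in the statement. The paper itself does not prove this lemma at all---it is listed in the appendix as a standard tool and simply quoted without proof---so there is nothing to compare against; your argument supplies exactly the textbook justification the paper omits.
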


We will now prove some useful properties of binomial distributions that we will use later in the analysis of the algorithms. Let $\mathcal{B}(n, p)$ denote the binomial distribution for the parameters $n \geq 1$ and $0 \leq p \leq 1$. Let us denote by $b_{n, k}(p)$ the Probability Distribution Function (PDF) of $\mathcal{B}(n, p)$, i.e.\ $b_{n, k}(p) \coloneqq \pr{\mathcal{B}(n, p) = k}$, and by $B_{n, k}(p)$ the Cumulative Distribution Function (CDF) of $\mathcal{B}(n, p)$, i.e.\ $B_{n, k}(p)\coloneqq\pr{\mathcal{B}(n, p) \leq k}$. We have for $0 \leq k \leq n$,
\[ b_{n, k}(p) = \binom{n}{k} p^{k} (1 - p)^{n - k},\]
\[ B_{n, k}(p) = \sum_{0 \leq l \leq k} b_{n, l}(p).\]
For convenience we also define $\overline{B}_{n, k}(p) \coloneqq 1 - B_{n, k}(p)$. We remark that $\overline{B}_{n, k}(p) = \pr{\mathcal{B}(n, p) > k}$ is monotonically increasing in $p$.

We are interested in the case when $p = \frac{1}{2} - \alpha$ for some small $\alpha > 0$. Our objective is to prove an upper bound on $\overline{B}_{n, k}(p)$ which we will be useful in later analysis.

\begin{lemma}
    \label{lem:scale}
    For any $0 \leq p \leq 1 \leq x$ such that $px \leq 1$, we have
    $b_{n, k}(xp) \leq x^n b_{n, k}(p)$ and $\overline{B}_{n, k}(xp) \leq x^n \overline{B}_{n, k}(p)$.
\end{lemma}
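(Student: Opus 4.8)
The plan is to establish the pointwise (PDF) bound $b_{n,k}(xp) \le x^n b_{n,k}(p)$ first, and then obtain the CDF-complement bound essentially for free by summing. Starting from the closed form $b_{n,k}(xp) = \binom{n}{k}(xp)^k(1-xp)^{n-k}$, the factor $(xp)^k = x^k p^k$ is immediate, so the entire content of the claim is to control $(1-xp)^{n-k}$ against $(1-p)^{n-k}$. I would get this from the one-line inequality $1-xp \le x - xp = x(1-p)$, which is just a restatement of $x \ge 1$; since moreover $1-xp \ge 0$ (because $xp \le 1$) and $n-k$ is a nonnegative integer, raising both nonnegative quantities to the power $n-k$ preserves the inequality, i.e.\ $(1-xp)^{n-k} \le x^{n-k}(1-p)^{n-k}$. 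Collecting the powers of $x$ then gives $b_{n,k}(xp) \le \binom{n}{k} x^k p^k \cdot x^{n-k}(1-p)^{n-k} = x^n \binom{n}{k} p^k (1-p)^{n-k} = x^n b_{n,k}(p)$.

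For the second claim, I would recall that $\overline{B}_{n,k}(q) = \pr{\mathcal{B}(n,q) > k} = \sum_{l=k+1}^{n} b_{n,l}(q)$ and apply the PDF bound termwise, using that it was proved for an arbitrary lower index: $\overline{B}_{n,k}(xp) = \sum_{l=k+1}^n b_{n,l}(xp) \le \sum_{l=k+1}^n x^n b_{n,l}(p) = x^n \overline{B}_{n,k}(p)$.

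I do not anticipate a genuine obstacle here; the only point requiring a little care is the monotonicity of $t \mapsto t^{n-k}$ invoked when exponentiating, which is why one should explicitly record that $1-xp$ and $x(1-p)$ are both nonnegative and $n-k \ge 0$ — and the three hypotheses $0 \le p \le 1$, $x \ge 1$, $xp \le 1$ are exactly what supplies this. The boundary cases ($p = 0$ or $k = n$) need no separate treatment, as the same chain of inequalities degenerates to $1 \le x^{n-k}$, which holds since $x \ge 1$.
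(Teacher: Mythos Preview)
Your proof is correct and follows essentially the same approach as the paper: both use the inequality $1-xp \le x(1-p)$ (equivalent to $x \ge 1$) to bound $(1-xp)^{n-k}$ by $x^{n-k}(1-p)^{n-k}$, combine with $(xp)^k = x^k p^k$ to get the PDF bound, and then sum over $l>k$ for $\overline{B}_{n,k}$. If anything, you are slightly more explicit than the paper in noting where the hypotheses $xp \le 1$ and $n-k \ge 0$ are used to justify the exponentiation step.
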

\begin{proof}
    We have
    \[ b_{n, k}(xp) = \binom{n}{k} (xp)^{k} (1 - xp)^{n - k} \leq \binom{n}{k} (xp)^{k} (x(1 - p))^{n - k} = x^n b_{n, k}(p).\]
    Thus,
    \[ \overline{B}_{n, k}(xp) = \sum_{k < l \leq n} b_{n, l}(xp) \leq \sum_{k < l \leq n} x^n b_{n, l}(p) = x^n \overline{B}_{n, k}(p).\]
\end{proof}

\begin{lemma}
    \label{lem:alpha}
    Suppose that $0 \leq \alpha \leq \frac{1}{3}$. Then for any integer $n \geq 2$ we have $(1 - 2 \alpha)^n \leq 1 - \frac{8}{3} \cdot \alpha$.
\end{lemma}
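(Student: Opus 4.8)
The plan is to reduce the general statement to the case $n=2$ by a monotonicity argument, and then verify the base case by an elementary quadratic computation. First I would observe that since $0 \le \alpha \le \frac13$ we have $0 \le 1-2\alpha \le 1$, so the map $n \mapsto (1-2\alpha)^n$ is non-increasing for $n \ge 1$. Hence for every integer $n \ge 2$ it suffices to bound $(1-2\alpha)^2$, namely
\[ (1-2\alpha)^n \le (1-2\alpha)^2. \]

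Next I would expand the square and compare with the target: $(1-2\alpha)^2 = 1 - 4\alpha + 4\alpha^2$, and the desired inequality $(1-2\alpha)^2 \le 1 - \frac83\alpha$ is equivalent to $4\alpha^2 - 4\alpha + \frac83\alpha \le 0$, i.e.\ to $4\alpha\left(\alpha - \frac13\right) \le 0$. Since $\alpha \ge 0$ and $\alpha - \frac13 \le 0$ on the given range, this product is nonpositive, which establishes the base case and hence, combined with the monotonicity step, the lemma.

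There is essentially no serious obstacle here; the only point worth flagging is that the bound is tight at $\alpha = \frac13$ and $n=2$ (both sides equal $\frac19$), so the constants $2\alpha$ and $\frac83$ cannot be improved in a way that keeps the same hypotheses, and the hypothesis $\alpha \le \frac13$ is exactly what makes the quadratic factor have the right sign. I would also note in passing that the argument uses $n \ge 2$ only through the reduction step (for $n=1$ the claim $1-2\alpha \le 1 - \frac83\alpha$ would fail for $\alpha > 0$), so stating it for $n \ge 2$ is necessary.
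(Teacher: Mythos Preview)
Your proof is correct and follows essentially the same strategy as the paper: reduce to the extremal case and read off the constant $\tfrac{8}{3}$ there. The paper packages the reduction slightly differently by writing $(1-2\alpha)^n = 1 - f(n,\alpha)\,\alpha$ with $f(n,\alpha)=2\sum_{i=0}^{n-1}(1-2\alpha)^i$ and using monotonicity of $f$ in both $n$ and $\alpha$ to get $f(n,\alpha)\ge f(2,\tfrac13)=\tfrac{8}{3}$, whereas you first fix $n=2$ by monotonicity in $n$ and then handle all $\alpha\in[0,\tfrac13]$ via the factorization $4\alpha(\alpha-\tfrac13)\le 0$; both are equivalent one-line reductions.
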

\begin{proof}
    The claim is clearly true for $\alpha = 0$. For $\alpha > 0$, note that
    \[ (1 - 2\alpha)^n = 1 - f(n, \alpha) \cdot \alpha\]
    where
    \[ f(n, \alpha) \coloneqq \frac{1 - (1 - 2\alpha)^n}{\alpha} = 2 \cdot \sum_{i = 0}^{n - 1} (1 - 2\alpha)^i \]
    is a function that is monotonically increasing in $n$, and monotonically decreasing in $\alpha$.
    Thus
    \[ (1 - 2\alpha)^n \leq 1 - f \left(2, \frac{1}{3} \right) \cdot \alpha = 1 - \frac{8}{3} \cdot \alpha.\]
\end{proof}

Next, we prove another upper bound for exponentials.

\begin{lemma}
    \label{lem:beta}
    Suppose that $0 \leq x \leq M$. Then we have
    \[ e^{\alpha x} \leq 1 + \frac{e^{M \alpha} - 1}{M} x .\]
\end{lemma}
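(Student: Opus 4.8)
The plan is to deduce this directly from convexity of the exponential. For any fixed real $\alpha$, the function $g(t) := e^{\alpha t}$ has second derivative $g''(t) = \alpha^2 e^{\alpha t} \ge 0$ and is therefore convex on $[0, M]$; here one may assume $M > 0$, since otherwise the right-hand side is undefined and the statement is vacuous.

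Given $x \in [0, M]$, I would write $x$ as the convex combination $x = \left(1 - \tfrac{x}{M}\right)\cdot 0 + \tfrac{x}{M}\cdot M$ and apply the two-point definition of convexity (Jensen's inequality) to $g$:
\[ e^{\alpha x} \le \left(1 - \frac{x}{M}\right) g(0) + \frac{x}{M}\, g(M) = \left(1 - \frac{x}{M}\right) + \frac{x}{M}\, e^{\alpha M}. \]
Rearranging the right-hand side yields exactly $1 + \frac{e^{\alpha M} - 1}{M}\, x$, which is the claimed bound. Conceptually this is nothing more than the statement that the graph of a convex function on $[0, M]$ lies on or below the chord joining its endpoints $(0, 1)$ and $(M, e^{\alpha M})$, and plugging $x\in\{0,M\}$ into the asserted inequality gives equality, confirming the chord interpretation.

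There is essentially no obstacle in this argument; the only points worth noting are that it holds for every real $\alpha$ (positivity of $\alpha$ is never used, since $t \mapsto e^{\alpha t}$ is convex regardless of the sign of $\alpha$), and that the sole degenerate case $M = 0$ is excluded by the implicit requirement that the right-hand side be well-defined. If one prefers a self-contained route avoiding any appeal to convexity as a black box, an equivalent argument is to set $h(x) := 1 + \frac{e^{\alpha M} - 1}{M}\, x - e^{\alpha x}$, observe that $h(0) = h(M) = 0$ and $h''(x) = -\alpha^2 e^{\alpha x} \le 0$ on $[0,M]$, and conclude that the concave function $h$ is nonnegative throughout $[0, M]$.
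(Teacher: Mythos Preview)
Your proof is correct and is essentially the same argument as the paper's, just phrased in the language of convexity and the chord inequality rather than monotonicity of the secant slope. The paper defines $f(x) = \frac{e^{\alpha x}-1}{x}$, asserts it is increasing in $x$, and concludes $e^{\alpha x} = 1 + f(x)\,x \le 1 + f(M)\,x$; the monotonicity of this secant slope from the point $(0,1)$ is of course equivalent to the convexity you invoke, so the two arguments coincide.
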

\begin{proof}
    The claim is clearly true for $x = 0$. For $x > 0$, note that 
    \[ e^{\alpha x} = 1 + f(x) x\]
    where
    \[ f(x) \coloneqq \frac{e^{\alpha x} - 1}{x} \]
    is monotonically increasing in $x$ in its domain. Thus, we have
    \[ e^{\alpha x} \leq 1 + f(M)x \leq 1 + \frac{e^{M\alpha} - 1}{M} x .\]
\end{proof}

\begin{lemma}
    \label{lem:h_func}
    Consider the functions $f(x) \coloneqq \left( \frac{x}{1 + \beta} \right)^2$ and $g(x) \coloneqq \left( \frac{x + \beta}{1 + \beta} \right)^2$ for some $0 < \beta < 1$. Then, for all $\beta^2 \leq x \leq 1$, we have $f(x) \leq x$ and $g(x) \leq x$.
\end{lemma}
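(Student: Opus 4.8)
The plan is to clear the positive denominator $(1+\beta)^2$ in both inequalities and reduce everything to elementary polynomial inequalities in $x$. For $f$, the inequality $f(x)\le x$ is equivalent to $x^2\le x(1+\beta)^2$; for $x=0$ this is trivial, and for $x>0$ it becomes $x\le(1+\beta)^2$, which holds immediately since $x\le 1\le(1+\beta)^2$. In particular it holds on the whole range $\beta^2\le x\le 1$, with room to spare. So $f$ costs essentially nothing.

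The substance is the bound on $g$. Here $g(x)\le x$ is equivalent to $(x+\beta)^2\le x(1+\beta)^2$. The key step is to expand both sides: the left side is $x^2+2\beta x+\beta^2$ and the right side is $x+2\beta x+\beta^2 x$, so the terms $2\beta x$ cancel and the inequality reduces to $x^2+\beta^2\le x+\beta^2 x$. Rearranging gives $x^2-x+\beta^2-\beta^2 x\le 0$, which factors as $(x-1)(x-\beta^2)\le 0$. This product is nonpositive precisely when $x$ lies between $\beta^2$ and $1$, i.e.\ exactly on the interval $[\beta^2,1]$ in the hypothesis (using $\beta<1$, so $\beta^2<1$ and the interval is nondegenerate). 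That closes the proof.

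I do not expect any real obstacle: the only thing to be careful about is the algebra in the factorization of $(x-1)(x-\beta^2)$ and the observation that the given range of $x$ matches the solution set of that quadratic inequality exactly — which is presumably why the lemma is stated with the hypothesis $\beta^2\le x\le 1$ in the first place.
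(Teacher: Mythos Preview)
Your proposal is correct and is essentially the same argument as the paper's: both reduce each inequality to showing that a quadratic with positive leading coefficient is nonpositive between its two roots $\{0,(1+\beta)^2\}$ for $f$ and $\{\beta^2,1\}$ for $g$. The only cosmetic difference is that you expand and factor explicitly, whereas the paper verifies the roots by direct evaluation of $F(x)=f(x)-x$ and $G(x)=g(x)-x$.
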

\begin{proof}
    First, consider the function $F(x) \coloneqq f(x) - x$. We have $F(0) = f(0) - 0 = 0$, and $F((1 + \beta)^2) = f((1 + \beta)^2) - (1 + \beta)^2 = 0$. Thus, since $F$ is a quadratic in $x$ with a positive leading coefficient, we must have $F(x) \leq 0$ for all $0 \leq x \leq (1 + \beta)^2$. Consequently, $f(x) \leq x$ for all $\beta^2 \leq x \leq 1$.

    Next, consider the function $G(x) \coloneqq g(x) - x$. We have $G(1) = g(1) - 1 = 0$, and $G(\beta^2) = g(\beta^2) - \beta^2 = 0$. Thus, since $G$ is a quadratic in $x$ with a positive leading coefficient, we must have $G(x) \leq 0$ for all $\beta^2 \leq x \leq 1$. Consequently, $g(x) \leq x$ for all $\beta^2 \leq x \leq 1$.
\end{proof}

\begin{theorem}
    \label{thm:cdf}
    Suppose that $k \geq 0$ is an integer and $p = \frac{1}{2} - \alpha$ for some $0 \leq \alpha \leq \frac{1}{3}$. Then, we have 
    \[\overline{B}_{(2k + 1), k}(p) \leq \frac{1}{2} - \frac{13}{12} \cdot \alpha .\]
\end{theorem}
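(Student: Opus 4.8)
Writing $p=\tfrac12-\alpha$, the quantity $\overline B_{2k+1,k}(p)=\Pr[\mathcal B(2k+1,p)\ge k+1]$ is precisely the probability that the median of $2k+1$ independent $\{0,1\}$-variables, each equal to $1$ with probability $p$, equals $1$; denote it $g_k(\alpha)$. The plan is to reduce the claim, for every $k$, to the single case $k=1$, where $\overline B_{3,1}(p)$ is an explicit cubic in $\alpha$ and the inequality becomes elementary. Throughout I would take $k\ge 1$, which is the regime actually used (with $k=1$ in Lemma~\ref{lem:l_t<beta}); note that for $k=0$ one simply has $\overline B_{1,0}(p)=p=\tfrac12-\alpha$.

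The first and main step is a monotonicity statement: \emph{for $p\le\tfrac12$ the sequence $\bigl(\overline B_{2k+1,k}(p)\bigr)_{k\ge1}$ is non-increasing in $k$}. I would prove this by conditioning on the last two of $2k+3$ coins. With $S\sim\mathcal B(2k+1,p)$ and $X_1,X_2$ two further independent $\mathrm{Bern}(p)$ variables, splitting $\overline B_{2k+3,k+1}(p)=\Pr[S+X_1+X_2\ge k+2]$ according to $X_1+X_2\in\{0,1,2\}$ gives
\[
\overline B_{2k+3,k+1}(p)=(1-p)^2\Pr[S\ge k+2]+2p(1-p)\Pr[S\ge k+1]+p^2\Pr[S\ge k].
\]
Expanding $\overline B_{2k+1,k}(p)=\Pr[S\ge k+1]$ the same way via $p^2+2p(1-p)+(1-p)^2=1$ and subtracting, everything cancels except
\[
\overline B_{2k+1,k}(p)-\overline B_{2k+3,k+1}(p)=(1-p)^2\Pr[S=k+1]-p^2\Pr[S=k].
\]
Since $\binom{2k+1}{k}=\binom{2k+1}{k+1}$, the right-hand side equals $\binom{2k+1}{k}\,(p(1-p))^{k+1}(1-2p)$, which is $\ge 0$ because $p\le\tfrac12$. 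Hence $g_k(\alpha)\le g_1(\alpha)$ for all $k\ge1$ and all $\alpha\in[0,\tfrac13]$. The only real work in this step is getting the telescoping identity and the binomial-coefficient bookkeeping exactly right.

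For the base case $k=1$, expand $\overline B_{3,1}(p)=3p^2-2p^3$ and substitute $p=\tfrac12-\alpha$; after simplification $g_1(\alpha)=\tfrac12-\tfrac32\alpha+2\alpha^3$. Thus $g_1(\alpha)\le\tfrac12-\tfrac{13}{12}\alpha$ is equivalent (trivially at $\alpha=0$, and after dividing by $\alpha$ otherwise) to $2\alpha^2\le\tfrac5{12}$, i.e.\ $\alpha\le\sqrt{5/24}$, which holds since $\alpha\le\tfrac13$. Combining with the previous step, $g_k(\alpha)\le g_1(\alpha)\le\tfrac12-\tfrac{13}{12}\alpha$ for all $k\ge1$, as required.

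As a cross-check, and as an alternative to the explicit cubic, one can use the standard identity $\frac{d}{dp}\overline B_{2k+1,k}(p)=(2k+1)\binom{2k}{k}\,(p(1-p))^k$, so that $g_k'(\alpha)=-(2k+1)\binom{2k}{k}\bigl(\tfrac14-\alpha^2\bigr)^k$ is increasing on $[0,\tfrac12]$; hence $g_k$ is convex there and lies below the chord through $(0,\tfrac12)$ and $(\tfrac13,g_k(\tfrac13))$. Since $g_k(\tfrac13)\le g_1(\tfrac13)=\tfrac2{27}\le\tfrac5{36}$ by the monotonicity step, that chord has slope $\le-\tfrac{13}{12}$, giving the bound again. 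I expect the monotonicity identity to be the crux of the argument; once it is in hand, the case $k=1$ is a one-line polynomial inequality.
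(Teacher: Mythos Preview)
Your proof is correct and takes a genuinely different route from the paper's. The paper proceeds term-by-term: it writes $\overline B_{2k+1,k}(p)=\sum_{l=k+1}^{2k+1}\binom{2k+1}{l}p^l(1-p)^{2k+1-l}$, bounds each summand by $\frac{1}{2^{2k+1}}(1-\tfrac83\alpha)$ for $l>k+1$ via an auxiliary lemma (essentially $(1-2\alpha)^n\le 1-\tfrac83\alpha$ for $n\ge 2$), keeps the $l=k+1$ term as $\frac{1}{2^{2k+1}}(1-2\alpha)$, sums using $\sum_{l=k+1}^{2k+1}\binom{2k+1}{l}=2^{2k}$, and finishes by bounding the central ratio $r_k=2^{-(2k+1)}\binom{2k+1}{k+1}\le r_1=\tfrac38$ for $k\ge 1$.

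Your argument instead establishes the clean monotonicity $\overline B_{2k+3,k+1}(p)\le \overline B_{2k+1,k}(p)$ for $p\le\tfrac12$, reducing everything to the explicit cubic at $k=1$. This is more conceptual and arguably more informative: the monotonicity is a statement of independent interest (median-of-more-coins is more concentrated), and your telescoping identity $\overline B_{2k+1,k}(p)-\overline B_{2k+3,k+1}(p)=\binom{2k+1}{k}(p(1-p))^{k+1}(1-2p)$ is exact, whereas the paper's bound loses a little at each term. The paper's approach, on the other hand, avoids any induction on $k$ and works uniformly in one shot. Your convexity cross-check is also valid and provides a nice geometric picture that the paper does not give.

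One remark: you correctly flag that for $k=0$ the claimed inequality fails for $\alpha>0$ (since $\overline B_{1,0}(p)=\tfrac12-\alpha>\tfrac12-\tfrac{13}{12}\alpha$). The paper's own proof also silently restricts to $k\ge 1$ at the step $r_k\le r_1$, so the statement as written should really read $k\ge 1$; the only use in the paper is at $k=1$.
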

\begin{proof}
    For an arbitrary integer $l \in [k + 1, 2k + 1]$, consider the expression $p^l \left( 1 - p\right)^{2k + 1 - l}$. By definition $p = \frac{1}{2} - \alpha$, which reduces the expression to
    \[ \left( \frac{1}{2} - \alpha \right)^l \left( \frac{1}{2} + \alpha \right)^{2k + 1 - l} = \frac{1}{2^{2k + 1}} \cdot \left( 1 - 4\alpha^2 \right)^{2k + 1 - l} \left( 1 - 2\alpha \right)^{2l - 2k - 1}.\]
    Using the fact that $1 - 4\alpha^2 \leq 1$ we can simplify the expression as follows
    \[ p^l \left( 1 - p\right)^{2k + 1 - l} \leq \frac{1}{2^{2k + 1}} \cdot \left( 1 - 2\alpha \right)^{2l - 2k - 1}.\]
    For $l > k + 1$, notice that $2l - 2k - 1 > 1$, and hence we can use Lemma \ref{lem:alpha} to further bound the expression as
    \[ p^l \left( 1 - p\right)^{2k + 1 - l} \leq \frac{1}{2^{2k + 1}} \cdot \left(1 - \frac{8}{3} \cdot \alpha \right) .\]
    Putting it together in the definition of $\overline{B}_{(2k + 1), k}(p)$, we obtain
    \begin{align*}
        \overline{B}_{(2k + 1), k}(p) &\coloneqq \sum_{l = k + 1}^{2k + 1} \binom{2k + 1}{l} p^l \left(1 - p\right)^{2k + 1 - l} \\
        &\leq \frac{1}{2^{2k + 1}} \cdot \left [ \binom{2k + 1}{k + 1}(1 - 2 \alpha) + \sum_{l = k + 2}^{2k + 1} \binom{2k + 1}{l} \left( 1 - \frac{8}{3} \cdot \alpha \right) \right]
    \end{align*}
    We now use a well known identity of the binomial coefficients, $\sum_{l = k + 1}^{2k + 1}\binom{2k + 1}{l} = 2^{2k}$, to deduce that $\sum_{l = k + 2}^{2k + 1}\binom{2k + 1}{l} = 2^{2k} - \binom{2k + 1}{k + 1}$. Substituting in this, we obtain
    \begin{align*}
        \overline{B}_{(2k + 1), k}(p) &\leq \frac{1}{2^{2k + 1}} \cdot \left [ \binom{2k + 1}{k + 1}(1 - 2 \alpha) + \left(2^{2k} - \binom{2k + 1}{k + 1} \right) \left( 1 - \frac{8}{3} \cdot \alpha \right) \right] \\
        &= \left(\frac{1}{2} - \frac{4}{3} \cdot \alpha \right) + \frac{1}{2^{2k + 1}} \cdot \left [ \binom{2k + 1}{k + 1}(1 - 2 \alpha) - \binom{2k + 1}{k + 1} \left( 1 - \frac{8}{3} \cdot \alpha \right) \right] \\
        &= \left(\frac{1}{2} - \frac{4}{3} \cdot \alpha \right) + \frac{1}{2^{2k + 1}} \cdot \binom{2k + 1}{k + 1} \cdot \frac{2}{3} \cdot \alpha .
    \end{align*}
    For all $k \geq 0$, consider the ratio $r_k \coloneqq \frac{1}{2^{2k + 1}}\binom{2k + 1}{k + 1}$. Since we have
    $\frac{r_{k + 1}}{r_k} = \frac{(2k + 3)(2k + 2)}{4(k + 2)(k + 1)} = 1 - \frac{1}{2(k + 2)} < 1$, the ratio $r_k$ must be monotonically decreasing in $k$. Thus, for all $k \geq 1$, we have $r_k \leq r_1 = \frac{3}{8}$. Thus, we have
    \[ \overline{B}_{(2k + 1), k}(p) \leq \left( \frac{1}{2} - \frac{4}{3}  \cdot \alpha \right) + \frac{3}{8} \cdot \frac{2}{3} \cdot \alpha = \frac{1}{2} - \frac{13}{12} \cdot \alpha.\]
\end{proof}

\section{Approximate Median}\label{sec:approx_med_appendix}

In this section we prove the runtime and correctness guarantees for our adversarially-robust median algorithm, as stated in Theorem \ref{thm:med_find}.

\medfind*

Remember the following recursive definition:
\[
l_0 \coloneqq \frac{1}{2} - \varepsilon, \qquad
l_{i + 1} \coloneqq \overline{B}_{3, 1}(l_i + \beta).
\]
We begin with a lemma that bounds $l_t$.

\begin{lemma}\label{lem:l_t<beta}
    For $t = \ceil{\log_{(157/156)} (\frac{1}{3(\varepsilon - \beta)})} + \ceil{\log_2(\log_{9/8}(\frac{1}{\gamma'}))} + 2$, we have $l_t \leq \gamma'$.
\end{lemma}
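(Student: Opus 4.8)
The plan is to analyze the deterministic recursion $l_0 = \tfrac12 - \varepsilon$, $l_{i+1} = \overline{B}_{3,1}(l_i + \beta) = 3(l_i+\beta)^2 - 2(l_i+\beta)^3$ in two regimes and count how many iterations each regime needs. Write $p_i := l_i + \beta$, so the recursion becomes $p_{i+1} = 3p_i^2 - 2p_i^3 + \beta$ starting from $p_0 = \tfrac12 - \varepsilon + \beta = \tfrac12 - (\varepsilon - \beta)$. The key elementary facts about $g(p) = 3p^2 - 2p^3$ are: it is increasing on $[0,1]$, it has $g(\tfrac12) = \tfrac12$, and near $\tfrac12$ it contracts toward... wait, no — $g'(\tfrac12) = 6p - 6p^2\big|_{1/2} = \tfrac32 > 1$, so $\tfrac12$ is a \emph{repelling} fixed point and the map pushes values below $\tfrac12$ downward. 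This is exactly what we want: starting just below $\tfrac12$, the iterates decrease, slowly at first (linear-rate phase) and then doubly-exponentially fast once they are bounded away from $\tfrac12$ (quadratic phase).

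First I would handle the \textbf{slow phase}: while $l_i$ (equivalently $p_i$) is still close to $\tfrac12$, I want to show that the gap $\tfrac12 - l_i$ grows by a factor of at least $157/156$ each step, until it reaches a constant like $\tfrac{1}{8}$ (i.e. $l_i \le \tfrac38$). Concretely, set $a_i := \tfrac12 - p_i = (\varepsilon - \beta) - (i\text{-dependent growth})$; one shows $a_{i+1} \ge \tfrac{157}{156} a_i$ as long as $a_i$ is below some absolute constant. This follows from a Taylor/monotonicity estimate: $\tfrac12 - g(\tfrac12 - a) = \tfrac32 a - \text{(lower order)} \ge \tfrac{157}{156}a$ for small $a$ — and after adding $\beta \le$ (something), the $+\beta$ only helps push $p_{i+1}$ down... actually I need to be careful: $+\beta$ increases $p_{i+1}$, so it works \emph{against} the contraction. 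The point is that $\beta$ is small relative to $\varepsilon - \beta$ (one should check the precise slack the constants $157/156$ vs the true rate $3/2$ afford), so the net growth factor is still at least $157/156$. Since $a_0 = \varepsilon - \beta$, after $\lceil \log_{157/156}(\tfrac{1}{3(\varepsilon-\beta)}) \rceil$ steps we have $\tfrac12 - p_i \ge \tfrac13$, i.e. $p_i \le \tfrac16$, hence $l_i \le \tfrac16$, say. (The exact target constant should be reverse-engineered from what the second phase needs; the "$+2$" additive slack in $t$ absorbs ceiling/constant fudge.)

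Next, the \textbf{fast phase}: once $p_i \le \tfrac16$ (or whatever constant), I claim $p_{i+1} = 3p_i^2 - 2p_i^3 + \beta \le \tfrac{9}{8} p_i^2 \cdot (\text{const})$ — more usefully, that $p_{i+1} \le c\, p_i^2$ for a suitable $c$, or even cleaner, that $\log(1/p_i)$ roughly doubles each step, giving the $\log_2 \log_{9/8}(1/\gamma')$ term. Here I would show $p_i$ stays below the constant and that $3p_i^2 \le p_i/2$ once $p_i$ is small, so $p_{i+1} \le p_i/2 + \beta$; but to get the doubly-exponential rate we want the genuinely quadratic bound $p_{i+1} \le (9/8) p_i^{2}$ whenever $p_i$ is below some threshold and $\beta$ is dominated by $p_i^2$. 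Iterating the quadratic recursion from $p_i \le 1/6$ down to $p_i \le \gamma'$ takes $O(\log \log (1/\gamma'))$ steps, precisely $\lceil \log_2 \log_{9/8}(1/\gamma') \rceil$. Since $l_t = p_t - \beta \le p_t$, this yields $l_t \le \gamma'$. Finally I would assemble: the total iteration count is the sum of the two phase lengths plus the additive constant $2$ to cover roundings and the handoff between phases, matching the stated $t$.

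The main obstacle I anticipate is bookkeeping the additive $\beta$-term throughout both phases and verifying it never spoils the claimed rates — in the slow phase $\beta$ opposes the desired growth of the gap, so one needs $\varepsilon - \beta$ bounded below (guaranteed since $\beta \le \varepsilon/14$, so $\varepsilon - \beta \ge 13\varepsilon/14$) and the slack between the true multiplier $3/2$ and the claimed $157/156$ to swallow it; in the fast phase one needs $\beta \le $ (a constant times) $p_i^2$ to keep the recursion quadratic down to the level $\gamma'$, which is exactly why the algorithm sets $\gamma' \ge \max(\delta, \beta, \dots) \ge \beta$ (so $\gamma' \ge \beta$ ensures $\beta$ never dominates once we stop at level $\gamma'$) — tracking these thresholds carefully and confirming the two ceilings plus $2$ genuinely suffice is the delicate part, though it is ultimately a routine if fiddly computation with the explicit constants.
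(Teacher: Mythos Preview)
Your two-phase plan matches the paper's proof exactly: grow the gap $\tfrac12 - (l_i + \beta)$ by a factor $\ge 157/156$ per step (the paper uses Theorem~\ref{thm:cdf} to get the uniform rate $\tfrac{13}{12}$ and the identity $\tfrac{13}{12} = \tfrac{157}{156} + \tfrac{1}{13}$ together with $\varepsilon - \beta \ge 13\beta$ to absorb the additive $\beta$) until $l_i + \beta \le \tfrac16$, then use quadratic convergence while $l_i \ge 3\beta$ to reach $\gamma'$ in $\lceil \log_2 \log_{9/8}(1/\gamma') \rceil$ further steps, relying on $\gamma' \ge \beta$ at the end. One small correction: your quadratic constant is misplaced---the recursion gives $l_{i+1} \le 3(l_i+\beta)^2 \le \tfrac{16}{3}\, l_i^2$ (not $(9/8)p_i^2$), and the $8/9$ in the statement arises as $\tfrac{16}{3} \cdot l_{t_1} \le \tfrac{16}{3}\cdot\tfrac16$ after iterating; also, the ``$+2$'' is specifically for the two extra steps needed to drop from $l_i \le 3\beta$ down to $l_i \le \beta \le \gamma'$.
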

\begin{proof}
    First, suppose that for some $i \geq 0$, we have $l_{i} + \beta = \frac{1}{2} - \alpha$ where $\alpha$ is such that $0 < \left(\frac{157}{156} \right)^{i}(\varepsilon - \beta) \leq \alpha \leq \frac{1}{3}$. Then, we will show that $l_{i + 1} + \beta = \frac{1}{2} - \alpha'$ where $\alpha'$ is such that $0 < \left(\frac{157}{156} \right)^{i + 1}(\varepsilon - \beta) \leq \alpha'$.
    Notice that under this assumption, we have
    \[l_{i+1} = \overline{B}_{3, 1}(l_{i} + \beta) = \overline{B}_{3, 1} \left(\frac{1}{2} - \alpha \right) .\]
    Now, since $\alpha$ satisfies the conditions required to use Theorem \ref{thm:cdf}, we have
    \[ l_{i+1} \leq \frac{1}{2} - \frac{13}{12} \cdot \alpha \leq \frac{1}{2} - \frac{13}{12} \cdot \left(\frac{157}{156} \right)^{i}(\varepsilon - \beta)\]
    where we have used the lower bound on $\alpha$ in the last step. Next, notice that $\frac{13}{12} = \frac{157}{156} + \frac{1}{13}$. Additionally, since we have assumed that $\beta \leq \varepsilon / 14$ in our setting, we have $\varepsilon - \beta \geq 13 \beta$. Combining these two, we have
    \begin{align*}
        l_{i+1} &\leq \frac{1}{2} - \left(\frac{157}{156} \right)^{i + 1}(\varepsilon - \beta) - \frac{1}{13} \cdot \left(\frac{157}{156} \right)^{i} \cdot (\varepsilon - \beta) \\
        &\leq \frac{1}{2} - \left(\frac{157}{156} \right)^{i + 1}(\varepsilon - \beta) - \frac{1}{13} \cdot (\varepsilon - \beta) \\
        &\leq \frac{1}{2} - \left(\frac{157}{156} \right)^{i + 1}(\varepsilon - \beta) - \beta
    \end{align*}
    Notice that for $i = 0$, we have $l_0 + \beta = \frac{1}{2} - (\varepsilon - \beta)$. Thus, using iterated applications of the previous claim, we can deduce that the first iteration $t_1$ where $l_{t_1} + \beta \leq \frac{1}{2} - \frac{1}{3}$ satisfies $t_1 \leq \ceil{\log_{(157/156)} (\frac{1}{3(\varepsilon - \beta)})}$. Therefore after $t_1 = \ceil{\log_{(157/156)} (\frac{1}{3(\varepsilon - \beta)})}$ iterations, we have $l_{t_1} \leq \frac{1}{2} - \frac{1}{3} - \beta = \frac{1}{6} - \beta$.
    
    Now, consider some $i \geq t_1$ and suppose that $l_i + \beta \leq \frac{1}{6}$ and $l_i \geq \beta$. Then
    \[ l_{i + 1} \leq 3 (l_i + \beta)^2 \leq 3 \cdot \frac{1}{6} \cdot \left(l_i + l_i\right) \leq l_i.\]
    This means that the sequence $l_i$ is non-increasing in $i$, which is what we desire.

    Now, suppose that $l_i \leq 3 \beta$ for some $i \geq 0$. Then, by our assumption that $\beta \leq \frac{\varepsilon}{14} \leq \frac{1}{28}$, we have
    \[ l_{i + 1} \leq 3(l_i + \beta)^2 \leq 3(4\beta)^2 = (48\beta) \beta \leq 2\beta .\]
        Furthermore, 
    \[ l_{i + 2} \leq 3(l_{i + 1} + \beta)^2 \leq 3(3\beta)^2 = (27\beta) \beta \leq \beta .\]
    
    Hence, we can deduce that for $i \geq t_1$, $l_i$ monotonically decreases until it reaches $\beta$ and then it stays below $\beta$. Additionally, it only takes two steps to drop from $3\beta$ to below $\beta$.

    Next, for any $j \geq 0$, suppose that $l_{t_1 + j} \geq 3\beta$. Then, we have
    \[ l_{t_1 + j + 1} \leq 3 \left(l_{t_1 + j} + \beta \right)^2 \leq 3 \left(l_{t_1 + j} + \frac{l_{t_1 + j}}{3}\right)^2 \leq \frac{16}{3} (l_{t_1 + j})^2.\]
    Hence, if $t_1 + t_2$ is the first iteration when $l_{t_1 + t_2} < 3\beta$, we can deduce that
    \[ l_{t_1 + t_2} \leq \frac{16}{3} (l_{t_1 + t_2 - 1})^2 \leq \ldots \leq \left(\frac{16}{3}\right)^{2^{t_2} - 1} (l_{t_1})^{2^{t_2}} \leq \left(\frac{16 l_{t_1}}{3}\right)^{2^{t_2}}.\]

    Recall that $l_{t_1} \leq \frac{1}{6}$. Thus, 
    \[ l_{t_1 + t_2} \leq \left(\frac{16 l_{t_1}}{3}\right)^{2^{t_2}} \leq \left(\frac{8}{9}\right)^{2^{t_2}}.\]
    Thus, for $t_2 = \ceil{\log_2(\log_{9/8}(\frac{1}{\gamma'}))} + 2$, we conclude that $l_{t_1 + t_2} = l_{t} \leq \gamma'$. Notice that this requires that assumption that $\gamma' \geq \beta$, which is true by definition.  

\end{proof}

In the next two lemmas, we prove a bound on the expected size of $L_{i + 1}$ given $L_i$ and show that the true value is concentrated around this expectation.

\begin{lemma}
    \label{lem:exp}
    For each iteration $0 \leq i \leq t$, we have $\ex{\frac{|L_{i + 1}|}{n} \mid L_i} \leq \overline{B}_{3, 1} \left(\frac{|L_{i}|}{n} + \beta \right)$.
\end{lemma}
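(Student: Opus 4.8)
The plan is to analyze the value update of a single node $v$ in iteration $i+1$, conditioned on the configuration $L_i, M_i, H_i$ at the end of iteration $i$. Since every node acts independently given this configuration, it suffices to compute the probability that $v \in L_{i+1}$, i.e.\ that $v$'s new value has rank less than $(\tfrac12 - \varepsilon)n$, and then sum over all $n$ nodes and divide by $n$.

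First I would note that $v \in L_{i+1}$ exactly when $v$ updates its value to one that has rank $< (\tfrac12 - \varepsilon)n$, which happens if and only if the median of $x_{u_1(v)}, x_{u_2(v)}, x_{u_3(v)}$ (the three sampled values) lies in $L_i$ — i.e.\ at least two of the three sampled values have rank $< (\tfrac12 - \varepsilon)n$. Here is the one subtle point introduced by the adversary: each of the three sampled values is, independently, either (a) an honestly reported value of a uniformly random node, which lands in $L_i$ with probability exactly $|L_i|/n$, or (b) a value from one of the $\le \beta n$ adversarially controlled nodes, whose content is arbitrary and which we must pessimistically assume the adversary places into $L_i$ whenever that helps inflate $|L_{i+1}|$. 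Crucially, because the adversary fixes the corrupted set before $v$'s random partners are drawn, each sample independently hits a corrupted node with probability at most $\beta$. Therefore each sampled value lands in $L_i$ (for the purpose of this upper bound) with probability at most $|L_i|/n + \beta =: p_i$, independently across the three samples.

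Consequently $\pr{v \in L_{i+1} \mid L_i}$ is at most the probability that a $\mathrm{Bin}(3, p_i)$ random variable is at least $2$, which is precisely $\overline{B}_{3,1}(p_i) = 3p_i^2 - 2p_i^3$ — using the monotonicity of $x\mapsto \pr{\mathrm{Bin}(3,x)\ge 2}$ so that the ``at most'' on the per-sample probability carries through. Summing this bound over all $n$ nodes and dividing by $n$ gives $\ex{|L_{i+1}|/n \mid L_i} \le \overline{B}_{3,1}\!\left(\tfrac{|L_i|}{n} + \beta\right)$, as claimed. The only real care needed is the correct handling of the adversary — making explicit that the corruption pattern is independent of $v$'s fresh random choices, and that we are free to couple each corrupted sample to the worst case (landing in $L_i$) — together with the observation that we do not need to track the possibility of an adversarial value landing in $M_i$ or $H_i$ since that only decreases $|L_{i+1}|$. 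I do not expect a genuine obstacle here; the main thing to get right is the clean independence argument and the reduction to the binomial tail.
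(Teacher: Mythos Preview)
Your proposal is correct and follows essentially the same argument as the paper: a node lands in $L_{i+1}$ only if at least two of its three sampled nodes are either in $L_i$ or adversarially corrupted, each of which happens with probability at most $|L_i|/n + \beta$, and monotonicity of $\overline{B}_{3,1}$ then gives the bound. You spell out the independence and adversary-timing considerations more carefully than the paper does, but the approach is identical.
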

\begin{proof}
    For $i \geq 0$, notice that a node $v$ is in $L_{i + 1}$ only if at least $2$ of the $3$ nodes $u_1(v), u_2(v), u_3(v)$ are either in $L_i$ or adversarially affected in round $i$. Indeed, if two or more nodes are in $M_i, H_i$, then the median of the three will be in $M_{i + 1}, H_{i + 1}$. 
    Thus,
    \[ \ex{\frac{|L_{i + 1}|}{n} \mid L_i} \leq \overline{B}_{3, 1} \left(\frac{|L_{i}|}{n} + \beta \right)\]
    where the bound follows from the fact that $\overline{B}_{n, k}(p)$ is monotonically increasing in $p$.
\end{proof}

\begin{lemma}
    \label{lem:oneround}
    W.h.p.\ for each iteration $1 \leq i \leq t$, $\frac{|L_i|}{n} \leq (1 + \delta) \max \left( \gamma', \ex{\frac{|L_{i}|}{n} \mid L_{i - 1}} \right)$.
\end{lemma}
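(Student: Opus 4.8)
The plan is to fix an iteration $i$ and condition on $L_{i-1}$, and then apply a concentration inequality to the random variable $|L_i|$, which is a sum of (conditionally) independent indicators $\mathbf{1}[v \in L_i]$ over all $n$ nodes $v$. Note that once $L_{i-1}$ is fixed and the adversary has committed to its (at most $\beta n$) corrupted nodes for this round, whether $v \in L_i$ depends only on the three uniformly random partners $u_1(v), u_2(v), u_3(v)$, and these choices are independent across $v$; hence $|L_i|$ is a sum of $n$ independent Bernoulli variables. By Lemma \ref{lem:exp}, $\mathbb{E}[|L_i| \mid L_{i-1}] \leq n \cdot \overline{B}_{3,1}(|L_{i-1}|/n + \beta) =: \mu$. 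I would apply a multiplicative Chernoff bound (stated in the appendix) to get that $|L_i| \leq (1+\delta)\max(\gamma' n, \mu)$ with probability $1 - \exp(-\Omega(\delta^2 \max(\gamma' n, \mu)))$. The key reason the $\max$ with $\gamma'$ appears is to handle the regime where $\mu$ is tiny: we need the "effective mean" used in the Chernoff exponent to be at least $\gamma' n$ for the failure probability to be $n^{-\Omega(1)}$. Since $\delta = (30\log n / n)^{1/3}$ and $\gamma' \geq \delta$ by definition (line 2 of Algorithm \ref{alg:3tourn}), we get $\delta^2 \cdot \gamma' n \geq \delta^3 n = 30 \log n$, so the failure probability for a single iteration is at most $n^{-\Omega(1)}$; a union bound over the $t = O(\mathrm{polylog}\, n)$ iterations (and over the symmetric statement for $|H_i|$) preserves w.h.p.

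More carefully, I would split into two cases depending on whether $\mu \geq \gamma' n$ or $\mu < \gamma' n$. In the first case, apply the standard Chernoff bound $\mathbb{P}[|L_i| \geq (1+\delta)\mu] \leq \exp(-\delta^2 \mu / 3)$ and use $\mu \geq \gamma' n \geq \delta n$ to get exponent at least $\delta^3 n / 3 = 10 \log n$. In the second case, $\mu < \gamma' n$, so I instead bound $|L_i|$ by comparison with a Bernoulli sum whose mean is exactly $\gamma' n$ (dominating the true indicators, or equivalently adding phantom mass); then Chernoff gives $\mathbb{P}[|L_i| \geq (1+\delta)\gamma' n] \leq \exp(-\delta^2 \gamma' n / 3) \leq \exp(-\delta^3 n /3) = n^{-10}$. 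In either case $\frac{|L_i|}{n} \leq (1+\delta)\max(\gamma', \mu/n) = (1+\delta)\max(\gamma', \mathbb{E}[\frac{|L_i|}{n}\mid L_{i-1}])$ fails with probability at most $n^{-\Omega(1)}$.

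The main technical point to get right — rather than a deep obstacle — is the conditional independence structure: one has to argue that, even though the adversary is adaptive and may know the history, it must commit to its corrupted set \emph{before} the round's random partner choices are revealed, so conditioned on $L_{i-1}$ and the adversary's choice, the events $\{v \in L_i\}$ remain independent across $v$ with the mean bound from Lemma \ref{lem:exp} holding for \emph{every} adversarial choice (this is exactly why Lemma \ref{lem:exp} was phrased as an upper bound uniform over adversary behavior). Given that, the concentration step is routine. I would also note explicitly that the bound of Lemma \ref{lem:exp} being an inequality (not an equality) is fine: Chernoff's upper-tail bound only needs an upper bound on the mean, via stochastic domination by a Binomial$(n, \overline{B}_{3,1}(|L_{i-1}|/n+\beta))$. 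Finally, a union bound over all $1 \leq i \leq t$ iterations — and, invoking the symmetry between $L_i$ and $H_i$ noted before the lemma, also over the analogous events for $H_i$ — yields the claimed w.h.p.\ statement simultaneously for all iterations.
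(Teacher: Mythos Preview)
Your approach is essentially the paper's: conditional independence of the indicators $\mathbf{1}[v\in L_i]$ given $L_{i-1}$ (and the adversary's committed set), a Chernoff-type upper-tail bound with the ``effective mean'' floored at $\gamma' n$, the numerical check $\delta^2\gamma' n\ge \delta^3 n=30\log n$, and a union bound over the $t$ iterations. The paper packages the two cases into a single application of Lemma~\ref{lem:chernplus} with $M=\max(\gamma' n,\ex{|L_i|\mid L_{i-1}})$, but your case split is equivalent.

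There is one slip worth fixing. You define $\mu := n\cdot \overline{B}_{3,1}(|L_{i-1}|/n+\beta)$ as an \emph{upper bound} on $\ex{|L_i|\mid L_{i-1}}$ (via Lemma~\ref{lem:exp}) and then at the end write $\max(\gamma',\mu/n) = \max(\gamma',\ex{|L_i|/n\mid L_{i-1}})$. That equality is false in general; since $\mu \ge \ex{|L_i|\mid L_{i-1}}$, the bound you actually establish, $|L_i|\le(1+\delta)\max(\gamma' n,\mu)$, is formally \emph{weaker} than the lemma's conclusion. The fix is immediate and is exactly what the paper does: do not introduce $\mu$ at all, and apply your stochastic-domination/Chernoff argument directly with $M=\max(\gamma' n,\ex{|L_i|\mid L_{i-1}})$, which trivially satisfies $M\ge \ex{|L_i|\mid L_{i-1}}$. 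Lemma~\ref{lem:exp} is not needed anywhere in the proof of this lemma.
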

\begin{proof}
    In each iteration $i$, the event that an arbitrary node is in $L_i$ at the end of this round is conditionally independent from that of other nodes given $L_{i-1}$. Hence by applying Lemma \ref{lem:chernplus} with $M = \max \left( \gamma' n, \ex{|L_{i}| \mid L_{i - 1}} \right) \geq \ex{|L_{i}| \mid L_{i - 1}}$, we have
    \[ \pr{ |L_i| \geq (1 + \delta)M } \leq \exp \left( -\frac{\delta^2 M}{3}\right) \]
    Now, since $\delta^2 M \geq \delta^2 \gamma' n \geq \delta^3 n = 30\log n$ (recall that $\gamma' = \max(\delta, \beta, \min(\frac{\gamma}{4}, \frac{\varepsilon}{14}))$ and $\delta = (30\log n/n)^{1/3}$), we get that $\pr{ |L_i| \geq (1 + \delta)M } \leq n^{-10}$.
    Finally, using a union bound over all $t$ iterations, we get the required claim.
\end{proof}

Next, we bound the cumulative error in the value of $|L_i| / n$.

\begin{lemma}\label{lem:oneround-concentration}
    W.h.p.\ for each iteration $0 \leq i \leq t$, $\frac{|L_i|}{n} \leq \max \left( (1 + \delta)\gamma', (1 + \delta)^{\frac{3^i - 1}{2}} \cdot l_i\right)$.
\end{lemma}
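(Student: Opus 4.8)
The statement follows from a bookkeeping induction on $i$ that chains Lemmas~\ref{lem:exp} and~\ref{lem:oneround}. I would first condition on the high-probability event $\mathcal{E}$ on which the conclusion of Lemma~\ref{lem:oneround} holds for every iteration $1 \le j \le t$ at once (this is the union bound already performed inside that lemma, so $\Pr[\mathcal{E}^c] = n^{-\Omega(1)}$), and then argue deterministically on $\mathcal{E}$. The base case $i=0$ is immediate: by the definition of $L_0$ we have $\frac{|L_0|}{n} \le \frac12 - \varepsilon = l_0$, and since $(1+\delta)^{(3^0-1)/2} = 1$ this is at most $\max\!\left((1+\delta)\gamma',\, l_0\right)$.

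For the inductive step, assume the bound for $i$ and set $c := (1+\delta)^{(3^i-1)/2} \ge 1$. Lemma~\ref{lem:exp} gives (surely) $\ex{\frac{|L_{i+1}|}{n} \mid L_i} \le \overline{B}_{3,1}\!\left(\frac{|L_i|}{n} + \beta\right)$, and I would bound the right-hand side by splitting on which term of the inductive hypothesis realizes the maximum. If $\frac{|L_i|}{n} \le c\, l_i$, then $\frac{|L_i|}{n} + \beta \le c(l_i + \beta)$ because $c \ge 1$, and the elementary estimate $\overline{B}_{3,1}(cp) = 3c^2p^2 - 2c^3p^3 \le c^2(3p^2 - 2p^3) = c^2\,\overline{B}_{3,1}(p)$ (valid for $c \ge 1$, $p \in [0,1]$) together with monotonicity of $\overline{B}_{3,1}$ yields $\ex{\frac{|L_{i+1}|}{n} \mid L_i} \le c^2\,\overline{B}_{3,1}(l_i + \beta) = (1+\delta)^{3^i-1} l_{i+1}$; feeding this into Lemma~\ref{lem:oneround} at iteration $i+1$ gives $\frac{|L_{i+1}|}{n} \le \max\!\left((1+\delta)\gamma',\, (1+\delta)^{3^i} l_{i+1}\right)$, which is the claimed bound since $3^i \le \frac{3^{i+1}-1}{2}$. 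If instead $\frac{|L_i|}{n} \le (1+\delta)\gamma'$, then using $\beta \le \gamma'$ we get $\frac{|L_i|}{n} + \beta \le (2+\delta)\gamma'$ and hence $\overline{B}_{3,1}\!\left(\frac{|L_i|}{n} + \beta\right) \le 3(2+\delta)^2(\gamma')^2$; since $\gamma' \le \max(\delta, 1/14)$ and $\delta = (30\log n/n)^{1/3} \to 0$, for $n$ large we have $3(2+\delta)^2\gamma' < 1$, so Lemma~\ref{lem:oneround} gives $\frac{|L_{i+1}|}{n} \le (1+\delta)\cdot 3(2+\delta)^2(\gamma')^2 \le (1+\delta)\gamma'$, again within the bound. (The degenerate case where the claimed bound for $i+1$ is already $\ge 1$ is trivial.)

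I do not expect a genuine obstacle here — the core is purely formal — but the point requiring care is arranging the accumulated multiplicative slack to grow at exactly the rate tolerated by the cubing recursion $l_{i+1} = \overline{B}_{3,1}(l_i + \beta)$: this rests on the two observations $(1+\delta)^k x + \beta \le (1+\delta)^k(x+\beta)$ and $\overline{B}_{3,1}(cp) \le c^2\overline{B}_{3,1}(p)$ for $c \ge 1$, which turn the single factor $(1+\delta)$ lost per round in Lemma~\ref{lem:oneround} into the stated $(1+\delta)^{(3^i-1)/2}$. The second thing to track is that the error may not slip below the floor $(1+\delta)\gamma'$ inherited from Lemma~\ref{lem:oneround}; closing the induction there is precisely where the smallness of $\gamma'$ — and hence of $\beta$, $\delta$, $\varepsilon$ — enters. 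Finally (only when this lemma is later invoked), one needs $(1+\delta)^{(3^t-1)/2}$ to stay, say, $o(n)$, which is what the hypothesis $\varepsilon = \Omega(n^{-0.0019})$ secures.
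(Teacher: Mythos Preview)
Your approach is essentially the paper's: induction on $i$, splitting on which branch of the max is realized, and chaining Lemma~\ref{lem:exp} with Lemma~\ref{lem:oneround}. One nice sharpening: in the main branch you use the elementary estimate $\overline{B}_{3,1}(cp) \le c^{2}\,\overline{B}_{3,1}(p)$ (specific to the cubic $3p^{2}-2p^{3}$), whereas the paper invokes the cruder generic bound $\overline{B}_{n,k}(cp) \le c^{n}\,\overline{B}_{n,k}(p)$ of Lemma~\ref{lem:scale}, yielding a factor $c^{3}$; both land safely inside the target exponent $(3^{i+1}-1)/2$. Your handling of the ``floor'' branch $\frac{|L_i|}{n} \le (1+\delta)\gamma'$ is likewise the same in spirit and in fact slightly tighter numerically.

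One small correction: the bound $(1+\delta)^{(3^{t}-1)/2} \le 2$, which you relegate to ``only when this lemma is later invoked'', is actually needed \emph{inside} the induction, to guarantee that $c(l_i+\beta) \le 1$ so that $\overline{B}_{3,1}$ is being evaluated in $[0,1]$ where both monotonicity and the polynomial formula $3p^{2}-2p^{3}$ are valid. The paper establishes this upfront (from $\varepsilon = \Omega(n^{-0.0019})$ it gets $3^{t} = O(n^{0.33})$ and hence $\delta\cdot 3^{t} = o(1)$), and with it your parenthetical ``degenerate case'' remark becomes unnecessary.
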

\begin{proof}
    First, we notice that $t \le \frac{\log (1/\varepsilon)}{\log (157/156)} + O(\log \log n)$ and recall that $\varepsilon = \Omega \left( \frac{1}{n^{0.0019}} \right)$ by assumption. Combining these two, we have 
    \[3^t = O \left( \varepsilon^{-\frac{\log 3}{\log (157/156)}} \cdot \mathrm{polylog}(n)\right) = O(n^{0.33}),\] 
    and thus for all $i\le t$,
    \[ (1 + \delta)^{\frac{3^i - 1}{2}} \leq \exp \left( \delta \cdot 3^t \right) = \exp (o(1)) \leq 2.\]

    Now, we will prove the required claim by induction on $i$. For $i = 0$, clearly $\frac{|L_0|}{n} \leq l_0$. From Lemma \ref{lem:oneround}, we know that $\frac{|L_{i+1}|}{n} \leq (1 + \delta) \max \left( \gamma', \ex{\frac{|L_{i+1}|}{n} \mid L_{i}} \right)$. 
    
    Suppose that $\frac{|L_{i}|}{n} \leq (1 + \delta)\gamma' \leq 2 \gamma'$. Then, $\frac{|L_{i}|}{n} + \beta \leq 3\gamma'$. Additionally, notice that $\beta \leq \frac{\varepsilon}{14} \leq \frac{1}{27}$ and $\delta = o(1) \leq \frac{1}{27}$. Thus, we must have $\gamma' \coloneqq \max(\delta, \beta, \min(\frac{\gamma}{4}, \frac{\varepsilon}{14})) \leq \frac{1}{27}$. These two facts along with Lemma \ref{lem:exp} gives us the following
    \[\ex{\frac{|L_{i + 1}|}{n} \mid L_i} \leq \overline{B}_{3, 1} \left(\frac{|L_{i}|}{n} + \beta \right) \leq \overline{B}_{3, 1} \left(3\gamma' \right) \leq 3(3 \gamma')^2 \leq \gamma'.\]
    
    Now, suppose that $\frac{|L_{i}|}{n} > (1 + \delta)\gamma'$. By induction hypothesis, it must be the case that $\frac{|L_{i}|}{n} \leq (1 + \delta)^{\frac{3^i - 1}{2}} \cdot l_i$. We have
    \[ \ex{\frac{|L_{i+1}|}{n} \mid L_{i}} \leq \overline{B}_{3, 1} \left(\frac{|L_{i}|}{n} + \beta \right) \leq \overline{B}_{3, 1} \left( (1 + \delta)^{\frac{3^i - 1}{2}} \cdot l_i + \beta \right) \leq \overline{B}_{3, 1} \left( (1 + \delta)^{\frac{3^i - 1}{2}} \cdot (l_i + \beta) \right).\]
    Notice that $(1 + \delta)^{\frac{3^i - 1}{2}} \cdot (l_i + \beta) \leq 2(l_i + \beta) \leq 1$, which justifies its usage in the final step. Now, we use Lemma \ref{lem:scale} to obtain
    \[ \ex{\frac{|L_{i+1}|}{n} \mid L_{i}} \leq (1 + \delta)^{3 \cdot \frac{3^i - 1}{2}} \cdot  \overline{B}_{3, 1} \left( l_i + \beta \right) = (1 + \delta)^{\frac{3^{i + 1} - 3}{2}} \cdot l_{i + 1}.\]
    Thus, we have 
    \[\frac{|L_{i + 1}|}{n} \leq (1 + \delta) \ex{\frac{|L_{i}|}{n} \mid L_{i - 1}} \leq (1 + \delta)^{\frac{3^{i + 1} - 1}{2}} \cdot l_{i + 1} .\]
\end{proof}

\begin{corollary}
    \label{cor:bound}
    W.h.p.\ $\frac{|L_t|}{n} \leq 2 \cdot \max \left( \gamma', l_t\right)$.
\end{corollary}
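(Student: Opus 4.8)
The plan is to obtain Corollary~\ref{cor:bound} as an immediate specialization of Lemma~\ref{lem:oneround-concentration} to the last iteration $i = t$. That lemma already gives, with high probability,
\[
\frac{|L_t|}{n} \leq \max\left((1+\delta)\gamma',\ (1+\delta)^{\frac{3^t-1}{2}} \cdot l_t\right),
\]
so the only thing left to do is replace both amplification factors, $(1+\delta)$ and $(1+\delta)^{(3^t-1)/2}$, by the uniform bound $2$.

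First I would recall the estimate established at the beginning of the proof of Lemma~\ref{lem:oneround-concentration}: since $t \leq \frac{\log(1/\varepsilon)}{\log(157/156)} + O(\log\log n)$ and $\varepsilon = \Omega(n^{-0.0019})$ by assumption, one has $3^t = O(n^{0.33})$, hence $\delta \cdot 3^t = O\!\left(n^{0.33} \cdot (\log n / n)^{1/3}\right) = o(1)$. This gives $(1+\delta)^{(3^t-1)/2} \leq \exp(\delta \cdot 3^t) = \exp(o(1)) \leq 2$ for $n$ large enough, and trivially $1+\delta \leq 2$ as well. Plugging both into the displayed bound yields
\[
\frac{|L_t|}{n} \leq \max(2\gamma',\ 2 l_t) = 2\max(\gamma', l_t)
\]
with high probability, which is the claim. (By Lemma~\ref{lem:l_t<beta} one in fact has $\max(\gamma', l_t) = \gamma'$, so this is equivalent to $|L_t|/n \le 2\gamma'$, but the stated form suffices for the sequel.)

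There is no real obstacle here: the corollary is a one-line consequence of the preceding lemma, all the genuine work — the induction on $i$, the control of the accumulated multiplicative error through $3^t = O(n^{0.33})$, and the high-probability union bound over the $t$ iterations — having already been done in Lemmas~\ref{lem:oneround}–\ref{lem:oneround-concentration}. The only point worth flagging is that this is exactly where the technical assumption $\varepsilon = \Omega(n^{-0.0019})$ gets used: it keeps $t$, and hence $3^t$, small enough that the per-round $(1+\delta)$ slack accumulated over all iterations stays below the constant factor $2$; allowing $\varepsilon$ to be polynomially smaller would make $3^t$ too large and blow up the constant in the bound.
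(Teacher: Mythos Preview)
Your proposal is correct and matches the paper's proof essentially verbatim: the paper also just recalls from the proof of Lemma~\ref{lem:oneround-concentration} that $(1+\delta)^{(3^t-1)/2}\le 2$ and applies that lemma at $i=t$. Your added remark about where the assumption $\varepsilon=\Omega(n^{-0.0019})$ is used is accurate and consistent with the paper's reasoning.
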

\begin{proof}
    Recall from the proof of Lemma \ref{lem:oneround-concentration} that $(1 + \delta)^{\frac{3^t - 1}{2}} \leq 2$. Using this along with the result of Lemma \ref{lem:oneround-concentration} for the iteration $t$, we obtain the required claim.
\end{proof}

Notice that since $L_i$ and $H_i$ are both completely symmetric, the above corollary also applies to $H_i$.

Now, we can analyze the number of nodes that output the wrong answer at the end of the second phase.

\begin{lemma}\label{lem:final-median-approx}
    W.h.p.\ all but a $\gamma$ fraction of the nodes outputs a quantile in $\left[ \frac{1}{2} - \varepsilon, \frac{1}{2} + \varepsilon \right]$
\end{lemma}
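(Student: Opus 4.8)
The plan is to cash in the high-probability control on the Phase~1 configuration and then analyze the final sampling step node by node. By Lemma~\ref{lem:l_t<beta} we have $l_t \le \gamma'$, so Corollary~\ref{cor:bound} (and its mirror image for $H_t$) gives that w.h.p.\ $|L_t|/n \le 2\gamma'$ and $|H_t|/n \le 2\gamma'$; I condition on this event, so at most $4\gamma' n$ nodes hold a value whose quantile lies outside $[\frac12-\varepsilon,\frac12+\varepsilon]$ at the end of Phase~1. If the test $\gamma' > \gamma/4$ in Algorithm~\ref{alg:3tourn} fails, the nodes output their Phase~1 values and at most $4\gamma' n \le \gamma n$ of them are incorrect, so we are done. (We may assume $\varepsilon$ is below a small absolute constant, since otherwise $[\frac12-\varepsilon,\frac12+\varepsilon]$ contains every quantile and the statement is vacuous; in particular this guarantees $20\gamma' < 1$, which is used below.)

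Next, suppose Phase~2 runs, i.e.\ $\gamma' > \gamma/4$, and fix a node $v$, whose output is the median of the $K = \ceil{8\log_{20\gamma'}(\gamma/4)}$ values it collects. Its output has quantile below $\frac12-\varepsilon$ only if at least $K/2$ of the collected values are below the corresponding value threshold; a collected value can be that small only if the sampled node lies in $L_t$ at that moment or was corrupted in that round (in which case I pessimistically let the adversary report an arbitrarily small value). Since each of the $K$ samples is a uniformly random node and in every Phase~2 round the corrupted set has size at most $\beta n$, the number of such ``low'' samples is stochastically dominated by $\mathrm{Bin}(K, 2\gamma'+\beta) \preceq \mathrm{Bin}(K, 3\gamma')$, using $\beta \le \gamma'$. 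Hence the probability that $v$'s output has quantile below $\frac12-\varepsilon$ is at most
\[
\pr{\mathrm{Bin}(K,3\gamma') \ge K/2} \;\le\; 2^K (3\gamma')^{K/2} \;=\; (12\gamma')^{K/2} \;<\; (20\gamma')^{K/2} \;\le\; (\gamma/4)^4 ,
\]
where the last inequality uses $K/2 \ge 4\log_{20\gamma'}(\gamma/4)$ together with $0 < 20\gamma' < 1$. By the symmetry between $L_i$ and $H_i$, the probability that $v$'s output has quantile above $\frac12+\varepsilon$ obeys the same bound, so $v$ outputs an incorrect value with probability at most $2(\gamma/4)^4 \le \gamma/128$.

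It then remains to control the number $Z$ of nodes incorrect after Phase~2. Conditioning on the Phase~1 configuration and on the corrupted sets chosen in each of the $K$ Phase~2 rounds (each fixed by the adversary before that round's random pulls are revealed), the event ``$v$ is incorrect'' depends only on $v$'s own $K$ samples, and these are independent across nodes; hence $Z \preceq \mathrm{Bin}(n, \gamma/128)$. If $\gamma n \ge \log n$, a Chernoff bound gives $\pr{Z \ge \gamma n} \le \exp(-\Omega(\gamma n)) = n^{-\Omega(1)}$. If instead $\gamma n < \log n$, then $\gamma = O(\log n/n)$, the expected number of incorrect nodes is at most $2n(\gamma/4)^4 = O\big((\log n)^4/n^3\big)$, and a union bound over the $n$ nodes shows $Z = 0 \le \gamma n$ w.h.p. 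Either way $Z \le \gamma n$ w.h.p., which combined with the Phase~1 event proves the lemma.

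The step I expect to require the most care is this concentration argument: because the adversary may adapt its choice of corrupted nodes across the $K$ rounds of Phase~2, the ``independent across nodes'' claim has to be justified by conditioning on the realized corruption sets (or by a round-by-round exposure/martingale argument), exploiting that each round's corruptions are committed before that round's random pulls. The per-node binomial tail bound is routine, but the constants hidden in $K$ (the factor $8$, the base $20$) are exactly what is needed to push $(12\gamma')^{K/2}$ below $(\gamma/4)^4$, so that bookkeeping should be carried out carefully; and one must split the final concentration into the regimes $\gamma n \ge \log n$ and $\gamma n < \log n$ according to whether a Chernoff bound or a first-moment bound is the right tool.
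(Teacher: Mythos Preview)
Your proposal is correct and follows essentially the same approach as the paper: condition on the Phase~1 bound $|L_t|/n,|H_t|/n \le 2\gamma'$, dispose of the case $\gamma' \le \gamma/4$ directly, and otherwise bound the per-node failure probability in Phase~2 by $(20\gamma')^{K/2} \le (\gamma/4)^4$ and then turn this into a w.h.p.\ bound on the number of incorrect nodes via Chernoff (large $\gamma$) or a first-moment/union bound (small $\gamma$). Your only cosmetic deviations---handling the low and high tails separately (with the sharper base $12\gamma'$) instead of the combined bad-sample probability $5\gamma'$, and using a union bound rather than Markov in the small-$\gamma$ regime---do not change the argument; your extra care about conditioning on the adversary's per-round corruption sets to justify independence across nodes is a point the paper leaves implicit.
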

\begin{proof}
    By Corollary \ref{cor:bound} and Lemma \ref{lem:l_t<beta}, we know that after $t$ iterations, we have $\frac{|L_i|}{n} \leq 2\gamma'$ and $\frac{|H_i|}{n} \leq 2\gamma'$. Now, suppose that $\gamma' \leq \frac{\gamma}{4}$. Then, we are already done. 
    
    Otherwise, we must have that $\gamma < 4 \max(\beta, \delta)$. This is because if $\gamma' > \gamma / 4$, then we must have $\max(\beta, \delta, \min\{\gamma/4,\varepsilon/14\}) > \gamma / 4$, which implies that $\max(\beta, \delta, \gamma/4) > \gamma / 4$, and hence $\max(\beta, \delta) > \gamma / 4$. A fixed node $v$ outputs a quantile in the required range if at least $K/2$ of the nodes are in $M_t$ (line~9 of Algorithm \ref{alg:3tourn}). This means that the probability that this does not happen is at most 
    \[
    \binom{K}{K/2} \left( \frac{|L_i|}{n} + \frac{|H_i|}{n} + \beta \right)^{K/2} \leq 2^K (5\gamma')^{K/2} = (20\gamma')^{K/2} \leq \left(\frac{\gamma}{4} \right)^4,
    \]
    where the last inequality holds since $K = \ceil{8 \cdot \log_{20 \gamma'} \left( \frac{\gamma}{4} \right)}$. Notice that $K > 0$ because $20 \gamma' \leq \frac{20}{27} < 1$. Thus, the expected number of nodes that do not output a correct approximate median is at most $\frac{\gamma^4 n}{256}$. 
    
    First, suppose that $\gamma = \Omega \left( \frac{\log n}{n}\right)$. Notice that the expected number of nodes is $\frac{\gamma^4 n}{256} \leq \frac{\gamma n}{2}$. Now, using Chernoff bounds, the probability that the number of nodes that output the wrong answer is more than $\gamma n$ is upper bounded by $\exp(-\frac{1}{3} \cdot  (\frac{\gamma n}{2})) = \exp(- \Omega(\log n)) = n^{-\Omega(1)}$.

    Else, we have $\gamma = o \left( \frac{\log n}{n}\right)$. Now, we can apply Markov's to say that the probability that the expected number of nodes that output the wrong answer is more than $\gamma n$ is upper bounded by $\frac{\gamma^4 n}{256 \cdot \gamma n} = o \left( \frac{\log^3 n}{n^3} \right) = n^{-\Omega(1)}$.
\end{proof}

This concludes the proof of the correctness of the algorithm. Finally, we analyze the runtime of the algorithm.

\begin{theorem}
    \label{thm:runtime}
    The  algorithm terminates in $O \left(\log \frac{1}{\varepsilon} + \log \log \frac{1}{\gamma}  + \frac{\log \gamma}{\log \beta} \right)$ rounds.
\end{theorem}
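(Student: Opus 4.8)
The plan is to read off from Algorithm~\ref{alg:3tourn} exactly how many rounds each phase uses and then bound the resulting closed form. Phase~1 executes its for-loop $t$ times, and since each iteration consists of $3$ pull rounds it costs $3t$ rounds; Phase~2, when it is executed (that is, when $\gamma' > \gamma/4$), samples $K$ nodes one per round and so costs an additional $K$ rounds. Hence the total is $3t$, plus $K$ when $\gamma' > \gamma/4$, and it suffices to bound $t$ and $K$ separately.

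To bound $t$, note that $\log_{157/156}$, $\log_{9/8}$ and $\log_2$ differ from the natural logarithm only by constant factors and that $\ceil{x}\le |x|+1$, so $t = O\!\left(\log\tfrac{1}{\varepsilon-\beta} + \log\log\tfrac{1}{\gamma'} + 1\right)$. The hypothesis $\beta\le\varepsilon/14$ gives $\varepsilon-\beta\ge\tfrac{13}{14}\varepsilon$, hence $\log\tfrac{1}{\varepsilon-\beta}=O(\log\tfrac1\varepsilon)$. For the second term, $\gamma'=\max(\delta,\beta,\min(\gamma/4,\varepsilon/14))\ge\min(\gamma/4,\varepsilon/14)$, so $\tfrac{1}{\gamma'}\le\max(\tfrac4\gamma,\tfrac{14}{\varepsilon})$ and therefore $\log\log\tfrac1{\gamma'}=O\!\left(\log\log\tfrac1\gamma+\log\log\tfrac1\varepsilon+1\right)=O\!\left(\log\log\tfrac1\gamma+\log\tfrac1\varepsilon\right)$. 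Combining, $3t=O\!\left(\log\tfrac1\varepsilon+\log\log\tfrac1\gamma\right)$.

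To bound $K$, I would first record that $\gamma'\le 1/28$ for $n$ large, since $\delta=o(1)$, $\beta\le\varepsilon/14\le 1/28$, and $\min(\gamma/4,\varepsilon/14)\le\varepsilon/14\le 1/28$ (using the implicit assumption $\varepsilon\le\tfrac12$); hence $20\gamma'<1$, the denominator below is positive, and $K=O\!\left(\tfrac{\log(4/\gamma)}{\log(1/(20\gamma'))}+1\right)$. Whenever Phase~2 runs we have $\gamma'>\gamma/4\ge\min(\gamma/4,\varepsilon/14)$, so the minimum term does not attain the maximum and $\gamma'=\max(\delta,\beta)$; I would split into two cases. If $\gamma'=\beta$, then $\beta\le 1/28$ makes $\log\tfrac1\beta$ and $\log\tfrac{1}{20\beta}=\log\tfrac1\beta-\log 20$ equal up to a constant factor, which yields $K=O\!\left(\tfrac{\log(1/\gamma)}{\log(1/\beta)}+1\right)$. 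If instead $\gamma'=\delta=(30\log n/n)^{1/3}$, then $\log\tfrac{1}{20\delta}=\Theta(\log n)$, while the standing assumption $\gamma>\tfrac1{4n}$ forces $\log\tfrac4\gamma=O(\log n)$, so $K=O(1)$, which is trivially $O\!\left(\tfrac{\log(1/\gamma)}{\log(1/\beta)}+1\right)$. Either way $K=O\!\left(\tfrac{\log\gamma}{\log\beta}+1\right)$.

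Adding the two contributions gives a round complexity of $O\!\left(\log\tfrac1\varepsilon+\log\log\tfrac1\gamma+\tfrac{\log\gamma}{\log\beta}+1\right)$, and since every gossip algorithm runs for $\Omega(1)$ rounds the trailing constant is absorbed into the $O(\cdot)$. The only genuinely non-mechanical step is the case analysis for $K$: one has to notice that $\gamma'$ may be pinned by the ``minimum-degree'' term $\delta$ rather than by $\beta$, and in that regime the bound $\gamma>\tfrac1{4n}$ is exactly what shows the extra $K$ rounds amount to only a constant; everything else is bookkeeping with logarithms, together with the sign checks keeping those logarithms positive, which all reduce to $\gamma'\le 1/28$.
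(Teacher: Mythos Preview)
Your proof is correct and follows essentially the same approach as the paper: bound $t$ by handling $\log\tfrac{1}{\varepsilon-\beta}$ and $\log\log\tfrac{1}{\gamma'}$ separately, then case-split on which term realizes $\gamma'$ to bound $K$. Your treatment of $K$ is in fact slightly cleaner than the paper's, since you observe that Phase~2 runs only when $\gamma'>\gamma/4$, which forces $\gamma'=\max(\delta,\beta)$ and reduces the case analysis to two cases; the paper instead checks all four possible values of $\gamma'$ (including $\gamma'=\gamma/4$ and $\gamma'=\varepsilon/14$, which never trigger Phase~2 anyway).
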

\begin{proof}
    Notice that the total number of rounds is $O(t + K) = O \left( \log \frac{1}{\varepsilon} + \log \log \frac{1}{\gamma'} + \frac{\log (1/\gamma)}{ \log (1/\gamma')} \right)$.

    Recall that $\gamma' \coloneqq \max(\delta, \beta, \min(\frac{\gamma}{4}, \frac{\varepsilon}{14})) \geq \min(\frac{\gamma}{4}, \frac{\varepsilon}{14})$. It follows that 
    \[O \left( \log \log \frac{1}{\gamma'} \right) = O\left( \log \log \frac{1}{\gamma} + \log \log \frac{1}{\varepsilon}\right).\]
    Hence the runtime is $O \left( \log \frac{1}{\varepsilon} + \log \log \frac{1}{\gamma} + \frac{\log (1/\gamma)}{ \log (1/\gamma')} \right)$.

    Next, suppose that $\gamma' = \gamma / 4$. Then $O \left( \frac{\log (1/\gamma)}{ \log (1/\gamma')} \right) = O(1)$. Otherwise if $\gamma' = \delta$, then $O \left( \frac{\log (1/\gamma)}{ \log (1/\gamma')} \right) = O \left( \frac{\log (1/\gamma)}{ \log n} \right) = O(1)$ (since we can always assume that $\gamma = \Omega(1/n)$). Notice that if $\gamma' = \varepsilon / 14$, then $\gamma / 4 \geq \varepsilon / 14$ and hence $O \left( \frac{\log (1/\gamma)}{ \log (1/\gamma')} \right) = O \left( \frac{\log (1/\gamma)}{ \log (1/\varepsilon)} \right) = O \left( 1 \right)$. Finally, if $\gamma' = \beta$, we have $O \left( \frac{\log (1/\gamma)}{ \log (1/\gamma')} \right) = O \left( \frac{\log (1/\gamma)}{ \log (1/\beta)} \right)$. Thus, $O \left( \frac{\log (1/\gamma)}{ \log (1/\gamma')} \right) = O \left(1 +  \frac{\log \gamma}{ \log \beta} \right)$ always holds.

    Consequently, the runtime can be written as $O \left(\log \frac{1}{\varepsilon} + \log \log \frac{1}{\gamma}  + \frac{\log \gamma}{\log \beta} \right)$.
\end{proof}

\section{Shifting Quantiles}\label{sec:shift_quant_appendix}

In this section, we prove the theoretical runtime guarantees of Theorem \ref{thm:shift_qnt} for our robust quantile-shifting algorithm.
\shiftqnt*

Note that the assumptions $\beta \leq \frac{\varepsilon^{2.5}}{16}$ and $\varepsilon \leq \frac{1}{6}$ together imply that $\beta \le \frac{\varepsilon \cdot \varepsilon^{3/2}}{16} \le \frac{\varepsilon}{16} \cdot \frac{1}{6\sqrt{6}}= \frac{\varepsilon}{96 \sqrt{6}} \leq \frac{\varepsilon}{200}$.
We begin with a deterministic runtime bound for Algorithm \ref{alg:2tourn}.

\begin{lemma}
    \label{lem:iter}
    Let $t$ denote the number of iterations performed by Algorithm \ref{alg:2tourn} until termination. We have $t \leq (1.5) \log_2 (\frac{1}{\varepsilon})$.
\end{lemma}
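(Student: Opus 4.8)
Observe first that the quantities $h'_i$ evolve completely deterministically: the ``with probability $\delta$'' branch of Algorithm~\ref{alg:2tourn} affects the $x_v$'s but not the recursion $h'_0 = 1-(\phi+\varepsilon)$, $h'_{i+1} = (h'_i-\beta)^2$. Hence $t$ is precisely the first index $i$ with $h'_i \le T$, where $T = \tfrac12 - \tfrac{21\varepsilon}{16}$. Since $\phi \le 1/2$, we have $h'_0 - T = (\tfrac12-\phi) + \tfrac{5\varepsilon}{16} > 0$, so the while-loop runs at least once and $t \ge 1$. Also $T \ge \tfrac12 - \tfrac{21}{96} = \tfrac{9}{32}$, and $\beta \le \varepsilon/200 < T$.

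The approach is to exploit the doubly-exponential contraction of the recursion. First I would prove by induction on $i$ that $h'_i \le (1-\varepsilon)^{2^i}$ for every $0 \le i \le t$. The base case is $h'_0 = 1-\phi-\varepsilon \le 1-\varepsilon$. For the inductive step with $i < t$: the while-condition gives $h'_i > T > \beta$, so $0 < h'_i - \beta \le h'_i$, whence $h'_{i+1} = (h'_i-\beta)^2 \le (h'_i)^2 \le \bigl((1-\varepsilon)^{2^i}\bigr)^2 = (1-\varepsilon)^{2^{i+1}}$. Now, since iteration $t$ of the loop was executed, the loop condition held with $i = t-1$, i.e.\ $h'_{t-1} > T$; together with the bound just proved this gives $(1-\varepsilon)^{2^{t-1}} > T$. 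Taking logarithms and using the Taylor-series bound $-\ln(1-\varepsilon) \ge \varepsilon + \tfrac{\varepsilon^2}{2} = \varepsilon\bigl(1+\tfrac{\varepsilon}{2}\bigr)$ yields
\[
2^{\,t-1} \;<\; \frac{\ln(1/T)}{-\ln(1-\varepsilon)} \;\le\; \frac{\ln(1/T)}{\varepsilon\,(1+\varepsilon/2)},
\]
and therefore $t - 1 < \log_2\frac{\ln(1/T)}{\varepsilon(1+\varepsilon/2)}$, i.e.\ $t < 1 + \log_2\frac{\ln(1/T)}{\varepsilon(1+\varepsilon/2)}$.

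It remains to check that this last bound is at most $\tfrac32\log_2\tfrac1\varepsilon$. Rearranging (and using $1/T = \tfrac{16}{8-21\varepsilon}$) turns the required inequality into
\[
\ln\frac{16}{8-21\varepsilon} \;\le\; \frac{2+\varepsilon}{4\sqrt\varepsilon} \qquad\text{for } \varepsilon \in (0, 1/6].
\]
I would establish this by monotonicity: the left-hand side is increasing in $\varepsilon$, while the right-hand side equals $\tfrac{1}{2\sqrt\varepsilon} + \tfrac{\sqrt\varepsilon}{4}$, which a one-line derivative computation shows is decreasing on $(0,1/6]$; hence it suffices to verify the inequality at $\varepsilon = 1/6$, where the left side is $\ln\tfrac{32}{9} < 1.27$ and the right side is $\tfrac{13\sqrt6}{24} > 1.32$. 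I expect this final numerical check to be the only subtle point in the whole argument; in particular it genuinely needs the sharpened estimate $-\ln(1-\varepsilon) \ge \varepsilon + \varepsilon^2/2$ rather than merely $-\ln(1-\varepsilon) \ge \varepsilon$, which is just barely too weak when $\varepsilon$ is close to $1/6$. Everything else is a routine unrolling of the recursion.
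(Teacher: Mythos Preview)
Your argument is correct and follows a genuinely different route from the paper's.  The paper proceeds in two phases: first it shows, via the elementary inequality $(1-x)^2 \le 1 - \tfrac74 x$ valid for $0\le x\le \tfrac14$, that $h'_i \le 1 - (7/4)^i\varepsilon$ until $(7/4)^i\varepsilon$ reaches $\tfrac14$, which takes at most $\log_{7/4}\tfrac1{4\varepsilon}$ iterations and drives $h'_i$ down to $\tfrac34$; then it observes that three further squarings push $h'_i$ below $\tfrac14 \le T$, and the arithmetic $\log_{7/4}\tfrac1{4\varepsilon}+3 \le \tfrac32\log_2\tfrac1\varepsilon$ finishes.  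You instead bound $h'_i \le (1-\varepsilon)^{2^i}$ in one clean induction and invert via logarithms, which avoids the two-phase split but forces you into the sharpened Taylor estimate $-\ln(1-\varepsilon)\ge \varepsilon+\varepsilon^2/2$ and a numerical endpoint check at $\varepsilon=1/6$.  The paper's version is more elementary (no logarithms or limiting constants to chase), while yours is structurally tidier; both deliver the same bound with no slack to spare.
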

\begin{proof}
    The algorithm ends when $h'_t \leq \frac{1}{2} - \frac{21\varepsilon}{16}$. First, for some arbitrary $i$, suppose that $h'_{i} \leq 1 - \left( \frac{7}{4}\right)^{i} \varepsilon$ and $\left( \frac{7}{4}\right)^{i} \varepsilon \leq \frac{1}{4}$. We will show that $h'_{i + 1} \leq 1 - \left( \frac{7}{4}\right)^{i + 1} \varepsilon$. We have
    \[ h'_{i + 1} \coloneqq  (h'_i - \beta)^2 \leq (h'_i)^2 \leq \left( 1 - \left( \frac{7}{4}\right)^i\varepsilon\right)^2 .\]

    Next, notice that for all $0 \leq x \leq \frac{1}{4}$, we have $(1 - x)^2 = 1 - 2x + x^2 \leq 1-2x + \frac{1}{4}x = 1 - \left( \frac{7}{4}\right)x$. Using this, we obtain
    \[ h'_{i + 1} \leq 1 - \left( \frac{7}{4}\right) \cdot \left( \frac{7}{4}\right)^{i}\varepsilon = 1 - \left( \frac{7}{4}\right)^{i + 1}\varepsilon.\]

    Initially, we have $h_0 \leq 1 - (\phi + \varepsilon) \leq 1 - \varepsilon$. Thus, by iterated applications of the previous claim, after at most $\log_{7/4} \left(\frac{1}{4\varepsilon} \right)$ many iterations, we must have that $h'_i \leq \frac{3}{4}$.
    Furthermore, after $3$ additional iterations, $h'_i$ drops below $\frac{1}{4}$. 
    
    Now, suppose that $t$ is the first iteration when $h'_{t} \leq \frac{1}{4}$. Thus, after 
    \[t \leq \log_{7/4} \left(\frac{1}{4\varepsilon} \right) + 3 \leq (1.5) \log_{2} \left(\frac{1}{4\varepsilon} \right) + 3 = (1.5) \log_{2} \left(\frac{1}{\varepsilon} \right)\] 
    iterations, we have $h'_{t} \leq \frac{1}{4} \leq \frac{1}{2} - \frac{21\varepsilon}{16}$ for any $\varepsilon \leq \frac{1}{6}$.
\end{proof}

We mention the following claim that will be useful.
\begin{claim}
    \label{claim:iter}
    $h'_{t - 1} \geq \frac{1}{2} - \frac{21\varepsilon}{16} \geq \frac{9}{32}$.
\end{claim}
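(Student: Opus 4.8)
The plan is to establish the two inequalities $h'_{t-1} \geq \frac{1}{2} - \frac{21\varepsilon}{16}$ and $\frac{1}{2} - \frac{21\varepsilon}{16} \geq \frac{9}{32}$ separately, the second being a trivial consequence of the standing assumption $\varepsilon \leq \frac{1}{6}$. For the first, observe that $t$ is by definition the first iteration at which the while-loop condition $h'_i > T = \frac{1}{2} - \frac{21\varepsilon}{16}$ fails; hence the loop did execute at iteration $t-1$, which means $h'_{t-1} > T = \frac{1}{2} - \frac{21\varepsilon}{16}$, giving the claimed lower bound directly (with strict inequality, a fortiori the non-strict one). One edge case to dispatch: if the loop never executes, i.e.\ $t = 0$, then $h'_{t-1}$ is not well-defined; but this cannot happen here because $h'_0 = 1 - (\phi + \varepsilon) \geq 1 - (\tfrac{1}{2} + \varepsilon) = \tfrac{1}{2} - \varepsilon > \tfrac{1}{2} - \tfrac{21\varepsilon}{16} = T$ using $\phi \leq 1/2$, so the loop runs at least once and $t \geq 1$, making $h'_{t-1}$ well-defined.

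For the second inequality, I would simply note that $\frac{1}{2} - \frac{21\varepsilon}{16} \geq \frac{1}{2} - \frac{21}{16 \cdot 6} = \frac{1}{2} - \frac{7}{32} = \frac{9}{32}$, where the first step uses $\varepsilon \leq \frac{1}{6}$ and the monotonicity of $x \mapsto \frac{1}{2} - \frac{21x}{16}$ in $x$.

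I do not anticipate any real obstacle here: this is essentially a definitional unwinding of the loop termination condition together with one arithmetic simplification. The only subtlety worth a sentence in the writeup is confirming $t \geq 1$ so that the index $t-1$ in the statement makes sense, which follows from the initialization $h'_0 \geq \tfrac{1}{2} - \varepsilon$ exceeding the threshold $T$. If one wanted to be maximally careful, one could also remark that the sequence $h'_i$ is strictly decreasing (each step maps $h'_i$ to $(h'_i - \beta)^2 < h'_i$ once $h'_i$ is below $1$, which it is after the first iteration by Lemma \ref{lem:iter}'s analysis), so the loop genuinely terminates and $t$ is finite — though Lemma \ref{lem:iter} already guarantees this.
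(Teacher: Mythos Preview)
Your proof is correct and is precisely the definitional unwinding the paper has in mind: the paper states Claim~\ref{claim:iter} without proof, treating it as immediate from the loop termination condition and the assumption $\varepsilon \le \tfrac{1}{6}$. Your handling of the edge case $t\ge 1$ and the arithmetic for the second inequality are both fine.
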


Note that lines $8, 9$ in Algorithm \ref{alg:2tourn} are executed with probability $1$ during the first $t - 1$ iterations. Indeed, $\delta$ is set to $1$ if and only if $h'_i-h'_{i+1} \le h'_i-T$ or equivalently $T \le h'_{i+1}$, which is true up to $i=t-1$ by definition of $t$. Recall the following recursive definitions:
\begin{align*}
    h_0' \coloneqq 1-(\phi+\varepsilon), \qquad h_{i+1}' \coloneqq (h_i'-\beta)^2, \\
    h_0 \coloneqq 1-(\phi+\varepsilon), \qquad h_{i+1} \coloneqq (h_i+\beta)^2.
\end{align*}
We now prove that $h_t, h'_t$ are close to each other, which we will later use to show that $H_t$ is concentrated around its expectation.

\begin{lemma}
    \label{lem:hbound}
    After the penultimate iteration, we have $h'_{t - 1} \leq h_{t - 1} \leq h'_{t - 1}\left( 1 + \frac{3\varepsilon}{4} \right)$. Similarly, at the end of the algorithm, we have $h'_{t} \leq h_{t} \leq h'_{t}\left( 1 + \frac{3\varepsilon}{4}\right)$.
\end{lemma}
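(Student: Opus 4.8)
The plan is to get the lower bound from a one-line monotonicity induction, reduce the upper bound to controlling the additive gap $D_i:=h_i-h'_i\ge 0$ via a single algebraic identity, and then bound $D_i$ using the doubly-exponential decay of $(h'_i)$ together with $h'_{t-1}\ge\frac{9}{32}$ (Claim~\ref{claim:iter}) and the hypothesis $\beta\le\varepsilon^{2.5}/16$. For the lower bound, $h_0=h'_0$ and, inductively, $h_i\ge h'_i\ge 0$ implies $h_i+\beta\ge|h'_i-\beta|$, hence $h_{i+1}=(h_i+\beta)^2\ge(h'_i-\beta)^2=h'_{i+1}$. For the upper bound, expanding the squares gives the identity
\[
h_{i+1}-h'_{i+1}=(h_i+\beta)^2-(h'_i-\beta)^2=\bigl((h_i-h'_i)+2\beta\bigr)(h_i+h'_i),
\]
i.e.\ $D_{i+1}=(D_i+2\beta)(h_i+h'_i)$ with $D_0=0$. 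The sequence $(h'_i)$ is non-increasing (since $h'_{i+1}=(h'_i-\beta)^2\le h'_i$), so $h'_i\le h'_0=1-(\phi+\varepsilon)\le1-\varepsilon$; thus \emph{as long as} $D_i\le\varepsilon$ the growth factor satisfies $h_i+h'_i=2h'_i+D_i<2$, the recursion reads $D_{i+1}<2D_i+4\beta$, and this unrolls to $D_i\le4\beta(2^i-1)$.

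To turn this into a bound of order $\varepsilon^{3/2}$ — and to close the bootstrap — note that $h'_{i+1}\le(h'_i)^2$ forces $h'_i\le(h'_0)^{2^i}$, so by Claim~\ref{claim:iter} and $\ln(1/h'_0)\ge\varepsilon$ (which holds since $h'_0\le1-\varepsilon$),
\[
\tfrac{9}{32}\le h'_{t-1}\le(h'_0)^{2^{t-1}}\ \Longrightarrow\ 2^{t-1}\le\frac{\ln(32/9)}{\ln(1/h'_0)}\le\frac{\ln(32/9)}{\varepsilon}.
\]
Hence $2^i\le2^t=O(1/\varepsilon)$ for all $i\le t$, so $D_i\le4\beta\,2^t=O(\beta/\varepsilon)=O(\varepsilon^{3/2})$ using $\beta\le\varepsilon^{2.5}/16$; since $O(\varepsilon^{3/2})\ll\varepsilon$ the bootstrap is consistent (the bound on $2^t$ uses only Claim~\ref{claim:iter}, not the $D$-induction, so there is no circular reasoning), and in particular $D_{t-1},D_t=O(\varepsilon^{3/2})$.

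Finally, dividing by $h'_{t-1}\ge\frac{9}{32}$ gives $h_{t-1}/h'_{t-1}=1+D_{t-1}/h'_{t-1}=1+O(\varepsilon^{3/2})\le1+\tfrac{3\varepsilon}{4}$. For $h_t/h'_t$ the same scheme applies but needs extra care — this is the main obstacle — because $h'_t$ is only bounded below by $(T-\beta)^2$, a constant that can be as small as $\approx 0.078$, and because passing from $t-1$ to $t$ roughly squares the ratio: dividing the generic bound $D_t=O(\varepsilon^{3/2})$ by this small constant need not stay below $\frac{3\varepsilon}{4}$. The resolution is that a small $h'_t$ forces a small $2^t$ (via $2^t\le\ln(1/h'_t)/\ln(1/h'_0)$), hence a small $D_t=O(\beta\,2^t)$; sharpening $D_t\approx 2\beta\prod_{i<t}(h_i+h'_i)\approx 2\beta\,2^t\,h'_t/h'_0$ via the same doubly-exponential structure makes the $h'_t$ in the denominator cancel, leaving $D_t/h'_t\approx 2\beta\,2^t/h'_0=O(\beta/\varepsilon)=O(\varepsilon^{3/2})\le\tfrac{3\varepsilon}{4}$ (recall $h'_0\ge\frac13$). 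An alternative that streamlines this bookkeeping is to track $\theta_i:=\sqrt{h_i/h'_i}-1\ge0$: from $\sqrt{h_{i+1}}=h_i+\beta$ and $\sqrt{h'_{i+1}}=h'_i-\beta$ one gets the clean recursion $\theta_{i+1}=\bigl(\theta_i(2+\theta_i)+2\beta/h'_i\bigr)/\bigl(1-\beta/h'_i\bigr)$ with $\theta_0=0$, and since $h_i/h'_i=(1+\theta_i)^2$ it suffices to show $\theta_{t-1},\theta_t\le\sqrt{1+\tfrac{3\varepsilon}{4}}-1$.
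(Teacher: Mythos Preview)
Your lower bound and your argument for $h_{t-1}$ are correct, and the route via the additive gap $D_i=h_i-h'_i$ together with your sharper bound $2^{t-1}\le \ln(32/9)/\varepsilon$ (obtained from $h'_{t-1}\le (h'_0)^{2^{t-1}}$) is a genuinely different and in some ways cleaner approach than the paper's multiplicative tracking via $h_i\le (h_{i-1})^2 e^{2\beta/h_{i-1}}$ and $h'_i\ge (h'_{i-1})^2 e^{-3\beta/h'_{i-1}}$ combined with Lemma~\ref{lem:iter}. With your constants one gets $D_{t-1}/h'_{t-1}\le \tfrac{8}{9}\ln(32/9)\,\varepsilon^{1.5}\approx 1.13\,\varepsilon^{1.5}$, which is below $\tfrac{3\varepsilon}{4}$ for all $\varepsilon\le\tfrac16$.

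The $h_t$ case, however, has a real gap. Passing from $t-1$ to $t$ squares the ratio, so from your bound one only gets
\[
\frac{h_t}{h'_t}\;\le\;\Bigl(\frac{h_{t-1}+\beta}{h'_{t-1}-\beta}\Bigr)^2\;\le\;\bigl(1+1.13\,\varepsilon^{1.5}+O(\varepsilon^{2.5})\bigr)^2,
\]
which at $\varepsilon=\tfrac16$ evaluates to roughly $1.17$, exceeding $1+\tfrac{3\varepsilon}{4}=1.125$. Your proposed fixes do not close this. The claim ``a small $h'_t$ forces a small $2^t$'' is backwards: the inequality $2^t\le \ln(1/h'_t)/\ln(1/h'_0)$ gives a \emph{larger} upper bound on $2^t$ when $h'_t$ is small. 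The approximation $D_t\approx 2\beta\prod_{i<t}(h_i+h'_i)$ keeps only the $j=0$ term of the exact expansion $D_t=2\beta\sum_{j=0}^{t-1}\prod_{i=j}^{t-1}(h_i+h'_i)$; the remaining terms contribute comparably and, after division by $h'_t$, still sum to roughly $2\beta\sum_j 2^{t-j}/h'_j$, which with only $h'_j\ge 9/32$ again yields $\approx 2.3\,\varepsilon^{1.5}$ and fails near $\varepsilon=\tfrac16$. The $\theta_i$ recursion is correct but is not carried out and leads to the same sum.

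What is missing is precisely the paper's extra observation that $h'_i\ge 10/19$ for all $i\le t-2$ (since $h'_i\le 10/19$ would force $h'_{i+1}<9/32$, contradicting $i+1\le t-1$). This lets the paper obtain the tighter intermediate bound $h_{t-1}\le h'_{t-1}\,e^{3\varepsilon/10}$, which survives the squaring step to give $h_t\le h'_t\,e^{7\varepsilon/10}\le h'_t(1+\tfrac{3\varepsilon}{4})$. In your framework, using $h_i+h'_i\le 2h'_i+D_i$ with $h'_i\ge 10/19$ for $i\le t-2$ would similarly tighten the recursion enough to push the final constant below $\tfrac34$; without it, the argument does not go through for $\varepsilon$ close to $\tfrac16$.
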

\begin{proof}   
The lower bounds for $h_{t - 1}, h_{t}$ are straightforward to prove using induction on $i$. To do this, we will prove the stronger claim that $h'_i \leq h_i$ for all $i$. Clearly, $h'_0 = h_0$. For the induction step, we have 
\[h'_{i + 1} \coloneqq \left( h'_{i} - \beta \right)^2 \leq \left( h'_{i} + \beta \right)^2 \leq \left( h_{i} + \beta \right)^2 = h_{i + 1}.\]
The upper bounds on $h_{t - 1}$ is trivial for $t \leq 1$ since $h_0 = h'_0$, so we will assume that $t \geq 2$ in the following. 

For any $1 \leq i \leq t$, we have
\[ h_{i} \coloneqq \left( h_{i - 1} + \beta\right)^2 = \left( h_{i - 1}\right)^2 \cdot \left( 1 + \frac{\beta}{h_{i - 1}}\right)^2 \leq \left( h_{i - 1} \right)^2 \cdot e^{2\beta / h_{i - 1}}.\] 
Similarly, notice that
\[ h'_{i} \coloneqq \left( h'_{i - 1} - \beta\right)^2 = \left( h'_{i - 1}\right)^2 \cdot \left( 1 - \frac{\beta}{h'_{i - 1}}\right)^2 \geq \left( h'_{i - 1} \right)^2 \cdot e^{-3\beta / h'_{i - 1}} \]
where in the final step, we have used the fact that $1 - x \geq e^{-3x}$ for all $x \leq 1/2$. This inequality indeed applies here because $h'_{i - 1} \geq 2 \beta$ always holds, as otherwise we have $h'_{t - 1} < 2\beta \leq \varepsilon / 100 < 1/2 - 21\varepsilon / 16 \eqqcolon T$, which is a contradiction. 

For the upper bound on $h_{t - 1}$, notice that by Claim \ref{claim:iter}, $h_{t - 1} \geq h'_{t - 1} \geq 9/32$. Now, we will show that for all $0 \leq i \leq t - 2$, we have $h'_i > 10/19$. Suppose that by contradiction $h'_i \leq 10/19$ for some such $i$. This must mean that $h'_{i + 1} \coloneqq \left( h'_{i} - \beta \right)^2 \leq \left( \frac{10}{19}\right)^2 < \frac{9}{32}$, which is a contradiction. Thus, we have $h_i \geq h'_i \geq 10/19$ for all $0 \leq i \leq t - 2$. Using this and applying the bounds on $h_{t-1}$ and $h'_{t - 1}$ iteratively, we get 
\begin{align*} 
h_{t - 1} \leq \left( h_{t - 2}\right)^2 \cdot e^{(1.9)(2\beta)} \leq &\ldots \leq  h_0^{2^{t - 1}} \cdot e^{(1.9)(2^{t} - 2) \beta}, \\
h'_{t - 1} \geq \left( h'_{t - 2}\right)^2 \cdot e^{-(1.9)(3\beta)} \geq &\ldots \geq  (h'_0)^{2^{t - 1}} \cdot e^{-(2.85)(2^{t} - 2) \beta}.
\end{align*}

Rearranging the second inequality and using the fact that $h_0 = h'_0$ gives us $h_0^{2^{t - 1}} \leq h'_{t - 1} \cdot e^{(2.85)(2^{t} - 2) \beta}$. Hence, we have
\[h_{t - 1} \leq  h_0^{2^{t - 1}} \cdot e^{(1.9)(2^{t} - 2) \beta} \leq h'_{t - 1} \cdot e^{(4.75)\cdot (2^{t}-2) \beta} \leq h'_{t - 1} \cdot e^{(4.75)\cdot \varepsilon^{-1.5} \beta}  \leq h'_{t - 1} \cdot e^{3\varepsilon / 10},\]
where we have used Lemma \ref{lem:iter} for the second inequality and our assumption that $\beta \leq \frac{\varepsilon^{2.5}}{16}$ for the last inequality. Using Lemma \ref{lem:beta} with $x = \varepsilon, \alpha = 3/10, M = 1/6$ then proves the upper bound on $h_{t - 1}$.

For the upper bound on $h_t$, we note that $h_{t - 1} \geq h'_{t - 1} \geq 9/32 \geq 1/4$ by Claim \ref{claim:iter}. Thus, similar to before, we have
\[ h_t \leq \left( h_{t - 1}\right)^2 \cdot e^{8\beta} \leq \left( h'_{t - 1}\cdot e^{3\varepsilon / 10} \right)^2  \cdot e^{8\beta} = (h_{t-1}')^2 \cdot e^{3\varepsilon/5} \cdot e^{8\beta} \leq h'_t \cdot e^{3\varepsilon / 5} \cdot e^{8\beta} \cdot e^{12\beta} \leq h'_t \cdot e^{7 \varepsilon / 10}\]
where we have used our assumption $\beta \leq \varepsilon / 200$ for the last inequality. Using Lemma \ref{lem:beta} with with $x = \varepsilon, \alpha = 7/10, M = 1/6$ then proves the upper bound on $h_{t}$.
\end{proof}

Next, we bound the expected value of $|H_i|$ in each iteration of the algorithm.

\begin{lemma}
    \label{lem:exp2}
    For iteration $1 \leq i \leq t - 1$, $\left( \frac{|H_i|}{n} - \beta\right)^2 \leq \ex{\frac{|H_{i + 1}|}{n} \mid H_i} \leq \left( \frac{|H_i|}{n} + \beta \right)^2$
\end{lemma}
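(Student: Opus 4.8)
The plan is to redo the one-round expectation bound of Lemma~\ref{lem:exp} with the median-of-three replaced by the minimum of two. For $1 \le i \le t-1$ the passage from $H_i$ to $H_{i+1}$ executes lines~9--10 of Algorithm~\ref{alg:2tourn} with $\delta = 1$ (the remark after Claim~\ref{claim:iter}), so every node $v$ sets $x_v \gets \min(x_{u_1(v)}, x_{u_2(v)})$ for two independent uniformly random nodes $u_1(v), u_2(v)$. I would fix such an $i$ and condition on the configuration at the end of iteration $i$ (in particular on $H_i$), on the set $S$ with $|S| \le \beta n$ of nodes the adversary corrupts in iteration $i{+}1$, and on the (arbitrary) values they broadcast. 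Conditionally, the pairs $(u_1(v), u_2(v))_v$ are i.i.d.\ uniform, so by linearity of expectation it suffices to bound $\pr{v \in H_{i+1}}$ for a single $v$, uniformly over all admissible $S$ and broadcast values; this then transfers to $\ex{|H_{i+1}|/n \mid H_i}$.

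Write $H_\bullet = \{v : x_v > \theta\}$, where $\theta$ is the fixed value at the $(\phi+\varepsilon)$-quantile of the original inputs, and note that $v$ receives from $u_j(v)$ the true value $x_{u_j(v)}$ unless $u_j(v) \in S$. For the upper bound: since $v$ adopts the \emph{minimum} of its two messages, $v \in H_{i+1}$ forces both of them above $\theta$, hence $u_1(v), u_2(v) \in H_i \cup S$; by independence $\pr{v \in H_{i+1}} \le (|H_i \cup S|/n)^2 \le (|H_i|/n + \beta)^2$. For the lower bound: if $u_1(v), u_2(v) \in H_i \setminus S$ then $v$ receives two genuine values above $\theta$, so $v \in H_{i+1}$; since $|H_i \setminus S| \ge |H_i| - \beta n$, independence gives $\pr{v \in H_{i+1}} \ge (\max\{0, |H_i|/n - \beta\})^2$. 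Averaging over $v$ and over the adversary's admissible choices yields $(|H_i|/n - \beta)^2 \le \ex{|H_{i+1}|/n \mid H_i} \le (|H_i|/n + \beta)^2$, the lower bound understood with the $\max\{0,\cdot\}$ convention when $|H_i| < \beta n$.

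The algebra is routine; the two steps that actually need care are modelling, not computation. First, one must use that a corrupted node pollutes only the messages \emph{pulled from} it, so that any pulled value failing to exceed $\theta$ certifies a genuine node outside $H_i$ --- this is what lets ``$v \in H_{i+1}$'' be controlled by ``both partners in $H_i$ or corrupted'', exactly paralleling the treatment of adversarial nodes in Lemma~\ref{lem:exp}. Second, one uses that $\min$ is monotone with respect to the \emph{fixed} threshold $\theta$ to pass between ``$v$ high'' and ``both partners high''; with dynamically re-defined ranks this reduction would be false. The $|H_i| < \beta n$ edge case in the lower bound is harmless: for $1 \le i \le t-1$ the quantity $|H_i|/n$ concentrates near $h_i' \ge 9/32 \gg \beta$ by Claim~\ref{claim:iter} and the subsequent concentration lemmas, so it never arises where the bound is used.
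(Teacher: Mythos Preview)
Your proof is correct and takes essentially the same approach as the paper: both argue that, since $\delta=1$ in these iterations, a node $v$ lands in $H_{i+1}$ under the $\min$-of-two update only if both pulled partners lie in $H_i\cup S$ (upper bound) and certainly does so if both lie in $H_i\setminus S$ (lower bound), then use independence of the two pulls to obtain the squared probability bounds. Your version is slightly more explicit---conditioning on the adversary's set $S$ and its broadcast values, and noting the harmless $|H_i|<\beta n$ edge case---but the core argument is identical to the paper's.
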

\begin{proof}
In the iteration $i$, for a given node $v$, consider the two nodes $u_1(v), u_2(v)$ that it picks in this iteration. Recall that for $i \le t-1$, the algorithm indeed always samples two nodes, selecting them independently of each other. Note that for $v$ to be in $H_{i + 1}$, it is sufficient for $u_1(v), u_2(v)$ to both be in $H_i$ and not be adversarially affected in this round. Since each node sampled at time $i$ has a probability of $\frac{|H_i|}{n}$ of being in $H_i$ and has a probability of at most $\beta$ of being adversarially affected, the lower bound then follows. For the upper bound, notice that it is necessary for $u_1(v), u_2(v)$ to both be in either $H_i$ or be adversarially affected.
\end{proof}

Now, we analyze what happens in the final iteration of the algorithm. As we show in the next lemma, if our proxy $h'_{t-1}$ of the fraction of nodes with values in $H_{t-i}$ is correct up to a small additive error $\varepsilon''$, then in expectation the fraction $H_t$ in the next (final) iteration will be almost as close to $T$.

\begin{lemma}
    \label{lem:final}
    Suppose that $(1 - \varepsilon'')h'_{t - 1} \leq \frac{|H_{t - 1}|}{n} \leq (1 + \varepsilon'')h_{t - 1}$ for some $16 \beta \leq \varepsilon'' \leq 1/9$. Then we have $T - \frac{3}{2} \cdot \varepsilon'' \leq \ex{\frac{|H_{t}|}{n} \mid H_{t - 1}} \leq T + \frac{3}{2} \cdot\varepsilon'' + \frac{1}{2} \cdot \varepsilon$. 
\end{lemma}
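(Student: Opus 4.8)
The plan is to exploit that, in the $t$-th (and last) iteration of Algorithm~\ref{alg:2tourn}, the mixing probability $\delta$ is chosen precisely so that the ideal proxy is steered from $h'_{t-1}$ down to exactly $T$, and then to show that $\ex{|H_t|/n \mid H_{t-1}}$ agrees with $T$ up to the two unavoidable errors: the adversary's $\beta$-slack, which, sitting inside a square, only costs $O(\beta)$, and the gap between the true fraction $|H_{t-1}|/n$ and the proxy $h'_{t-1}$, controlled by the hypothesis together with Lemma~\ref{lem:hbound}. First I would record the structure of the last iteration. Since the while-loop runs exactly $t$ times, $h'_{t-1} > T \ge h'_t = (h'_{t-1}-\beta)^2$, so $0 < h'_{t-1}-h'_t$ and $h'_{t-1}-T \le h'_{t-1}-h'_t$, whence $\delta = \frac{h'_{t-1}-T}{h'_{t-1}-h'_t}$; rearranging gives the identity $T = (1-\delta)h'_{t-1} + \delta(h'_{t-1}-\beta)^2$. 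From $h'_t\le T\le \frac12$ we also get $h'_{t-1}\le\sqrt T+\beta\le \frac{1}{\sqrt2}+\beta$, while Claim~\ref{claim:iter} gives $h'_{t-1}\ge\frac{9}{32}$; in this iteration the two-sample (min) branch runs with probability $\delta$ and the ``copy one node'' branch with probability $1-\delta$.

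Next, write $q:=|H_{t-1}|/n$ and set $\psi(x):=(1-\delta)x+\delta x^2$, which is increasing and convex on $[0,1]$ with $\psi(y)-\psi(x)=(y-x)\bigl(1+\delta(x+y-1)\bigr)$. Exactly as in the proof of Lemma~\ref{lem:exp2}, a node can land in $H_t$ only if every value it reads this iteration is ``high'', which forces the sampled node to be in $H_{t-1}$ or adversarially corrupted; by independence of the two samples, $\ex{|H_t|/n\mid H_{t-1}}\le \delta(q+\beta)^2+(1-\delta)(q+\beta)=\psi(q+\beta)$. Symmetrically, a node is surely in $H_t$ if every value it reads comes from a non-corrupted node of $H_{t-1}$, so $\ex{|H_t|/n\mid H_{t-1}}\ge\psi(q-\beta)$; note $q\ge(1-\varepsilon'')\frac{9}{32}>\beta$, so all these evaluations of $\psi$ stay in $[0,1]$. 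Also $\psi(h'_{t-1})-T=\delta\beta(2h'_{t-1}-\beta)\ge0$, a fact both bounds will use.

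For the upper bound I would write $q+\beta=h'_{t-1}+E$ where, by the hypothesis $q\le(1+\varepsilon'')h_{t-1}$ and the bound $h_{t-1}\le h'_{t-1}e^{3\varepsilon/10}$ proved inside Lemma~\ref{lem:hbound}, $0\le E\le\bigl(e^{3\varepsilon/10}(1+\varepsilon'')-1\bigr)h'_{t-1}+\beta$. Then $\psi(q+\beta)-T=\bigl[\psi(h'_{t-1}+E)-\psi(h'_{t-1})\bigr]+\bigl[\psi(h'_{t-1})-T\bigr]$, where the first bracket equals $E\bigl(1+\delta(2h'_{t-1}+E-1)\bigr)\le E(\sqrt2+2\beta+E)$ and the second equals $\delta\beta(2h'_{t-1}-\beta)\le 2\beta h'_{t-1}$; plugging in $h'_{t-1}\le\frac{1}{\sqrt2}+\beta$, $\varepsilon\le\frac16$, $\varepsilon''\le\frac19$, $16\beta\le\varepsilon''$, $\beta\le\frac{\varepsilon}{200}$, and $e^{3\varepsilon/10}-1\le\frac{3\varepsilon}{10}+\frac{9\varepsilon^2}{100}$ (or Lemma~\ref{lem:beta}) bounds the total by $\frac12\varepsilon+\frac32\varepsilon''$ by a routine computation. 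For the lower bound, using only $q\ge(1-\varepsilon'')h'_{t-1}$ (so that no $\varepsilon$ enters), write $q-\beta\ge h'_{t-1}-E'$ with $0\le E'\le\varepsilon''h'_{t-1}+\beta$; since $\psi$ is increasing and $\psi(h'_{t-1})\ge T$, $\ex{|H_t|/n\mid H_{t-1}}\ge\psi(h'_{t-1}-E')\ge\psi(h'_{t-1})-(\sqrt2+2\beta)E'\ge T-\frac32\varepsilon''$, again after the same substitutions.

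The main obstacle is the constant-chasing in the upper bound. The error is genuinely of size $\Theta(\varepsilon'')+\Theta(\varepsilon)$, the $\varepsilon$-term coming solely from the ratio $h_{t-1}/h'_{t-1}$; since this ratio gets multiplied through by $\psi'(h'_{t-1})$, which can be as large as roughly $\sqrt2$ because $h'_{t-1}$ may approach $\frac{1}{\sqrt2}$, the crude estimate $h_{t-1}\le h'_{t-1}(1+\frac34\varepsilon)$ from the statement of Lemma~\ref{lem:hbound} overshoots the target $\frac12\varepsilon$, and one must instead use the sharper exponential bound $h_{t-1}\le h'_{t-1}e^{3\varepsilon/10}$ established in its proof. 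Keeping the precise estimate $h'_{t-1}\le\frac{1}{\sqrt2}+\beta$ (rather than just $h'_{t-1}<1$) is likewise what makes the two sides match to the stated constants.
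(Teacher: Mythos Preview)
Your argument is correct and reaches the same conclusion, but by a genuinely different route than the paper. The paper works multiplicatively: it factors out $(1\pm\varepsilon'')$, $(1\pm 4\beta)$, and $(1+\tfrac{3\varepsilon}{4})$ from the expectation and then collapses the bracket $(1-\delta)h'_{t-1}+\delta h'_t$ to exactly $T$ via the definition of $\delta$; because the $\varepsilon$-error ends up multiplying $T\le\tfrac12$, the stated bound $h_{t-1},h_t\le h'_{t-1},h'_t\cdot(1+\tfrac{3\varepsilon}{4})$ from Lemma~\ref{lem:hbound} already suffices. You instead anchor additively at $\psi(h'_{t-1})=T+\delta\beta(2h'_{t-1}-\beta)$ and control the shift via a Lipschitz-type bound on $\psi$, for which you need both the extra estimate $h'_{t-1}\le\tfrac{1}{\sqrt2}+\beta$ (to cap the slope near $\sqrt2$) and the sharper intermediate inequality $h_{t-1}\le h'_{t-1}e^{3\varepsilon/10}$ from inside the proof of Lemma~\ref{lem:hbound}; you correctly observe that your route cannot close with the lemma's stated $(1+\tfrac{3\varepsilon}{4})$ factor alone. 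The paper's path is cleaner bookkeeping-wise and uses only black-box lemma statements; yours is conceptually transparent (one function, one difference formula) but must dip into an in-proof estimate and establish an additional a~priori upper bound on $h'_{t-1}$.

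One small gap: you assert $E\ge0$, but the hypothesis only gives $q\ge(1-\varepsilon'')h'_{t-1}$, and since $\varepsilon''h'_{t-1}\ge 16\beta\cdot\tfrac{9}{32}>\beta$ one may well have $q+\beta<h'_{t-1}$. This is harmless---when $E<0$, monotonicity of $\psi$ gives $\psi(q+\beta)\le\psi(h'_{t-1})=T+\delta\beta(2h'_{t-1}-\beta)\le T+2\beta\le T+\tfrac{\varepsilon''}{8}$ directly---but the case split should be stated.
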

\begin{proof}
    By analogous reasoning as in the proof of the lower bound in Lemma \ref{lem:exp2}, now taking into account that with probability $1-\delta$ we only sample one node, which is in $H_{t-1}$ with the same probability as above, we have
    \[ \ex{\frac{|H_{t}|}{n} \mid H_{t - 1}} \geq (1 - \delta) \cdot \left(\frac{|H_{t - 1}|}{n} - \beta \right) + \delta \cdot \left(\frac{|H_{t - 1}|}{n} - \beta \right)^2. \]
    Now, recall from Claim \ref{claim:iter} that $h'_{t - 1} \geq \frac{9}{32}$. By assumption, $\varepsilon'' \leq \frac{1}{9}$. Furthermore, by assumption, $\frac{|H_{t - 1}|}{n} \geq (1 - \varepsilon'')h'_{t - 1} \geq \frac{8}{9} \cdot \frac{9}{32} \geq \frac{1}{4}$. Thus,
    \begin{align*}
    \ex{\frac{|H_{t}|}{n} \mid H_{t - 1}} &\geq (1 - \delta) \cdot \frac{|H_{t - 1}|}{n} \cdot \left(1 - \frac{\beta n}{|H_{t - 1}|} \right) + \delta \cdot \left(\frac{|H_{t - 1}|}{n}\right)^2 \cdot \left(1 - \frac{\beta n}{|H_{t - 1}|} \right)^2 \\
    &\geq (1 - 4\beta) \left[(1 - \delta) \cdot \frac{|H_{t - 1}|}{n} + \delta \cdot \left(\frac{|H_{t - 1}|}{n}\right)^2(1 - 4\beta)\right] \\
    &\geq (1 - 4\beta) \left[(1 - \delta) \cdot (1 - \varepsilon'')h'_{t - 1} + \delta \cdot (1 - \varepsilon'')^2(h'_{t - 1})^2(1 - 4\beta)\right] \\
    &\geq (1 - 4\beta)^2 (1 - \varepsilon'')^2\left[(1 - \delta) \cdot h'_{t - 1} + \delta \cdot (h'_{t - 1})^2\right] \\
    &\geq (1 - 8\beta) (1 - \varepsilon'')^2 h_{t-1}'\left[(1 - \delta)  + \delta \cdot h'_{t - 1}\right] \\
    &\geq (1 - \varepsilon'')^3 h_{t-1}'\left[(1 - \delta) + \delta \cdot h'_{t - 1}\right],
    \end{align*}
    where in the last line we have used that $8\beta \le 16\beta \le \varepsilon''$.
    
    Now, using the facts that $\delta = \frac{h'_{t - 1} - T}{h'_{t - 1} - h'_t}$ and $h_t' \leq (h_{t-1})^2$, we have
    \begin{align*}
    \ex{\frac{|H_{t}|}{n} \mid H_{t - 1}} 
    &\geq (1-\varepsilon'')^3 h_{t-1}' \left[\frac{T-h_t'}{h_{t-1}'-h_t'} + \frac{h_{t-1}'-T}{h_{t-1}'-h_t'} \cdot h'_{t - 1}\right] \\
    &= (1-\varepsilon'')^3 \left[\frac{T-h_t'}{h_{t-1}'-h_t'} \cdot h_{t-1}' + \frac{h_{t-1}'-T}{h_{t-1}'-h_t'} \cdot (h'_{t - 1})^2\right] \\
    &\geq (1-\varepsilon'')^3 \left[\frac{T-h_t'}{h_{t-1}'-h_t'} \cdot h_{t-1}' + \frac{h_{t-1}'-T}{h_{t-1}'-h_t'} \cdot h'_{t}\right]
    = (1 - \varepsilon'')^3 \cdot T.
    \end{align*}
    Finally, we use some simple bounds. The first one is $(1 - \varepsilon'')^3 \geq 1 - 3 \varepsilon''$ , which holds for all $0\le \varepsilon'' \le 1$. The second one is $T \leq 1/2$ which holds by definition since $\varepsilon\ge 0$. We obtain
    \begin{align*}
        \ex{\frac{|H_{t}|}{n} \mid H_{t - 1}} 
        \geq (1 - 3\varepsilon'') \cdot T \geq T - \frac{3}{2} \cdot \varepsilon'',
    \end{align*}
    which concludes the proof of the lower bound.

    For the upper bound, we have
    \[ \ex{\frac{|H_{t}|}{n} \mid H_{t - 1}} \leq (1 - \delta) \cdot \left(\frac{|H_{t - 1}|}{n} + \beta \right) + \delta \cdot \left(\frac{|H_{t - 1}|}{n} + \beta \right)^2. \]
    Furthermore, using our assumption that $\frac{|H_{t - 1}|}{n} \leq (1 + \varepsilon'')h_{t - 1}$, we have 
    \begin{align*}
    \ex{\frac{|H_{t}|}{n} \mid H_{t - 1}} &\leq (1 - \delta) \cdot ((1 + \varepsilon'')h_{t - 1} + \beta) + \delta \cdot ((1 + \varepsilon'')h_{t - 1} + \beta)^2 \\
        &\leq (1 + \varepsilon'')^2 \cdot \left[ (1 - \delta) \cdot (h_{t - 1} + \beta) + \delta \cdot (h_{t - 1} + \beta)^2 \right]\\
        &= (1 + \varepsilon'')^2 \cdot \left[ (1 - \delta) \cdot (h_{t - 1} + \beta) + \delta \cdot h_t \right]\\
    \end{align*}
    where we have used the fact that $h_t \coloneqq (h_{t - 1} + \beta)^2$. 
    Now, notice that by Claim \ref{claim:iter} and Lemma \ref{lem:hbound}, we have $h_{t - 1} \geq h'_{t - 1} \ge 9/32 > 1/4$. Thus, we have 
    \[h_{t - 1} + \beta \leq h_{t - 1} + 4h_{t - 1}\beta = h_{t - 1} \cdot (1 + 4\beta).\]
    Using this we obtain the following bound
    \begin{align*}
       \ex{\frac{|H_{t}|}{n} \mid H_{t - 1}} &\leq (1 + \varepsilon'')^2 \cdot (1 + 4\beta) \cdot \left[ (1 - \delta) \cdot h_{t - 1} + \delta \cdot h_t \right] \\
       &\leq (1 + \varepsilon'')^2 \cdot (1 + 4\beta) \cdot \left[ (1 - \delta) \cdot h'_{t - 1} + \delta \cdot h'_t \right] \cdot \left( 1 + \frac{3 \varepsilon}{4} \right) \\
       &\leq (1 + \varepsilon'')^2 \cdot (1 + 4\beta) \cdot  T \cdot \left( 1 + \frac{3\varepsilon}{4} \right).
    \end{align*}
    where in the penultimate step we have used Lemma \ref{lem:hbound} and in the final step we have used the fact that $\delta = \frac{h'_{t - 1} - T}{h'_{t - 1} - h'_t}$. Now, using the assumption that $\varepsilon'' \geq 16 \beta$, we get
    \[ (1 + \varepsilon'')^2 \cdot (1 + 4\beta) \leq (1 + \varepsilon'')^2 \cdot (1 + \varepsilon'' / 4) \leq 1 + (9\varepsilon'' + 6 (\varepsilon'')^2 + (\varepsilon'')^3)/4 \leq 1 + 3 \varepsilon'', \]
    where in the final step we have used the the assumption that $\varepsilon'' \leq 1/9$. Thus, we have
    \begin{align*}
        \ex{\frac{|H_{t}|}{n} \mid H_{t - 1}} &\leq (1 + 3\varepsilon'') \cdot  T \cdot \left( 1 + \frac{3\varepsilon}{4} \right)
        = T + 3\varepsilon'' \cdot  T + (1 + 3\varepsilon'') \cdot  T \cdot \frac{3\varepsilon}{4}.
    \end{align*}
    Finally, we use the inequalities $T \leq 1/2$ and $\varepsilon'' \leq 1/9$ to obtain
    \begin{align*}
        \ex{\frac{|H_{t}|}{n} \mid H_{t - 1}} &\leq T + 3\varepsilon'' \cdot  \frac{1}{2} + \left(1 + 3 \cdot \frac{1}{9} \right) \cdot \frac{1}{2} \cdot \frac{3\varepsilon}{4}
        = T + \frac{3}{2} \cdot\varepsilon'' + \frac{1}{2} \cdot \varepsilon.
    \end{align*}
    
\end{proof}

Next, we will show that the cumulative error is not too large.

\begin{lemma}
    \label{lem:cumerr}
    Let $\varepsilon' = \frac{\varepsilon}{9 \cdot 2^{t + 1}}$, where $\varepsilon = \Omega \left(\left(\frac{ \log n}{n} \right)^{1/5} \right)$. W.h.p.\ for iteration $0 \leq i < t$, we have $\frac{1}{4} \leq (1 - \varepsilon')^{2^{i + 1} - 1}h'_{i} \leq \frac{|H_i|}{n} \leq (1 + \varepsilon')^{2^{i + 1} - 1}h_{i}$.
\end{lemma}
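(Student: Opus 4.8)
The plan is to establish all three inequalities simultaneously by induction on $i$. Write $a_i\coloneqq\frac{|H_i|}{n}$ and $e_i\coloneqq 2^{i+1}-1$, so that $e_0=1$ and the exponents obey $e_{i+1}=2e_i+1$. I would first collect the relevant deterministic facts. By Lemma~\ref{lem:iter}, $t\le\frac32\log_2(1/\varepsilon)$, hence $2^t\le\varepsilon^{-3/2}$; since $\varepsilon'=\frac{\varepsilon}{18\cdot 2^t}$ this yields $e_i\varepsilon'\le 2^t\varepsilon'=\frac{\varepsilon}{18}$ for every $i<t$, so $(1-\varepsilon')^{e_i}\ge 1-\frac{\varepsilon}{18}\ge\frac89$ and $(1+\varepsilon')^{e_i}\le e^{\varepsilon/18}\le\frac{10}{9}$. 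It also gives $\varepsilon'\ge\frac{\varepsilon^{5/2}}{18}=\Omega\big((\log n/n)^{1/2}\big)$ (with a large hidden constant), while $\beta\varepsilon\le\frac{\varepsilon^{7/2}}{16}$ is a factor $\Theta(\varepsilon)$ below $\varepsilon'$ — this gap is what drives the argument. Finally, $h'_i\ge\frac9{32}$ for every $i<t$ (by Claim~\ref{claim:iter} for $i=t-1$, and $h'_i>\frac{10}{19}$ for $i\le t-2$ from the proof of Lemma~\ref{lem:hbound}), so $(1-\varepsilon')^{e_i}h'_i\ge\frac89\cdot\frac9{32}=\frac14$; this proves the first inequality unconditionally, and whenever the other two also hold it forces $a_i\ge\frac14$, $a_i-\beta\ge\frac15$, and hence by Lemma~\ref{lem:exp2} the conditional mean $\mu_i\coloneqq\ex{|H_{i+1}|\mid H_i}$ satisfies $\mu_i/n\ge(a_i-\beta)^2\ge\frac1{25}$, i.e.\ $\mu_i=\Omega(n)$. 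The base case $i=0$ is immediate since $a_0=h_0=h'_0$ and $e_0=1$.

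For the inductive step, fix $i$ with $i+1<t$ and assume the three inequalities for $i$. Conditioning on $H_i$ (and on the adversary's corruption set for that round, which is fixed before the samples are drawn), $|H_{i+1}|$ is a sum of independent indicators with mean $\mu_i$. The upper bound is handled as in the proof of Lemma~\ref{lem:oneround-concentration}: by Lemma~\ref{lem:exp2} and $(1+\varepsilon')^{e_i}\ge1$,
\[ \frac{\mu_i}{n}\le\big((1+\varepsilon')^{e_i}h_i+\beta\big)^2\le\big((1+\varepsilon')^{e_i}(h_i+\beta)\big)^2=(1+\varepsilon')^{2e_i}h_{i+1},\]
and an upper-tail Chernoff bound (cf.\ Lemma~\ref{lem:chernplus}) with deviation $\varepsilon'$, legitimate with failure probability $n^{-\Omega(1)}$ because $\mu_i=\Omega(n)$ forces $\varepsilon'^2\mu_i=\Omega(\varepsilon^5n)=\Omega(\log n)$, gives $a_{i+1}\le(1+\varepsilon')\mu_i/n\le(1+\varepsilon')^{2e_i+1}h_{i+1}=(1+\varepsilon')^{e_{i+1}}h_{i+1}$.

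The lower bound is the step I expect to be the real obstacle. Here $\mu_i/n\ge(a_i-\beta)^2\ge\big((1-\varepsilon')^{e_i}h'_i-\beta\big)^2$, but the naive factoring fails since $(1-\varepsilon')^{e_i}h'_i-\beta<(1-\varepsilon')^{e_i}(h'_i-\beta)$. Instead I would write $(1-\varepsilon')^{e_i}h'_i-\beta=(1-\varepsilon')^{e_i}(h'_i-\beta)-\beta\big(1-(1-\varepsilon')^{e_i}\big)\ge(1-\varepsilon')^{e_i}(h'_i-\beta)-\beta e_i\varepsilon'$ and note that the correction $\beta e_i\varepsilon'\le\frac{\beta\varepsilon}{18}$ is of order $\beta\varepsilon$, a factor $\Theta(\varepsilon)$ smaller than $\varepsilon'$; since $(1-\varepsilon')^{e_i}(h'_i-\beta)\ge\frac89\cdot\frac15$, this rewrites as $(1-\varepsilon')^{e_i}h'_i-\beta\ge(1-c\beta\varepsilon)(1-\varepsilon')^{e_i}(h'_i-\beta)$ for an absolute constant $c$, and hence, with $h'_{i+1}=(h'_i-\beta)^2$,
\[ \frac{\mu_i}{n}\ge(1-c\beta\varepsilon)^2(1-\varepsilon')^{2e_i}h'_{i+1}.\]
Now I would apply a lower-tail Chernoff bound with the \emph{smaller} deviation $\varepsilon'/2$ (still legitimate since $(\varepsilon'/2)^2\mu_i=\Omega(\log n)$) to get $a_{i+1}\ge(1-\varepsilon'/2)\mu_i/n$ w.h.p., and conclude using $(1-\varepsilon'/2)(1-c\beta\varepsilon)^2\ge1-\varepsilon'/2-2c\beta\varepsilon\ge1-\varepsilon'$, which holds because $\beta\varepsilon\le\frac{\varepsilon^{7/2}}{16}$, $\varepsilon'\ge\frac{\varepsilon^{5/2}}{18}$ and $\varepsilon\le\frac16$. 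Thus $a_{i+1}\ge(1-\varepsilon')^{2e_i+1}h'_{i+1}=(1-\varepsilon')^{e_{i+1}}h'_{i+1}$.

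Finally, since there are only $t=O(\log(1/\varepsilon))=O(\log n)$ iterations, each contributing two Chernoff events of failure probability $n^{-\Omega(1)}$, a union bound gives all three bounds simultaneously w.h.p. The crux, as indicated, is the lower bound: the recursion $e_{i+1}=2e_i+1$ affords only a single factor $(1-\varepsilon')$ of slack per iteration, which must simultaneously absorb the adversarial $\beta$-correction and the Chernoff fluctuation. This works precisely because $\varepsilon'=\Theta(\varepsilon^{5/2}2^{-t})$ sits in the narrow window where it dominates $\beta\varepsilon$ (so the $\beta$-correction eats only an $O(\beta\varepsilon/\varepsilon')=O(\varepsilon)$ fraction of the per-step budget) while still exceeding $\sqrt{\log n/n}$ (so a deviation of order $\varepsilon'$, indeed $\varepsilon'/2$, already suffices for concentration); the large hidden constant in the assumption $\varepsilon=\Omega((\log n/n)^{1/5})$ is exactly what ensures the latter.
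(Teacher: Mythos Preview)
Your proposal is correct and follows essentially the same approach as the paper: induction on $i$, Chernoff with deviation $\Theta(\varepsilon')$ (the paper uses $\varepsilon'/2$ on both sides), and in the lower bound absorbing the adversarial $\beta$-correction into the single per-step $(1-\varepsilon')$ factor via the key observation that $\beta\varepsilon=O(\varepsilon)\cdot\varepsilon'$; your additive bookkeeping $(1-\varepsilon')^{e_i}h'_i-\beta=(1-\varepsilon')^{e_i}(h'_i-\beta)-\beta(1-(1-\varepsilon')^{e_i})$ and the paper's multiplicative version $(1-\varepsilon')^{e_i}(h'_i-(1-\varepsilon')^{-e_i}\beta)$ are equivalent rearrangements. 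The only slip is the base case: $a_0=|H_0|/n$ need not equal $h_0=1-(\phi+\varepsilon)$ exactly but only up to $1/n$, which the paper absorbs using $1/n\le\varepsilon'$.
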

\begin{proof}
    First, notice that for all $0 \leq i < t$, we have
    \[ (1 - \varepsilon')^{2^{i + 1} - 1}h'_{i} \geq (1 - \varepsilon')^{2^{t} - 1}h'_{t - 1}.\]
    Now, using the fact that $(1 - \varepsilon') \geq e^{-3\varepsilon' /2 }$ (because $1 - x \geq e^{-3x/2}$ for all $x \leq 1/2$), and the fact that $h'_{t - 1} \geq \frac{9}{32}$ (Claim \ref{claim:iter}) we have
    \[ (1 - \varepsilon')^{2^{t} - 1}h'_{t - 1} \geq e^{-3\varepsilon' \cdot 2^{t - 1}} \cdot \frac{9}{32} \geq e^{-\varepsilon/ 12} \cdot \frac{9}{32} \geq e^{-1/72} \cdot \frac{9}{32} \geq \frac{1}{4}. \]
    Hence, we must have $(1 - \varepsilon')^{2^{i + 1} - 1}h'_{i} \geq 1/4$ for all $0 \leq i < t$, which proves the first inequality in the desired claim.

    We will prove the bounds on $\frac{|H_i|}{n}$ by induction on $i$. Note that by Lemma \ref{lem:iter}, we have $2^{t} \leq \varepsilon^{-1.5}$ and hence
    \begin{align}\label{eq:eps'-lower-bound}
        \varepsilon' = \frac{\varepsilon}{9 \cdot 2^{t+1}} \geq \frac{\varepsilon^{2.5}}{18} = \Omega \left(\frac{(\log n)^{1/2}}{n^{1/2}} \right).
    \end{align}
    Hence the base case holds since
    \[ (1 - \varepsilon')h'_{0} \leq (1 - 1/n)h'_{0} \leq \frac{|H_0|}{n} \leq (1 + 1/n)h_{0} \leq (1 + \varepsilon')h_{0}.\]
    For the induction step, suppose that the claim is true for some $i \geq 0$. We have, using Lemma \ref{lem:exp2} in the first step, the lower bound $\frac{|H_i|}{n}\ge\frac{1}{4}$ in the second and third steps,
    \[\ex{\frac{|H_{i + 1}|}{n} \mid H_i} \geq \left( \frac{|H_i|}{n} - \beta \right)^2 \geq \left( \frac{|H_i|}{n} \right)^2(1 - 4\beta)^2 \geq \left( \frac{1}{4} \right)^2(1 - 4\beta)^2 = \Omega(1).\]
    
    Using \eqref{eq:eps'-lower-bound} and the Chernoff bounds, we have 
    \[ \pr{\left|\frac{|H_{i + 1}|}{n}-\ex{\frac{|H_{i + 1}|}{n} \mid H_i}\right| > \frac{\varepsilon' }{2}\cdot \ex{\frac{|H_{i + 1}|}{n} \mid H_i}} \leq 2 \cdot e^{-\frac{\varepsilon'^2}{12}\ex{|H_{i + 1}| \mid H_i}} \leq e^{- \Omega(\frac{\log n}{n} \cdot n)} = n^{- \Omega(1)} .\]
    Thus, w.h.p.\ $\left|\frac{|H_{i + 1}|}{n}-\ex{\frac{|H_{i + 1}|}{n} \mid H_i}\right| \leq \frac{\varepsilon' }{2}\cdot \ex{\frac{|H_{i + 1}|}{n} \mid H_i}$.

    Now, we use Lemma \ref{lem:exp2} and the induction hypothesis to obtain w.h.p.\
    \begin{align*}
       \frac{|H_{i + 1}|}{n} &\geq (1 - \varepsilon' / 2) \left( \frac{|H_i|}{n} - \beta\right)^2 \\
       &\geq (1 - \varepsilon' / 2) \left( (1 - \varepsilon')^{2^{i + 1} - 1} h'_i - \beta\right)^2\\
       &= (1 - \varepsilon' / 2) (1 - \varepsilon')^{2^{i + 2} - 2} \left( h'_i - (1 - \varepsilon')^{-2^{i + 1} - 1} \beta\right)^2\\
       &\geq (1 - \varepsilon' / 2) (1 - \varepsilon')^{2^{i + 2} - 2} \left( h'_i - e^{3\varepsilon' 2^{i}} \beta \right)^2\\
       &\geq (1 - \varepsilon' / 2) (1 - \varepsilon')^{2^{i + 2} - 2} \left( h'_i - e^{3\varepsilon' 2^{t - 1}} \beta \right)^2\\
       &\geq (1 - \varepsilon' / 2) (1 - \varepsilon')^{2^{i + 2} - 2} \left( h'_i - e^{\varepsilon / 12} \beta \right)^2,
    \end{align*}
    where in the fourth step we have used the fact that $(1 - \varepsilon') \geq e^{-3\varepsilon' /2 }$, and in the last two steps we have used the fact that $\varepsilon' 2^i \leq \varepsilon' 2^{t-1} = \varepsilon/36$. Furthermore,
    \begin{align*}
        \frac{|H_{i + 1}|}{n} &\geq (1 - \varepsilon' / 2) (1 - \varepsilon')^{2^{i + 2} - 2} \left( h'_i - e^{\varepsilon / 12} \beta \right)^2 \\
        &\geq (1 - \varepsilon' / 2) (1 - \varepsilon')^{2^{i + 2} - 2} \left( h'_i - (1 + \varepsilon / 8) \beta \right)^2\\
        &= (1 - \varepsilon' / 2) (1 - \varepsilon')^{2^{i + 2} - 2}  \left( h'_i -  \beta \right)^2 \left( 1 - \frac{\varepsilon \beta}{8(h'_i -  \beta)} \right)^2\\
        &\geq (1 - \varepsilon' / 2) (1 - \varepsilon')^{2^{i + 2} - 2} \left( h'_i -  \beta \right)^2 \left( 1 - \varepsilon \beta\right)^2\\
        &\geq (1 - \varepsilon' / 2) (1 - \varepsilon')^{2^{i + 2} - 2}  \left( h'_i -  \beta \right)^2 \left( 1 - 2\varepsilon \beta \right)\\
        &\geq (1 - \varepsilon' / 2) (1 - \varepsilon')^{2^{i + 2} - 2}  \left( h'_i -  \beta \right)^2 \left( 1 - \beta / 3 \right)\\
        &\geq (1 - \varepsilon' / 2) (1 - \varepsilon')^{2^{i + 2} - 2}  \left( h'_i -  \beta \right)^2 \left( 1 - \varepsilon' / 4 \right)\\
        &\geq (1 - \varepsilon')^{2^{i + 2} - 1}  \left( h'_i -  \beta \right)^2\\
        &= (1 - \varepsilon')^{2^{i + 2} - 1} h'_{i + 1},
    \end{align*}
    where in the seventh step we used \eqref{eq:eps'-lower-bound} and the assumption that $\beta \le \frac{\varepsilon^{2.5}}{16}$ to get $\varepsilon' \geq \frac{\varepsilon^{2.5}}{18} \geq \frac{8 \beta}{9}$.
    Similarly, we have
    \begin{align*}
       \frac{|H_{i + 1}|}{n}  &\leq (1 + \varepsilon' / 2) \ex{\frac{|H_{i + 1}|}{n} \mid H_i} \\
       &\leq (1 + \varepsilon' / 2) \left( \frac{|H_i|}{n} + \beta\right)^2 \\
       &\leq (1 + \varepsilon' / 2) \left( (1 + \varepsilon')^{2^{i + 1} - 1} h_i + \beta\right)^2\\
       &\leq (1 + \varepsilon' / 2) (1 + \varepsilon')^{2^{i + 2} - 2} \left( h_i +  \beta\right)^2\\
        &\leq (1 + \varepsilon')^{2^{i + 2} - 1}  \left( h_i +  \beta \right)^2\\
        &= (1 + \varepsilon')^{2^{i + 2} - 1} h_{i + 1}.
    \end{align*}
\end{proof}

Finally, we can bound the cumulative deviation at the end of the algorithm.

\begin{lemma}
    \label{lem:Hbound}
    Let $\varepsilon = \Omega \left(\left(\frac{\log n}{n} \right)^{1/5} \right)$. At the end of the algorithm, w.h.p.\ $\frac{1}{2} - \frac{15\varepsilon}{8} \leq \frac{|H_t|}{n} \leq \frac{1}{2} - \frac{\varepsilon}{8}$.
\end{lemma}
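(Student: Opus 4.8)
The plan is to reduce the statement, which concerns the \emph{final} iteration $H_{t-1}\to H_t$, to a single application of Lemma~\ref{lem:final}, fed with the cumulative-error bound of Lemma~\ref{lem:cumerr} for the penultimate iteration $i=t-1$, and then to add one Chernoff step for that last transition. Observe first that the terminating sequence $h'_i$ is deterministic, hence so are $t$, the $h_i$, the $h'_i$ and $\varepsilon' = \frac{\varepsilon}{9\cdot 2^{t+1}}$; also $t\ge 1$, because $h'_0 = 1-(\phi+\varepsilon)\ge \frac12-\varepsilon > T$ forces the while-loop to execute at least once.

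The first step is to turn the accumulated multiplicative drift into additive slack. Lemma~\ref{lem:cumerr} applied with $i=t-1$ gives, w.h.p.,
\[
(1-\varepsilon')^{2^t-1}\,h'_{t-1}\ \le\ \frac{|H_{t-1}|}{n}\ \le\ (1+\varepsilon')^{2^t-1}\,h_{t-1}.
\]
Since $\varepsilon'\cdot(2^t-1)\le \varepsilon'\cdot 2^t = \varepsilon/18$, both $(1+\varepsilon')^{2^t-1}-1$ and $1-(1-\varepsilon')^{2^t-1}$ are at most $\varepsilon/8$. Together with $16\beta\le \varepsilon^{2.5}\le \varepsilon/8$ (using $\varepsilon\le\frac16$) and $\varepsilon/8\le\frac19$, this shows that the hypotheses of Lemma~\ref{lem:final} hold with $\varepsilon'' = \varepsilon/8$.

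Next I would plug this in. Applying Lemma~\ref{lem:final} with $\varepsilon''=\varepsilon/8$ and recalling $T=\frac12-\frac{21\varepsilon}{16}$ yields, on the (w.h.p.) event of the first step,
\[
\frac12-\frac{3\varepsilon}{2}\ =\ T-\frac{3\varepsilon}{16}\ \le\ \ex{\frac{|H_t|}{n}\mid H_{t-1}}\ \le\ T+\frac{3\varepsilon}{16}+\frac{\varepsilon}{2}\ =\ \frac12-\frac{5\varepsilon}{8}.
\]
Finally, conditioned on $H_{t-1}$ and on the round-$t$ corruption pattern (which the adversary fixes before the random partner choices), the events $\{v\in H_t\}$ are mutually independent across $v$, so $|H_t|$ is a sum of independent Bernoulli variables; a Chernoff/Hoeffding bound, using $\varepsilon^2 n\ge \varepsilon^5 n = \Omega(\log n)$ from $\varepsilon=\Omega((\log n/n)^{1/5})$, shows that w.h.p.\ $\frac{|H_t|}{n}$ lies within $\frac{3\varepsilon}{8}$ of its conditional expectation. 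A union bound with the event from the first step then gives, w.h.p.,
\[
\frac12-\frac{15\varepsilon}{8}\ \le\ \frac{|H_t|}{n}\ \le\ \frac12-\frac{\varepsilon}{8},
\]
since $\frac12-\frac{3\varepsilon}{2}-\frac{3\varepsilon}{8}=\frac12-\frac{15\varepsilon}{8}$ and $\frac12-\frac{5\varepsilon}{8}+\frac{3\varepsilon}{8}=\frac12-\frac{\varepsilon}{4}\le\frac12-\frac{\varepsilon}{8}$.

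The main obstacle is the first step: one must choose the slack $\varepsilon''$ so that it simultaneously absorbs the factor $(1\pm\varepsilon')^{2^t-1}$ and the gap $h_{t-1}\le h'_{t-1}(1+\frac{3\varepsilon}{4})$ (implicit in Lemma~\ref{lem:final} through Lemma~\ref{lem:hbound}), while being (i) small enough that after Lemma~\ref{lem:final} the quantity $T+\frac32\varepsilon''+\frac12\varepsilon$ stays safely below $\frac12$ and $\varepsilon''\le\frac19$, and (ii) large enough to dominate $16\beta$. This is precisely why the hypotheses pin $\varepsilon'$ to order $\varepsilon^{2.5}$ (via $t\le 1.5\log_2\frac1\varepsilon$ from Lemma~\ref{lem:iter}) and impose $\beta\le\frac{\varepsilon^{2.5}}{16}$; everything else is bookkeeping with the constants $\frac{21}{16},\frac32,\frac12$ coming out of $T$ and Lemma~\ref{lem:final}.
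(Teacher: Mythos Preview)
Your proof is correct and follows essentially the same route as the paper: apply Lemma~\ref{lem:cumerr} at $i=t-1$, convert the multiplicative drift $(1\pm\varepsilon')^{2^t-1}$ into an additive slack $\varepsilon''$, feed this into Lemma~\ref{lem:final}, and finish with a Chernoff step for the last transition. The only cosmetic differences are that you choose $\varepsilon''=\varepsilon/8$ where the paper uses $\varepsilon''=\varepsilon/6$, and you do the final concentration additively ($\pm\tfrac{3\varepsilon}{8}$) whereas the paper does it multiplicatively ($(1\pm\varepsilon')$); both lead to the same conclusion.
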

\begin{proof}
    Let $\varepsilon' = \frac{\varepsilon}{9\cdot 2^{t+1}}$. By Lemma \ref{lem:cumerr}, we have w.h.p.
    \begin{align}\label{eq:H_{t-1}-bounds}
        (1 - \varepsilon')^{2^t - 1} h'_{t - 1} \leq \frac{|H_{t-1}|}{n} \leq (1 + \varepsilon')^{2^t - 1} h_{t - 1}.
    \end{align} 
    Let $\varepsilon'' \coloneqq 3 \cdot 2^{t}\varepsilon' = \frac{\varepsilon}{6}$. Notice that, by the choice of $\varepsilon''$, $16 \beta \leq \frac{\varepsilon}{6} = \varepsilon''$, and $\varepsilon'' = \frac{\varepsilon}{6} \leq \frac{1}{9}$. Since $2^{t} \varepsilon' = \frac{\varepsilon}{9} \leq \frac{1}{2}$, we have $(1 + \varepsilon')^{2^t - 1} \leq 1 + (3 \cdot (2^{t} - 1))\varepsilon' \leq 1 + \varepsilon''$ and $(1 - \varepsilon')^{2^t - 1} \geq 1 - (3 \cdot (2^{t} - 1))\varepsilon' \geq 1 - \varepsilon''$. Combining this with \eqref{eq:H_{t-1}-bounds} and Lemma \ref{lem:final} yields
    \[ T - 9 \cdot 2^{t - 1} \varepsilon' \leq \ex{\frac{|H_t|}{n} \mid H_{t - 1}} \leq T + 9 \cdot 2^{t - 1}\varepsilon' + \frac{\varepsilon}{2}.\]
    In particular $\ex{\frac{|H_t|}{n} \mid H_{t - 1}} = \Theta(1)$ (i.e.\ $\ex{\frac{|H_t|}{n} \mid H_{t - 1}}$ is bounded away from $0$). Thus, by using Chernoff bounds with $\varepsilon'$, we get w.h.p.\
    \[ (1 - \varepsilon')(T - 9 \cdot 2^{t - 1} \varepsilon') \leq \frac{|H_t|}{n} \leq (1 + \varepsilon') \left(T + 9 \cdot 2^{t - 1}\varepsilon' +  \frac{\varepsilon}{2} \right). \]
    Notice that
    \[ (1 - \varepsilon')(T - 9 \cdot 2^{t - 1} \varepsilon') \geq  T - 11 \cdot 2^{t - 1} \varepsilon' \geq T - \frac{11\varepsilon}{36} \geq \frac{1}{2} - \frac{15\varepsilon}{8},\]
    and
    \[ (1 + \varepsilon') \left(T + 9 \cdot 2^{t - 1}\varepsilon' + \frac{\varepsilon}{2} \right) \leq T + 13 \cdot 2^{t - 1}\varepsilon' + \frac{\varepsilon}{2} \leq T + \frac{13\varepsilon}{36} + \frac{\varepsilon}{2} \leq \frac{1}{2} - \frac{\varepsilon}{8}.\]
    which together imply the required lemma.
\end{proof}

In the rest of this section, we aim to show that $M_i$ does not change much over the course of the algorithm. First, we investigate the change in the value of $|M_i|$ for the first $t - 1$ iterations of the algorithm.

\begin{lemma}
    \label{lem:exp_mid}
    For $0 \leq i < t - 1$, suppose that $\frac{|H_i|}{n} \geq \frac{1}{2}$ and $\frac{|M_i|}{n} \geq \frac{|M_0|}{n}$, then $\ex{\frac{|M_{i + 1}|}{n} \mid M_i, H_i} \geq \frac{|M_i|}{n} \cdot 
 \left(1 + \frac{7\varepsilon}{4} \right)$.
\end{lemma}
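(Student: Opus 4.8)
The plan is to bound, for an arbitrary node $v$, the conditional probability that $v \in M_{i+1}$, and then sum over all $v$ by linearity of expectation. Since $i+1 \le t-1$, in iteration $i+1$ node $v$ samples two independent uniformly random nodes $u_1(v), u_2(v)$ and sets $x_v \gets \min(x_{u_1(v)}, x_{u_2(v)})$ (lines~9--10, which are executed with probability~$1$ throughout the first $t-1$ iterations, as noted after Claim~\ref{claim:iter}). Mirroring the sufficient-condition reasoning in the proof of Lemma~\ref{lem:exp2}: if both $u_1(v)$ and $u_2(v)$ are honest (not adversarially affected) in this round, both lie in $M_i \cup H_i$, and at least one of them lies in $M_i$, then $v \in M_{i+1}$, because the minimum of their two values is not ``low'' (neither value is) and not ``high'' (at least one value lies in $M_i$), so its quantile lies in $[\phi - \varepsilon, \phi + \varepsilon]$.

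Write $m := |M_i|/n$ and $h := |H_i|/n$, and let $\mathcal G$ be the set of at least $(1-\beta)n$ honest nodes in round $i+1$. Then, conditioning also on the adversary's choice of corrupted nodes, the probability of the event above is at least
\[
\left(\frac{|(M_i\cup H_i)\cap \mathcal G|}{n}\right)^2 - \left(\frac{|H_i\cap \mathcal G|}{n}\right)^2 \;\ge\; (m+h-\beta)^2 - h^2 \;=\; (m-\beta)(m+2h-\beta),
\]
and the final bound no longer depends on which nodes are corrupted. Taking expectations yields $\ex{\frac{|M_{i+1}|}{n} \mid M_i, H_i} \ge (m-\beta)(m+2h-\beta)$.

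It remains to verify $(m-\beta)(m+2h-\beta) \ge m\left(1 + \tfrac{7\varepsilon}{4}\right)$ under the hypotheses $h \ge \tfrac12$ and $m \ge |M_0|/n$. Using $2h \ge 1$ and expanding, this reduces to the one-variable inequality $(m-\beta)^2 - \beta \ge \tfrac{7}{4}m\varepsilon$. Since $|M_0|/n \ge 2\varepsilon - \tfrac1n$ and the left-hand side is increasing in $m$ for $m \ge 2\varepsilon$ (its derivative $2(m-\beta) - \tfrac{7\varepsilon}{4}$ is positive there, as $\beta \le \tfrac{\varepsilon}{200}$), it suffices to check $m = 2\varepsilon$, where the inequality reads $(2\varepsilon-\beta)^2 - \tfrac{7}{2}\varepsilon^2 \ge \beta$; this holds with a margin of order $\varepsilon^2$ because $\beta \le \tfrac{\varepsilon^{2.5}}{16} \le \tfrac{\varepsilon^2}{16}$ gives $(2\varepsilon-\beta)^2 \ge 3.9\,\varepsilon^2$, and the $\tfrac1n$ rounding loss on $|M_0|/n$ is negligible since $\varepsilon = \Omega\!\left((\log n/n)^{1/5}\right)$.

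The main obstacle is the first step: pinning down the right adversary-proof sufficient condition for $v\in M_{i+1}$. The key observation is that a corrupted sampled node can always be made to look ``low'' and hence eject $v$ from $M_{i+1}$, so the lower bound must demand that \emph{both} sampled nodes be honest; from there one extracts the clean factor $(m-\beta)(m+2h-\beta)$, after which the remaining estimate is an elementary inequality comfortably implied by the assumed bounds on $\beta$ and $\varepsilon$.
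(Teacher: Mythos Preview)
Your proof is correct and follows essentially the same approach as the paper: identify the same sufficient condition for $v\in M_{i+1}$ (both sampled nodes honest, both in $M_i\cup H_i$, at least one in $M_i$), lower-bound its probability, and then verify the resulting algebraic inequality using $h\ge 1/2$, $m\ge 2\varepsilon$, and $\beta\le\varepsilon^{2.5}/16$. Your intermediate bound $(m-\beta)(m+2h-\beta)$ is in fact slightly sharper than the paper's $(m-\beta)(m+2h-3\beta)$, because you account for the adversary corrupting at most $\beta n$ nodes \emph{total} in $M_i\cup H_i$ rather than $\beta n$ in each of $M_i$ and $H_i$ separately; and your final check via the one-variable inequality $(m-\beta)^2-\beta\ge\tfrac{7}{4}m\varepsilon$ at the boundary $m=2\varepsilon$ is a clean alternative to the paper's direct algebraic chain.
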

\begin{proof}
    In the iteration $i$, for a given node $v$, consider the two nodes $u_1(v), u_2(v)$ that it picks in this iteration. Note that for $v$ to be in $M_{i + 1}$, it is sufficient that $u_1(v), u_2(v)$ are both in $M_i \cup H_i$ and at least one of them is in $M_i$ together with the messages sent out by $u_1(v), u_2(v)$ not being adversarially corrupted in this round. Thus, we have
    \begin{align*}
        \ex{\frac{|M_{i + 1}|}{n} \mid M_i, H_i} &\geq \frac{1}{n} \cdot \sum_{v \in V} \left( 2 \cdot \left(\frac{|M_i|}{n} - \beta \right) \cdot \left(\frac{|H_i|}{n} - \beta \right) + \left(\frac{|M_i|}{n} - \beta \right)^2 \right) \\
        &= \left(\frac{|M_i|}{n} - \beta \right) \left( \frac{2|H_i|}{n} + \frac{|M_i|}{n} - 3 \beta\right)\\
        &= \frac{|M_i|}{n} \cdot \left(1 - \frac{\beta n}{|M_i|} \right) \left( \frac{2|H_i|}{n} + \frac{|M_i|}{n} - 3 \beta\right).
    \end{align*}
    Now, we use our assumptions that $\frac{|H_i|}{n} \geq \frac{1}{2}$ and $\frac{|M_i|}{n} \geq \frac{|M_0|}{n} \geq 2 \varepsilon$ to obtain
    \begin{align*}
        \ex{\frac{|M_{i + 1}|}{n} \mid M_i, H_i} &\geq \frac{|M_i|}{n} \cdot \left(1 - \frac{\beta}{2 \varepsilon} \right) \cdot (1 + 2 \varepsilon - 3\beta) \\
        &\geq \frac{|M_i|}{n} \cdot \left(1 + 2 \varepsilon -4\beta - \frac{\beta }{2\varepsilon} \right) \\
        &\geq \frac{|M_i|}{n} \cdot \left(1 + \frac{7\varepsilon}{4} \right)
    \end{align*}
    where in the final step we have used the inequalities  $\beta \leq \frac{\varepsilon^{2.5}}{16} \leq \frac{\varepsilon^2}{16}$ and $\beta\leq\frac{\varepsilon}{200}$.
\end{proof}

Next, we investigate the final step of the algorithm.

\begin{lemma}
    \label{lem:exp_mid_fin}
    Suppose that $\frac{|H_{t - 1}|}{n} \geq \frac{1}{2} - \frac{3 \varepsilon}{2}$ and $\frac{|M_{t - 1}|}{n} \geq \frac{|M_0|}{n}$, then $\ex{\frac{|M_t|}{n} \mid M_{t - 1}, H_{t - 1}} \geq \frac{|M_{t - 1}|}{n}\cdot\left (1 - \frac{5\varepsilon}{4} \right)$.
\end{lemma}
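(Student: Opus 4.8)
The plan is to mirror the proof of Lemma~\ref{lem:exp_mid}, accounting for the one way in which the final iteration $t$ differs from the earlier ones: here $\delta$ need not equal $1$, so with probability $\delta \in [0,1]$ node $v$ pulls two values and takes their minimum, and with probability $1 - \delta$ it pulls a single value and simply copies it. Write $m \coloneqq \frac{|M_{t - 1}|}{n}$ and $h \coloneqq \frac{|H_{t - 1}|}{n}$. Since $M$-values lie strictly below $H$-values (and above $L$-values), a sufficient condition for $v \in M_t$ in the two-sample branch is that both sampled nodes are uncorrupted and lie in $M_{t - 1} \cup H_{t - 1}$ with at least one of them in $M_{t - 1}$; in the one-sample branch it suffices that the single sampled node is uncorrupted and in $M_{t - 1}$. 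Using that at most $\beta n$ nodes are corrupted, that the two samples are independent, and that ``in $M_{t-1}$, uncorrupted'' and ``in $H_{t-1}$, uncorrupted'' are disjoint events of probability at least $m - \beta$ and $h - \beta$ respectively, this yields
\[
\ex{\frac{|M_t|}{n} \mid M_{t - 1}, H_{t - 1}} \geq (1 - \delta)(m - \beta) + \delta\bigl[(m - \beta)^2 + 2(m - \beta)(h - \beta)\bigr] = (m - \beta)\bigl[(1 - \delta) + \delta(m + 2h - 3\beta)\bigr].
\]

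Next I would lower-bound the bracket. Using the hypotheses $m \geq \frac{|M_0|}{n} \geq 2\varepsilon$ and $h \geq \frac{1}{2} - \frac{3\varepsilon}{2}$ one gets $m + 2h - 3\beta \geq 1 - \varepsilon - 3\beta$, so the bracket equals $1 - \delta\bigl(1 - (m + 2h - 3\beta)\bigr)$, which is at least $1 - \varepsilon - 3\beta$ for every $\delta \in [0, 1]$ --- regardless of whether the coefficient $1 - (m + 2h - 3\beta)$ of $\delta$ is positive (then it is at most $\varepsilon + 3\beta$ and $\delta \le 1$) or negative (then the whole expression exceeds $1$). Combining this with $m - \beta = m\bigl(1 - \frac{\beta}{m}\bigr) \geq m\bigl(1 - \frac{\beta}{2\varepsilon}\bigr)$ gives
\[
\ex{\frac{|M_t|}{n} \mid M_{t - 1}, H_{t - 1}} \geq m\Bigl(1 - \frac{\beta}{2\varepsilon}\Bigr)(1 - \varepsilon - 3\beta) \geq m\Bigl(1 - \frac{\beta}{2\varepsilon} - \varepsilon - 3\beta\Bigr).
\]

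Finally I would close the numerical gap exactly as in Lemma~\ref{lem:exp_mid}: from $\beta \leq \frac{\varepsilon^{2.5}}{16} \leq \frac{\varepsilon^2}{16}$ we get $\frac{\beta}{2\varepsilon} \leq \frac{\varepsilon}{32}$, and from $\beta \leq \frac{\varepsilon}{200}$ we get $3\beta \leq \frac{3\varepsilon}{200}$, so $\frac{\beta}{2\varepsilon} + 3\beta \leq \frac{\varepsilon}{4}$ and hence $1 - \frac{\beta}{2\varepsilon} - \varepsilon - 3\beta \geq 1 - \frac{5\varepsilon}{4}$, which is precisely the claimed inequality. The only genuinely new point compared to Lemma~\ref{lem:exp_mid} is the middle step: unlike the pure two-sample iterations, a $1 - \delta$ fraction of the update here is a plain copy that does not grow $M$, so one must verify the bound survives the worst case over $\delta$; this works out because $m + 2h - 3\beta$ is within $\varepsilon + 3\beta$ of $1$, so mixing in the copy operation costs at most a factor $1 - \varepsilon - 3\beta$, which the slack in the final inequality absorbs comfortably. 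I would flag this worst-case-over-$\delta$ check as the single place where a little care is needed; everything else is a routine re-run of the earlier computation.
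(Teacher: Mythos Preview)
Your proof is correct and essentially identical to the paper's: both start from the same lower bound $(m-\beta)\bigl[(1-\delta)+\delta(m+2h-3\beta)\bigr]$, use $m\ge 2\varepsilon$ and $h\ge \tfrac12-\tfrac{3\varepsilon}{2}$ to bound the bracket by $1-\delta(\varepsilon+3\beta)$, factor out $m(1-\beta/(2\varepsilon))$, and finish with the same numerical estimates $\beta/(2\varepsilon)\le \varepsilon/32$ and $3\beta\le 3\varepsilon/200$. Your explicit case split on the sign of $1-(m+2h-3\beta)$ is slightly more careful than the paper's implicit use of $\delta\le 1$, but otherwise the arguments coincide line by line.
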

\begin{proof}
    For the iteration $t$, we have 
    \begin{align*}
        &\ex{\frac{|M_t|}{n} \mid M_{t - 1}, H_{t - 1}} \\
        &\quad\geq \frac{1}{n} \cdot \sum_{v \in V} \biggl( \delta \left(2 \cdot \left(\frac{|M_{t - 1}|}{n} - \beta \right) \cdot \left(\frac{|H_{t - 1}|}{n} - \beta \right) + \left(\frac{|M_{t - 1}|}{n} - \beta \right)^2 \right) + (1 - \delta) \cdot \left(\frac{|M_{t - 1}|}{n} - \beta \right) \biggr) \\
        &\quad= \delta \cdot  \left(\frac{|M_{t - 1}|}{n} - \beta \right) \left( \frac{2|H_{t - 1}|}{n} + \frac{|M_{t - 1}|}{n} - 3 \beta\right) + (1 - \delta) \cdot \left(\frac{|M_{t - 1}|}{n} - \beta \right)\\
        &\quad= \frac{|M_{t - 1}|}{n} \cdot \left(1 - \frac{\beta n}{|M_{t - 1}|} \right) \left[\delta \cdot \left( \frac{2|H_{t - 1}|}{n} + \frac{|M_{t - 1}|}{n} - 3 \beta\right) + (1- \delta)\right].
    \end{align*}
    Now, we use our assumptions that $\frac{|H_{t - 1}|}{n} \geq \frac{1}{2} - \frac{3\varepsilon}{2}$ and $\frac{|M_{t - 1}|}{n} \geq \frac{|M_0|}{n} \geq 2 \varepsilon$ to obtain
    \begin{align*}
        \ex{\frac{|M_t|}{n} \mid M_{t - 1}, H_{t - 1}} &\geq \frac{|M_{t - 1}|}{n} \cdot \left(1 - \frac{\beta}{2 \varepsilon} \right) \cdot \left[\delta \cdot \left( 1 - 3\varepsilon + 2 \varepsilon - 3 \beta\right) + (1- \delta)\right]\\
        &\geq \frac{|M_{t - 1}|}{n} \cdot \left[1 - \frac{\beta}{2 \varepsilon} - \delta \cdot \left(\varepsilon + 3\beta \right)\right]\\
        &\geq \frac{|M_{t - 1}|}{n} \cdot \left(1 - \frac{5\varepsilon}{4} \right),
    \end{align*}
    where in the final step, we have used used the inequalities  $\beta \leq \frac{\varepsilon^{2.5}}{16} \leq \frac{\varepsilon^2}{16}$ and $\beta\leq\frac{\varepsilon}{200}$ and the fact that $\delta \leq 1$.
\end{proof}

Next, we prove concentration for every step.

\begin{lemma}
    \label{lem:con_mid}
    For $0 \leq i \leq t - 1$, suppose that $\frac{|M_i|}{n} \geq \frac{|M_0|}{n}$, then
    \[
    \pr{\left| \frac{|M_{i + 1}|}{n} - \ex{\frac{|M_{i + 1}|}{n} \mid M_i, H_i} \right| > \varepsilon' \cdot \ex{\frac{|M_{i + 1}|}{n} \mid M_i, H_i}} \leq e^{-\Omega(\varepsilon' \ex{|M_{i + 1}| \mid M_i, H_i})}.
    \]
\end{lemma}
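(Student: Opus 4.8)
The plan is to write $|M_{i+1}| = \sum_{v \in V} X_v$, where $X_v$ is the indicator of the event $\{v \in M_{i+1}\}$, and to show that once one conditions on the right information the $X_v$ become mutually independent, so that the statement follows from a Chernoff bound exactly as in the proof of Lemma~\ref{lem:oneround}.

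The crux is the choice of conditioning. I would condition on the entire execution history up to and including iteration $i$ --- which in particular reveals $L_i, M_i, H_i$ as well as the exact node values at that point --- together with the adversary's choices for iteration $i+1$: the set $A \subseteq V$ of at most $\beta n$ nodes whose outgoing messages it will corrupt, and the values it will inject into them. Since the adversary must commit to these before the nodes draw their iteration-$(i+1)$ partners, this conditioning leaves the iteration-$(i+1)$ randomness of each node --- its uniformly random partner(s), and, in the final iteration, its coin deciding whether to sample one partner or two --- fresh and independent across nodes. Under this conditioning $X_v$ is a deterministic function of node $v$'s own iteration-$(i+1)$ randomness alone: $v$'s new value is the minimum (recall $\phi \le 1/2$) of the current values of the one or two nodes it samples, and each such sampled value is either the (now fixed) current value of a node in $L_i \cup M_i \cup H_i$ or a (now fixed) adversarially injected value. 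Recalling from the remark after Claim~\ref{claim:iter} that two partners are sampled with probability $1$ whenever $i+1 \le t-1$, and that in the final iteration the one-versus-two choice is still internal to each node, one concludes that $\{X_v\}_{v \in V}$ are mutually independent conditionally on this information.

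It then only remains to apply the Chernoff bound from the appendix --- as in the proof of Lemma~\ref{lem:oneround}, now applied to both tails --- with relative deviation $\varepsilon'$ to $\sum_v X_v$; this gives the claimed tail bound conditionally on the frozen data, and since it holds pointwise over every realisation of that data consistent with a given $(M_i, H_i)$, averaging over the remaining randomness yields the inequality as stated (the quantity $\ex{|M_{i+1}| \mid M_i, H_i}$ being the average of the corresponding fully-conditioned expectations). I expect the only genuinely delicate point to be this conditioning bookkeeping: keeping the concentration over the iteration-$(i+1)$ randomness while the adversary's behaviour is held fixed, and reconciling the fully-conditioned expectations with $\ex{|M_{i+1}| \mid M_i, H_i}$; the Chernoff estimate itself is routine. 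Note that the hypothesis $|M_i|/n \ge |M_0|/n$ is not needed for the inequality itself; together with Lemmas~\ref{lem:exp_mid} and~\ref{lem:exp_mid_fin} it is what will later ensure that $\ex{|M_{i+1}| \mid M_i, H_i}$ is large, and hence that the right-hand side is $n^{-\Omega(1)}$ in the eventual application.
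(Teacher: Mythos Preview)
Your proposal is correct and follows the same approach as the paper's (very terse) proof: observe that, once one conditions on the state at time $i$ together with the adversary's choices for iteration $i+1$, the indicators $\mathbf{1}\{v \in M_{i+1}\}$ are mutually independent across nodes, and then apply a Chernoff bound. The paper compresses this into two sentences; your more explicit treatment of the conditioning, and your remark that the hypothesis $|M_i|/n \ge |M_0|/n$ is not actually used in this lemma, are both accurate.
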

\begin{proof}
    For any $v$, notice that the event $v \in M_{i + 1}$ is conditionally independent given $M_i, H_i$. The required claim then immediately follows from Chernoff bounds.
\end{proof}

Finally, we are ready to bound the size of $|M_t|$.

\begin{lemma}
    \label{lem:fin_mid}
    Let $\varepsilon = \Omega \left(\left(\frac{\log n}{n} \right)^{1/5} \right)$. At the end of the algorithm, w.h.p.\ $\frac{|M_t|}{n} \geq 2\varepsilon$.
\end{lemma}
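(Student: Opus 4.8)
The plan is to follow the evolution of the fraction $\frac{|M_i|}{n}$ through the run of Algorithm~\ref{alg:2tourn}, combining the one-step multiplicative estimates of Lemmas~\ref{lem:exp_mid} and~\ref{lem:exp_mid_fin} with the per-step concentration of Lemma~\ref{lem:con_mid}, while maintaining the invariant $\frac{|M_i|}{n}\ge\frac{|M_0|}{n}\ge 2\varepsilon$ — which is exactly the hypothesis those lemmas require. Before that, one has to discharge their $|H_i|$-hypotheses: Lemma~\ref{lem:cumerr} gives w.h.p.\ $\frac{|H_i|}{n}\ge(1-\varepsilon')^{2^{i+1}-1}h'_i$ for $0\le i<t$, the proof of Lemma~\ref{lem:hbound} gives $h'_i>\tfrac{10}{19}$ for $i\le t-2$, and Claim~\ref{claim:iter} gives $h'_{t-1}\ge T$; since $2^t\varepsilon'=\varepsilon/9$ implies $(1-\varepsilon')^{2^{i+1}-1}\ge 1-\varepsilon/6$, short computations (using $\varepsilon\le\tfrac16$) give $\frac{|H_i|}{n}>\tfrac12$ for $i\le t-2$ and $\frac{|H_{t-1}|}{n}\ge\tfrac12-\tfrac{3\varepsilon}{2}$, matching the preconditions of Lemmas~\ref{lem:exp_mid} (for $0\le i<t-1$) and~\ref{lem:exp_mid_fin}.

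For the growth phase I would prove by induction that, w.h.p., $\frac{|M_i|}{n}\ge\big((1+\tfrac{7\varepsilon}{4})(1-\varepsilon')\big)^{i}\frac{|M_0|}{n}$ for $0\le i\le t-1$; in particular $\frac{|M_i|}{n}\ge\frac{|M_0|}{n}\ge 2\varepsilon$. Given the invariant and $\frac{|H_i|}{n}\ge\tfrac12$ at step $i$, Lemma~\ref{lem:exp_mid} gives $\ex{\frac{|M_{i+1}|}{n}\mid M_i,H_i}\ge(1+\tfrac{7\varepsilon}{4})\frac{|M_i|}{n}$, so $\ex{|M_{i+1}|\mid M_i,H_i}=\Omega(\varepsilon n)$ and hence $\varepsilon'\ex{|M_{i+1}|\mid M_i,H_i}=\omega(\log n)$ by \eqref{eq:eps'-lower-bound} and the assumption $\varepsilon=\Omega((\log n/n)^{1/5})$. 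Thus Lemma~\ref{lem:con_mid} with parameter $\varepsilon'$ fails only with probability $n^{-\Omega(1)}$, and on its complement $\frac{|M_{i+1}|}{n}\ge(1-\varepsilon')(1+\tfrac{7\varepsilon}{4})\frac{|M_i|}{n}\ge\frac{|M_i|}{n}$ since $\varepsilon'\le\varepsilon/36$. A union bound over the $t=O(\log\tfrac1\varepsilon)$ iterations closes the induction.

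For the last iteration, conditioning on the events above gives $\frac{|M_{t-1}|}{n}\ge\frac{|M_0|}{n}\ge 2\varepsilon$ and $\frac{|H_{t-1}|}{n}\ge\tfrac12-\tfrac{3\varepsilon}{2}$, so Lemma~\ref{lem:exp_mid_fin} yields $\ex{\frac{|M_t|}{n}\mid M_{t-1},H_{t-1}}\ge(1-\tfrac{5\varepsilon}{4})\frac{|M_{t-1}|}{n}$, and another application of Lemma~\ref{lem:con_mid} gives w.h.p.\ $\frac{|M_t|}{n}\ge(1-\varepsilon')(1-\tfrac{5\varepsilon}{4})\frac{|M_{t-1}|}{n}$. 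When $t\ge 2$ the growth phase left at least one extra factor $(1+\tfrac{7\varepsilon}{4})(1-\varepsilon')\ge 1$, so
\[\frac{|M_t|}{n}\ \ge\ \frac{|M_0|}{n}\,(1+\tfrac{7\varepsilon}{4})(1-\tfrac{5\varepsilon}{4})(1-\varepsilon')^2\ \ge\ \frac{|M_0|}{n}\ \ge\ 2\varepsilon,\]
because $(1+\tfrac{7\varepsilon}{4})(1-\tfrac{5\varepsilon}{4})(1-\varepsilon')^2=1+\Theta(\varepsilon)\ge 1$ for $\varepsilon\le\tfrac16$ and $\varepsilon'\le\varepsilon/36$. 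The degenerate case $t=1$ is handled separately: then $h'_0>T$, so the precondition $\frac{|H_0|}{n}\ge\tfrac12-\tfrac{3\varepsilon}{2}$ of Lemma~\ref{lem:exp_mid_fin} holds for iteration $1$, and one checks the bound by a direct computation, distinguishing $\phi<\tfrac12-\varepsilon$ (where $\frac{|H_0|}{n}>\tfrac12$, so the near-min step in fact increases $\frac{|M|}{n}$) from $\phi\in(\tfrac12-\varepsilon,\tfrac12]$ (where the $\phi$-quantile already lies within $\varepsilon$ of the median).

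The main obstacle is getting the accounting of the multiplicative factors right — verifying that the single surviving growth factor from the first $t-1$ rounds compensates both the $(1-\tfrac{5\varepsilon}{4})$ loss of the final round and the two concentration slacks $(1-\varepsilon')$ — which is precisely where the gap between the constants $7/4$ and $5/4$ in Lemmas~\ref{lem:exp_mid} and~\ref{lem:exp_mid_fin} and the fact that $\varepsilon'=\Theta(\varepsilon^{2.5})\ll\varepsilon$ are used. A secondary technical point is keeping $\varepsilon'\ex{|M_{i+1}|}=\omega(\log n)$ at every step so that Lemma~\ref{lem:con_mid} applies with high probability, which the invariant $\frac{|M_i|}{n}\ge 2\varepsilon$ together with the assumed lower bounds on $\varepsilon$ and $\varepsilon'$ guarantees.
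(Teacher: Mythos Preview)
Your approach is essentially the paper's: verify the $|H_i|$-hypotheses, maintain $\frac{|M_i|}{n}\ge\frac{|M_0|}{n}$ through the first $t-1$ iterations via Lemmas~\ref{lem:exp_mid} and~\ref{lem:con_mid}, then absorb the last-step loss from Lemma~\ref{lem:exp_mid_fin} with one leftover growth factor $(1+\tfrac{7\varepsilon}{4})$. One cosmetic difference: the paper checks $\frac{|H_i|}{n}>\tfrac12$ for $i\le t-2$ by a short contradiction via Lemma~\ref{lem:exp2} and Chernoff (if $\frac{|H_i|}{n}\le\tfrac12$ then $\frac{|H_{i+1}|}{n}\le T$, contradicting $i+1\le t-1$), whereas you go through Lemma~\ref{lem:cumerr} and the fact $h'_i>\tfrac{10}{19}$; both work.

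One technical caution. You apply Lemma~\ref{lem:con_mid} to $|M_{i+1}|$ with the tiny parameter $\varepsilon'=\frac{\varepsilon}{9\cdot 2^{t+1}}=\Theta(\varepsilon^{2.5})$, relying on the exponent $e^{-\Omega(\varepsilon'\mu)}$ as literally written there. But Chernoff actually yields $e^{-\Omega((\varepsilon')^2\mu)}$, and with your $\varepsilon'$ this is only $(\varepsilon')^2\cdot\Omega(\varepsilon n)=\Omega(\varepsilon^6 n)$, which is $o(1)$ under the hypothesis $\varepsilon=\Omega((\log n/n)^{1/5})$ --- not high probability. The paper silently fixes this by taking a \emph{fresh} concentration parameter of order $\varepsilon$ (it writes $\varepsilon'=\varepsilon/4$ and later uses $(1-\varepsilon/100)$) for the $|M_i|$ steps; then $(\varepsilon')^2\cdot\varepsilon n=\Omega(\varepsilon^3 n)=\omega(\log n)$, and one still has $(1-\Theta(\varepsilon))(1+\tfrac{7\varepsilon}{4})\ge 1$ so the invariant survives. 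Your explicit treatment of the case $t=1$ is a welcome addition --- the paper's final chain uses $|M_{t-2}|$ and thus tacitly assumes $t\ge 2$.
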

\begin{proof}
    First, note that for $0 \leq i \leq t - 2$, w.h.p., we must have $\frac{|H_i|}{n} > 1/2$. Otherwise, by Lemma \ref{lem:exp2} and using Chernoff bounds with error $\varepsilon' = 1/100$, and the inequalities $\varepsilon \leq \frac{1}{6}, \beta \leq \frac{\varepsilon}{200} \leq \frac{1}{100}$, we have w.h.p.\
    \[\frac{|H_{i + 1}|}{n} \leq (1 + \varepsilon') \left(\frac{1}{2} + \beta \right)^{2} \leq \frac{101}{100} \cdot \left(\frac{1}{2} + \frac{1}{100} \right)^{2} \leq \frac{9}{32} \leq \frac{1}{2} - \frac{21 \varepsilon}{16}.\]
    This implies that $t = i + 1$, which is a contradiction.

    Next, we show by induction that w.h.p.\ $\frac{|M_i|}{n} \geq \frac{|M_0|}{n}$ for $0 \leq i \leq t - 1$. The claim is trivially true for $i = 0$. Otherwise, suppose that $\frac{|M_i|}{n} \geq \frac{|M_0|}{n}$ is true for some $i$. Using Lemma \ref{lem:exp_mid} and \ref{lem:con_mid} with $\varepsilon' \coloneqq \varepsilon / 4$, we have
    \begin{align*}    
    &\pr{\left| \frac{|M_{i + 1}|}{n} - \ex{\frac{|M_{i + 1}|}{n} \mid M_i, H_i} \right| > \varepsilon' \cdot \ex{\frac{|M_{i + 1}|}{n} \mid M_i, H_i}} \\
    &\qquad\leq e^{-\Omega(\varepsilon' \ex{|M_{i + 1}| \mid M_i, H_i})} = e^{-\Omega((\varepsilon')^2 \varepsilon n)} = n^{- \omega(1)}.
    \end{align*}
    Thus, w.h.p.\ we have 
    \[\frac{|M_{i + 1}|}{n} \geq (1 - \varepsilon') \cdot \ex{\frac{|M_{i + 1}|}{n} \mid M_i, H_i} \geq \left(1 - \frac{\varepsilon}{100}\right) \cdot \left(1 + \frac{7\varepsilon}{4} \right) \cdot \frac{|M_i|}{n} \geq \frac{|M_i|}{n},\]
    where in the penultimate step we have used Lemma \ref{lem:exp_mid}. 
    
    Next, using Lemma \ref{lem:cumerr} w.h.p., we have that w.h.p.\ 
    \[ \frac{|H_{t - 1}|}{n} \geq (1 - \tfrac{\varepsilon}{9 \cdot 2^{t + 1}})^{2^{t} - 1}h'_{t - 1}  \geq (1 - (2^{t} - 1)\tfrac{\varepsilon}{9 \cdot 2^{t + 1}})h'_{t - 1} \geq \left(1 - \frac{\varepsilon}{18} \right)T \geq T - \frac{\varepsilon}{18} \geq \frac{1}{2} - \frac{3\varepsilon}{2}.\]
    Hence the conditions of Lemma \ref{lem:exp_mid_fin} are satisfied, and we can use it together with Lemma \ref{lem:con_mid} to get that w.h.p.\
    \[ \frac{|M_t|}{n} \geq (1 - \varepsilon') \cdot \ex{\frac{|M_{t}|}{n} \mid M_{t-1}, H_{t-1}} \geq \left(1 - \frac{\varepsilon}{100} \right) \cdot \left(1 - \frac{5\varepsilon}{4} \right) \cdot \frac{|M_{t - 1}|}{n}.\]
    Furthermore, using Lemma \ref{lem:exp_mid} together with Lemma \ref{lem:con_mid}, we have w.h.p.\
    \begin{align*}
    \frac{|M_t|}{n} &
    \geq \left(1 - \frac{\varepsilon}{100} \right)^2 \cdot \left(1 - \frac{5\varepsilon}{4} \right) \cdot \left(1 + \frac{7\varepsilon}{4} \right) \cdot \frac{|M_{t - 2}|}{n}
    \geq \left(1 - \frac{127\varepsilon}{100} \right) \cdot \left(1 + \frac{7\varepsilon}{4} \right) \cdot \frac{|M_0|}{n} \\
    &= \left(1 + \frac{48\varepsilon}{100}-\frac{889\varepsilon^2}{400} \right) \cdot \frac{|M_0|}{n}
    \geq \frac{|M_0|}{n} = 2 \varepsilon,
    \end{align*}
    where in the penultimate step, we have used the assumption $\varepsilon \leq \frac{1}{6}$.
\end{proof}

Finally, we can prove the following lemma which implies fast quantile shifting.

\begin{lemma}
    Let $\varepsilon = \Omega \left(\left(\frac{\log n}{n} \right)^{1/5} \right)$. At the end of iteration $t$ of Algorithm \ref{alg:2tourn}, w.h.p.\ any $\phi'$-quantile where $\phi' \in \left[ \frac{1}{2} - \frac{\varepsilon}{8}, \frac{1}{2} + \frac{\varepsilon}{8} \right]$ must be in $M_t$.
\end{lemma}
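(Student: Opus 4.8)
The plan is to deduce the statement from the two quantitative bounds on the final configuration that have already been established — Lemma~\ref{lem:Hbound}, which gives $\tfrac12-\tfrac{15\varepsilon}{8}\le\tfrac{|H_t|}{n}\le\tfrac12-\tfrac{\varepsilon}{8}$ w.h.p., and Lemma~\ref{lem:fin_mid}, which gives $\tfrac{|M_t|}{n}\ge 2\varepsilon$ w.h.p. — together with the structural observation that the partition $V=L_t\cup M_t\cup H_t$ respects the order of the current values. First I would note that every value held by any node at any point of the execution is one of the $n$ original values, since the algorithm only copies values and takes pairwise minima; hence each current value has a well-defined rank in the original value distribution, and (assuming WLOG, via a fixed tie-break, that the original values are distinct) this rank is order-preserving. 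Consequently every value held by an $L_t$-node is strictly smaller than every value held by an $M_t$-node, which is strictly smaller than every value held by an $H_t$-node.

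Given this, sort the $n$ current values in non-decreasing order: positions $1,\dots,|L_t|$ are occupied by $L_t$-nodes, positions $|L_t|+1,\dots,|L_t|+|M_t|$ by $M_t$-nodes, and the last $|H_t|$ positions by $H_t$-nodes. The $\phi'$-quantile is the value in position $\lceil\phi' n\rceil$, so it lies in $M_t$ exactly when $|L_t|<\lceil\phi' n\rceil\le|L_t|+|M_t|$, and it suffices to check these two inequalities for every $\phi'\in[\tfrac12-\tfrac\varepsilon8,\tfrac12+\tfrac\varepsilon8]$ on the (w.h.p.) event where both Lemma~\ref{lem:Hbound} and Lemma~\ref{lem:fin_mid} hold, using $|L_t|+|M_t|+|H_t|=n$. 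For the upper inequality, $|L_t|+|M_t|=n-|H_t|\ge(\tfrac12+\tfrac\varepsilon8)n$, and since $|L_t|+|M_t|$ is an integer this gives $|L_t|+|M_t|\ge\lceil(\tfrac12+\tfrac\varepsilon8)n\rceil\ge\lceil\phi' n\rceil$. For the lower inequality, $|L_t|=n-|M_t|-|H_t|\le n(1-2\varepsilon-(\tfrac12-\tfrac{15\varepsilon}{8}))=(\tfrac12-\tfrac\varepsilon8)n\le\phi' n\le\lceil\phi' n\rceil$; to upgrade this to a strict inequality one uses that $\varepsilon n\to\infty$ together with the polynomial slack in the cited lemmas (e.g.\ the computation in the proof of Lemma~\ref{lem:fin_mid} actually yields $\tfrac{|M_t|}{n}\ge 2\varepsilon(1+\Theta(\varepsilon))$, so $|L_t|\le(\tfrac12-\tfrac\varepsilon8)n-\Omega(\varepsilon^2 n)<\lceil\phi' n\rceil$ since $\varepsilon^2 n$ is polynomially large).

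I do not expect a real obstacle here: the lemma is essentially a bookkeeping consequence of the two bounds already proved. The only two places needing care are (i) the order-preservation claim, i.e.\ making sure that the $L_t/M_t/H_t$ labels — defined through ranks in the \emph{original} value distribution — really do split the sorted list of \emph{current} values into three contiguous blocks; and (ii) the boundary/integrality bookkeeping, so that the quantiles in the closed interval $[\tfrac12-\tfrac\varepsilon8,\tfrac12+\tfrac\varepsilon8]$ fall strictly inside the $M_t$-block rather than on its boundary. Both are routine under the standing assumption $\varepsilon=\Omega((\log n/n)^{1/5})$, which makes $\varepsilon n$ (and even $\varepsilon^2 n$) polynomially large.
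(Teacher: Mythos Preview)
Your proposal is correct and follows essentially the same approach as the paper: invoke Lemma~\ref{lem:Hbound} and Lemma~\ref{lem:fin_mid}, combine them via $|L_t|+|M_t|+|H_t|=n$ to get $|H_t|/n+|M_t|/n\ge\tfrac12+\tfrac\varepsilon8$ (equivalently $|L_t|/n\le\tfrac12-\tfrac\varepsilon8$), and conclude that the $\phi'$-quantile block falls inside $M_t$. The paper's proof is only three lines and does not spell out the order-preservation point (i) or the integrality bookkeeping (ii) that you handle explicitly, but the argument is the same.
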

\begin{proof}
    By Lemmas \ref{lem:Hbound} and \ref{lem:fin_mid}, we have $\frac{1}{2} - \frac{15 \varepsilon}{8} \leq \frac{|H_t|}{n} \leq \frac{1}{2} - \frac{\varepsilon}{8}$ and $\frac{|M_t|}{n} \geq 2 \varepsilon$ w.h.p. Consequently, we must have $\frac{|H_t|}{n} + \frac{|M_t|}{n} \geq \frac{1}{2} + \frac{\varepsilon}{8}$. Thus, we conclude that any $\phi'$-quantile where $\phi' \in \left[ \frac{1}{2} - \frac{\varepsilon}{8}, \frac{1}{2} + \frac{\varepsilon}{8} \right]$ must be in $M_t$.
\end{proof}

\section{Approximate Mean}\label{sec:approx_mean_appendix}

In this section, we prove the runtime and correctness guarantees for our adversarially robust approximate mean algorithm. We start by recalling Theorem~\ref{thm:mean}.

\mean*

We remark that by assumption $\beta\leq \left(\frac{\varepsilon}{100}\right)^{2.5}$ and $\delta << \left(\frac{\varepsilon}{100}\right)^{2.5}$, which implies that $\eta \leq \left(\frac{\varepsilon}{100}\right)^{2.5}$, where $\delta$ is defined on line $1$ of Algorithm \ref{alg:pullavg}. Furthermore, we can rewrite $\eta$ as 
\begin{align*}
   \eta &\coloneqq \max(\beta, \delta, \min(\gamma^5, (\tfrac{\varepsilon}{100})^{2.5}))\\
   &= \max(\min(\beta, (\tfrac{\varepsilon}{100})^{2.5}), \min(\delta, (\tfrac{\varepsilon}{100})^{2.5}), \min(\gamma^5, (\tfrac{\varepsilon}{100})^{2.5}))\\
   &= \min(\max(\beta, \delta, \gamma^5), (\tfrac{\varepsilon}{100})^{2.5}).
\end{align*} 

Recall the definitions of $\psi(t) \coloneqq \sum_{u \in V} x_u(t)$ and $\Phi(t) \coloneqq \sum_{u \neq v \in V} (x_u(t) - x_v(t))^2$.
We first give two alternate expressions for $\Phi(t)$ in terms of $\psi(t)$.

\begin{lemma}
    \label{lem:alt}
    \[ \Phi(t) = \left(n \cdot \sum_{u \in V} x_u(t)^2\right) - \psi(t)^2 = n\cdot \sum_{u \in V} \left( x_u(t) - \frac{\psi(t)}{n} \right)^2.\]
\end{lemma}
\begin{proof}
    First, by definition of $\psi(t)$ we have
    \[ \psi(t)^2 = \left( \sum_{u \in V} x_u(t) \right)^2 = \sum_{u \in V} \sum_{v \in V} x_u(t)x_v(t).\]
    Next, 
    \begin{align*}
       \Phi(t) &= \sum_{u\neq v \in V} (x_u(t) - x_v(t))^2 \\
       &= \frac{1}{2} \sum_{u \in V} \sum_{v \in V} (x_u(t) - x_v(t))^2\\
        &= \frac{1}{2} \sum_{u \in V} \sum_{v \in V} (x_u(t)^2 + x_v(t)^2 - 2x_u(t)x_v(t)\\
        &= \left(n \cdot \sum_{w \in V} x_w(t)^2\right) - \left( \sum_{u \in V} \sum_{v \in V} x_u(t)x_v(t) \right)\\
        &= \left(n \cdot \sum_{w \in V} x_w(t)^2\right) - \psi(t)^2.
    \end{align*}
    Furthermore, 
    \begin{align*}
        n\cdot \sum_{u \in V} \left( x_u(t) - \frac{\psi(t)}{n} \right)^2 &= n\cdot \sum_{u \in V} \left( x_u(t)^2 + \frac{\psi(t)^2}{n^2} - 2x_u(t) \cdot \frac{\psi(t)}{n} \right) \\
        &= \left(n \cdot \sum_{u \in V} x_u(t)^2\right) + \psi(t)^2 - 2 \left( \sum_{u \in V} x_u(t) \right) \cdot \psi(t) \\
        &= \left(n \cdot \sum_{u \in V} x_u(t)^2\right) - \psi(t)^2.
    \end{align*}
\end{proof}

We now come to the analysis of the algorithm. First, let us compute the expected value of a single term in the potential, and show that the updated value of a node $u$ is close (subject to the precision constraints imposed by the fraction of messages corrupted and the bound on the values $M$) to the current average of values $\frac{\psi(t)}{n}$. 

\begin{lemma}
    \label{lem:expval}
    For arbitrary $u, v \in V$ such that $u \neq v$, and for all $0 \leq t \leq T$, we have
    \begin{align}\label{eq:expected-difference-bound}
        \ex{(x_u(t+1) - x_v(t+1))^2 \mid \Phi(t)} \leq \frac{\Phi(t)}{n^2} + 4 \beta M^2
    \end{align}
    and
    \begin{align}\label{eq:expected-mean-bounds}
        \frac{\psi(t)}{n} - 2 \beta M \leq \ex{x_u(t+1) \mid \psi(t)} \leq \frac{\psi(t)}{n} + 2 \beta M.
    \end{align}
\end{lemma}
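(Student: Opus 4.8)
The plan is to condition on the current configuration of values and compute the expectation over the two independent uniformly random samples that $u$ and $v$ draw in round $t+1$, being careful to separate the "clean" case (neither sampled node is adversarially corrupted) from the "dirty" case (at least one is corrupted). For the clean case I would argue as follows: write $u_1(u), u_2(u)$ for the two nodes $u$ samples (and similarly for $v$). Conditioned on no corruption, we have $x_u(t+1) = \tfrac{1}{2}(x_{u_1(u)}(t) + x_{u_2(u)}(t))$, and since $u_1(u), u_2(u), u_1(v), u_2(v)$ are i.i.d.\ uniform over $V$, a direct second-moment computation gives $\ex{(x_u(t+1)-x_v(t+1))^2 \mid \text{clean}} = \tfrac{1}{4}\ex{(A-C)^2 + (A-D)^2 + (B-C)^2 + (B-D)^2}$ where $A,B,C,D$ are four i.i.d.\ uniform draws of node values — but actually this over-counts, since when two of the four indices coincide the corresponding squared difference vanishes; I will instead just upper-bound by dropping the cancellation and using $\ex{(A-C)^2} = \tfrac{2}{n^2}\sum_{u<v}(x_u - x_v)^2 \cdot \tfrac{n}{n-1}$-type identities, ultimately invoking Lemma~\ref{lem:alt} to rewrite things in terms of $\Phi(t)$. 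The cleanest route is: for two \emph{independent} uniform samples $X, Y$ from the multiset of values, $\ex{(X-Y)^2} = \tfrac{2}{n}\sum_u (x_u - \tfrac{\psi(t)}{n})^2 = \tfrac{2\Phi(t)}{n^2}$ by Lemma~\ref{lem:alt}; averaging $x_u(t+1) = \tfrac12(X_1+X_2)$ against $x_v(t+1)=\tfrac12(Y_1+Y_2)$ with all four independent gives $\ex{(x_u(t+1)-x_v(t+1))^2} = \tfrac14\cdot 4 \cdot \tfrac{\Phi(t)}{n^2}\cdot\tfrac12\cdot 2 = \tfrac{\Phi(t)}{n^2}$ after bookkeeping (the variance of an average of two independent copies is half the variance, and the cross term between the $u$-average and $v$-average vanishes in expectation up to the common mean). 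This yields exactly the $\tfrac{\Phi(t)}{n^2}$ main term.

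For the dirty case, I would bound crudely: the probability that at least one of the (at most four) relevant sampled nodes is adversarially corrupted is at most $4\beta$ (union bound over the samples, each corrupted with probability at most $\beta$ since the adversary picks $\beta n$ nodes before the samples are drawn), actually more carefully each of $u$'s two samples lands on a corrupted node with probability at most $\beta$, so the probability the event "$x_u(t+1)$ is affected or $x_v(t+1)$ is affected" occurs is at most $4\beta$; wait — I only need to split on whether the \emph{pair} $(x_u(t+1), x_v(t+1))$ is computed entirely from clean samples. Conditioned on the dirty event, since all values (after the clipping step on line~7 of Algorithm~\ref{alg:pullavg}, which forces sampled values into $[0,M]$) lie in $[0,M]$, we have $(x_u(t+1)-x_v(t+1))^2 \le M^2$ deterministically. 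Hence $\ex{(x_u(t+1)-x_v(t+1))^2} \le \tfrac{\Phi(t)}{n^2} + 4\beta M^2$, which is \eqref{eq:expected-difference-bound}. The clipping step is essential here and should be explicitly invoked, since without it corrupted values could be arbitrarily large and the $M^2$ bound would fail.

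For \eqref{eq:expected-mean-bounds}, I would condition on whether either of $u$'s two samples is corrupted. With probability at least $1 - 2\beta$ neither is, and then $\ex{x_u(t+1) \mid \text{clean}} = \tfrac12(\ex{X_1} + \ex{X_2}) = \tfrac{\psi(t)}{n}$ since each clean sample is uniform. With probability at most $2\beta$ at least one is corrupted, in which case (again using the clipping to $[0,M]$) the conditional expectation of $x_u(t+1)$ lies in $[0,M]$, so it differs from $\tfrac{\psi(t)}{n}\in[0,M]$ by at most $M$. Combining, $\left|\ex{x_u(t+1)\mid\psi(t)} - \tfrac{\psi(t)}{n}\right| \le 2\beta \cdot M$, giving \eqref{eq:expected-mean-bounds}. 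The main obstacle — really the only subtle point — is getting the constant in the clean-case second-moment computation right and confirming that the cross-term between $u$'s average and $v$'s average vanishes (it does, by independence, since it equals $\ex{\tfrac12(X_1+X_2)}\cdot\ex{\tfrac12(Y_1+Y_2)}\cdot(-2) + \dots$ reassembling into $\ex{x_u(t+1)}^2$ terms that are dominated by the mean); I would double-check this by expanding $(x_u(t+1)-x_v(t+1))^2 = x_u(t+1)^2 - 2x_u(t+1)x_v(t+1) + x_v(t+1)^2$ and using independence of the two sample pairs so that $\ex{x_u(t+1)x_v(t+1)} = \ex{x_u(t+1)}\ex{x_v(t+1)}$, then comparing against the Lemma~\ref{lem:alt} identity $\Phi(t) = n\sum_u (x_u - \tfrac{\psi}{n})^2$.
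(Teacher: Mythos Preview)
Your proposal is correct and follows essentially the same approach as the paper: split on whether all sampled nodes are uncorrupted (probability at least $1-4\beta$ for the pair, at least $1-2\beta$ for a single node), compute the clean-case expectation exactly via the i.i.d.\ property and Lemma~\ref{lem:alt} to get $\Phi(t)/n^2$ (respectively $\psi(t)/n$), and bound the dirty case crudely by $M^2$ (respectively $M$) using the clipping step. Your presentation of the clean-case second-moment computation meanders through a couple of false starts before landing on the variance-of-average argument, but that argument is correct and equivalent to the paper's direct expansion; the paper simply writes out $(x_u(t{+}1)-x_v(t{+}1))^2$ in full and observes that under i.i.d.\ sampling it reduces to $\ex{X^2}-\ex{X}^2=\Phi(t)/n^2$.
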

\begin{proof}
    Denote by $S_t \coloneqq \{ x_{w}(t) \mid w \in V\}$ the set of values stored inn the nodes at iteration $t$.
    Let $u_1, u_2$ be the random nodes chosen by $u$ in this round, and similarly let $v_1, v_2$ be the random nodes chosen by $v$ in this round. Let $F_u$ denote the event that $u_1, u_2$ are both \emph{not} adversarially affected in this round, and similarly denote by $F_v$ the event that $v_1, v_2$ are not adversarially affected (in the same round). By the union bound, we have $\pr{\overline{F_u}}, \pr{\overline{F_v}} \le 2\beta$ and $\pr{\overline{F_u \cap F_v}} \leq 4 \beta$.
   Now, notice that
    \begin{align*}
       (x_u(t+1) - x_v(t+1))^2 &= \left(\frac{x_{u_1}(t) + x_{u_2}(t)}{2} - \frac{x_{v_1}(t) + x_{v_2}(t)}{2} \right)^2 \\
       &= \frac{1}{4} \cdot [x_{u_1}(t)^2 + x_{u_2}(t)^2 + x_{v_1}(t)^2 + x_{v_2}(t)^2] \\
       &\qquad + \frac{1}{2} \cdot [x_{u_1}(t)x_{u_2}(t) + x_{v_1}(t)x_{v_2}(t) \\
       &\qquad \qquad - x_{u_1}(t)x_{v_1}(t) - x_{u_1}(t)x_{v_2}(t) \\
       &\qquad \qquad - x_{u_2}(t)x_{v_1}(t) - x_{u_2}(t)x_{v_2}(t)]
    \end{align*}
    Observe that conditioned on $F_u \cap F_v$, all of $x_{u_1}(t), x_{u_2}(t), x_{v_1}(t), x_{v_2}(t)$ are independently and identically distributed. Thus, we get
    \[ \ex{(x_{u}(t+1) - x_{v}(t+1))^2 \mid S_t, F_u \cap F_v} =  \ex{x_{u_1}(t)^2 \mid S_t, F_u \cap F_v} - \ex{x_{u_1}(t) \mid S_t, F_u \cap F_v}^2.\]
    We can now compute the expectations that appear on the right hand side explicitly.
    \[ \ex{x_{u_1}(t)^2 \mid S_t, F_u \cap F_v} = \sum_{w \in V} \frac{1}{n} \cdot x_{w}(t)^2\]
    \[ \ex{x_{u_1}(t) \mid S_t, F_u \cap F_v} = \sum_{w \in V} \frac{1}{n} \cdot x_{w}(t) = \frac{\psi(t)}{n}\]
    Substituting in the previous two equations and using Lemma \ref{lem:alt},
    \[ \ex{(x_{u}(t+1) - x_{v}(t+1))^2 \mid S_t, F_u \cap F_v} = \sum_{w \in V} \frac{1}{n} \cdot x_{w}(t)^2 - \left( \frac{\psi(t)}{n} \right)^2 = \frac{\Phi(t)}{n^2}.\]
    We can then show \eqref{eq:expected-difference-bound}: 
    \begin{align*}
        \ex{(x_{u}(t+1) - x_{v}(t+1))^2 \mid S_t} &= \pr{F_u \cap F_v} \cdot \ex{(x_{u}(t+1) - x_{v}(t+1))^2 \mid S_t, F_u \cap F_v} \\
        &\qquad + \pr{\overline{F_u \cap F_v}} \cdot \ex{(x_{u}(t+1) - x_{v}(t+1))^2 \mid S_t, \overline{F_u \cap F_v}} \\
        &\leq \pr{F_u \cap F_v} \cdot \frac{\Phi(t)}{n^2} + \pr{\overline{F_u \cap F_v}} \cdot (M)^2 \\
        &\leq  \frac{\Phi(t)}{n^2} + 4 \beta M^2,
    \end{align*}
    where we have used the fact that $(x_{u}(t+1) - x_{v}(t+1))^2 \leq M^2$ since we clip the values to the interval $[0,M]$. Finally, notice that the expectation bound depends only on the value $\Phi(t)$ from the iteration $t$. This finishes the proof of \eqref{eq:expected-difference-bound}.

    For \eqref{eq:expected-mean-bounds}, notice that
    \begin{align*}
        \ex{x_{u}(t+1) \mid S_t} &= \pr{F_u} \cdot \ex{x_{u}(t+1)\mid S_t, F_u} + \pr{\overline{F_u}} \cdot \ex{x_{u}(t+1) \mid S_t, \overline{F_u}} \\
        &\geq \pr{F_u} \cdot \ex{x_{u}(t+1)\mid S_t, F_u} \\
        &\geq  (1 - 2 \beta)\frac{\psi(t)}{n} \\
        &\geq \frac{\psi(t)}{n} - 2 \beta M
    \end{align*}
    where we have used the fact that $x_{u}(t+1) \geq 0$ and $\psi(t) \leq n M$. Furthermore,
    \begin{align*}
        \ex{x_{u}(t+1) \mid S_t} &= \pr{F_u} \cdot \ex{x_{u}(t+1)\mid S_t, F_u} + \pr{\overline{F_u}} \cdot \ex{x_{u}(t+1) \mid S_t, \overline{F_u}} \\
        &\leq \ex{x_{u}(t+1)\mid S_t, F_u} + 2 \beta \cdot \ex{x_{u}(t+1) \mid S_t, \overline{F_u}} \\
        &\leq  \frac{\psi(t)}{n} + 2 \beta M
    \end{align*}
    where we have used the fact that $x_{u}(t+1) \leq M$. Again, notice that the expectation bounds depend only on the value $\psi(t)$ from the iteration $t$. Thus, we get \eqref{eq:expected-mean-bounds}.
\end{proof}
The next lemma provides a w.h.p.\ upper bound on the potential $\Phi(t)$ at the end of Phase 1 of the algorithm.

\begin{lemma}
    \label{lem:potdec}
    $\Phi(T) \leq 5 \eta n^2 M^2$ with high probability.
\end{lemma}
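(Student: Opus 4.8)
The plan is to prove a one-round contraction for $\Phi$ that holds with high probability and then iterate it over the (at most) $T = \lceil \log_{9/5}(1/\eta) \rceil$ rounds of Phase~1. Fix the configuration $S_t$ at the start of round $t+1$ together with the adversary's choices for that round (which nodes it corrupts and the corrupted values), all of which are determined before the round's random partner choices are revealed. Conditioned on this, the updated values $\{x_u(t+1)\}_{u \in V}$ are \emph{mutually independent}, since $x_u(t+1)$ is a function only of node $u$'s own (independent) pair of random partners and the fixed values. By Lemma~\ref{lem:alt}, $\Phi(t+1) = n \sum_{u \in V} (x_u(t+1) - \psi(t+1)/n)^2 \le n \sum_{u \in V} Z_u$ with $Z_u \coloneqq (x_u(t+1) - \psi(t)/n)^2 \in [0, M^2]$, so up to the factor $n$ the potential is bounded by a sum of independent bounded variables. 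Summing the per-pair bound~\eqref{eq:expected-difference-bound} over all $\binom{n}{2}$ pairs gives the expectation recursion $\ex{\Phi(t+1) \mid S_t} \le \tfrac12 \Phi(t) + 2\beta n^2 M^2$, and hence also $n \cdot \ex{\sum_{u} Z_u \mid S_t} \le \tfrac12 \Phi(t) + 2\beta n^2 M^2$.

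For the deviation of $\sum_u Z_u$ from its conditional mean I would use two different tail bounds, chosen according to the current magnitude of $\Phi(t)$. As long as $\Phi(t) \ge K \eta n^2 M^2$ for a suitable absolute constant $K$, Hoeffding's inequality gives $\sum_u Z_u \le \ex{\sum_u Z_u \mid S_t} + O(M^2\sqrt{n \log n})$ w.h.p., so $\Phi(t+1) \le \tfrac12 \Phi(t) + 2\beta n^2 M^2 + O(\delta n^2 M^2)$; since $\beta, \delta \le \eta$, the two additive terms are at most $\tfrac{1}{18}\Phi(t)$ once $K$ is large enough, which upgrades the step to the clean multiplicative bound $\Phi(t+1) \le \tfrac59 \Phi(t)$. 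Once $\Phi(t)$ has dropped to $O(\eta n^2 M^2)$ this is no longer sufficient, and I would switch to a Bernstein-type bound: there $\sum_u \mathrm{Var}(Z_u) \le M^2 \sum_u \ex{Z_u} = O(\eta n M^4)$, and because $\eta \le (\varepsilon/100)^{2.5}$ is subconstant the resulting deviation is $o(\eta n^2 M^2)$; this yields $\Phi(t+1) \le \tfrac12 \Phi(t) + (2 + o(1))\eta n^2 M^2$, a recursion that drives $\Phi$ towards $\approx 4 \eta n^2 M^2$. In particular, once $\Phi(t) \le 5 \eta n^2 M^2$ it remains $\le 5 \eta n^2 M^2$, and from any value $O(\eta n^2 M^2)$ it falls below $5 \eta n^2 M^2$ within $O(1)$ further rounds.

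It remains to stitch the regimes together and control the failure probability. Each round contributes only two tail events and there are at most $T = O(\log n)$ rounds (as $\eta \ge \delta = \sqrt{\log n / n}$), so a union bound keeps the overall failure probability at $n^{-\Omega(1)}$. On the good event: starting from $\Phi(0) \le \binom{n}{2} M^2 \le \tfrac12 n^2 M^2$, the factor-$9/5$ steps bring $\Phi$ below the threshold $K \eta n^2 M^2$ within $\lceil \log_{9/5}\tfrac{1}{2K\eta} \rceil = T - O(1)$ rounds, and the remaining $O(1)$ rounds — now governed by the Bernstein recursion — push $\Phi$ below $5 \eta n^2 M^2$. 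Here it is essential that the base of the logarithm defining $T$ is $9/5 < 2$, so that the contraction phase uses up only $T - \Omega(1)$ rounds and leaves enough budget for the $O(1)$ tail rounds. This gives $\Phi(T) \le 5 \eta n^2 M^2$ with high probability.

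The conceptual obstacle — that the $\binom{n}{2}$ summands of $\Phi(t+1)$ are pairwise dependent, so one cannot apply concentration to them directly — is dissolved by the rewriting $\Phi(t+1) \le n \sum_u Z_u$ as a sum of $n$ independent terms. The actual difficulty is the constant bookkeeping: the additive errors from the adversary ($\propto \beta$) and from the per-round concentration step ($\propto \delta$) are each only $O(\eta n^2 M^2)$, not $o(\eta n^2 M^2)$, in the worst case $\eta = \delta$, so a naive single-regime iteration lands at $c \eta n^2 M^2$ for some $c > 5$. Getting down to exactly $5 \eta n^2 M^2$ requires the switch to a Bernstein bound once $\Phi(t) = O(\eta n^2 M^2)$ (exploiting that $\sum_u \mathrm{Var}(Z_u)$ scales with $\Phi(t)$ and that $\eta$ is subconstant), together with the verification that $T$ rounds leave room for both regimes — this interplay is where I expect the real work to lie.
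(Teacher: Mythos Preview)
Your overall strategy --- establishing the expectation recursion $\ex{\Phi(t+1)\mid S_t}\le \tfrac12\Phi(t)+2\beta n^2M^2$, proving a per-round concentration step, splitting into a large-$\Phi$ and a small-$\Phi$ regime, and union-bounding over the $O(\log n)$ rounds --- matches the paper's proof exactly. Where you diverge is in the concentration tool. The paper applies Azuma/McDiarmid \emph{directly} to $\Phi(t+1)$, viewed as a function of the $2n$ independent partner choices made in round $t+1$; each choice changes $\Phi(t+1)$ by at most $nM^2$, so the sum of squared effects is $2n^3M^4$, and the same bounded-differences inequality handles both regimes (deviation $\Phi(t)/18$ when $\Phi(t)>5\eta n^2M^2$, deviation $\eta n^2M^2/2$ otherwise). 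This dissolves the pairwise-dependency issue you flag without any rewriting, and the arithmetic then closes on the nose: the large-regime recursion $\Phi(t+1)\le\tfrac59\Phi(t)+2\eta n^2M^2$ iterated for $T$ steps gives $\tfrac12\eta n^2M^2 + \tfrac{2}{1-5/9}\eta n^2M^2 = 5\eta n^2M^2$, and the small-regime step shows $\Phi$ stays below $5\eta n^2M^2$ once there.

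Your route via $\Phi(t+1)\le n\sum_u Z_u$ with independent $Z_u$ is also correct, but it manufactures the extra work you anticipate: Hoeffding alone only reaches $\approx 2(2+C)\eta n^2M^2>5\eta n^2M^2$, so you are forced into the Bernstein regime and then into the delicate check that the $\log_{9/5}$ round budget covers both phases. One genuine slip: your ``hence also $n\cdot\ex{\sum_u Z_u\mid S_t}\le\tfrac12\Phi(t)+2\beta n^2M^2$'' does not follow from summing the per-pair bound~\eqref{eq:expected-difference-bound}, since the inequality $\Phi(t+1)\le n\sum_u Z_u$ points the wrong way. You need instead a direct computation $\ex{Z_u\mid S_t,F_u}=\mathrm{Var}\bigl(\tfrac{x_{u_1}+x_{u_2}}{2}\bigr)=\Phi(t)/(2n^2)$, which is buried in the \emph{proof} of Lemma~\ref{lem:expval} but is not its statement. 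In short: your plan works, but McDiarmid applied directly to $\Phi$ is simpler and entirely avoids the constant-chasing you identify as the hard part.
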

\begin{proof}
    First, notice that for all $0 \leq t < T$, we have by Lemma \ref{lem:expval}
    \begin{align*}
        \ex{\Phi(t + 1) \mid \Phi(t)} &= \sum_{u\neq v \in V} \ex{(x_{u}(t+1) - x_{v}(t+1))^2 \mid \Phi(t)} \\
        &\leq \binom{n}{2} \left(\frac{\Phi(t)}{n^2} + 4 \beta M^2 \right) \\
        &\leq \frac{\Phi(t)}{2} + 2 \beta n^2 M^2.
    \end{align*}
    Now, notice that $\Phi(t + 1)$ is a function of $2n$ independent random variables that correspond to the $2$ nodes chosen by each of the $n$ nodes in this round. Each of these variables appear in exactly $n - 1$ of the summation terms and can affect this term by at most $M^2$. Thus, the effect of each of these variables on $\Phi(t + 1)$ is bounded by $n M^2$, so the sum of squares of the effects of all these random variables is bounded by $2n^3 M^4$. 
    
    First, assume that $\Phi(t) > 5 \eta n^2 M^2 \geq 5 \delta n^2 M^2 = \Omega( M^2 n^{3/2}(\log n)^{1/2})$, where we have used the fact that by definition $\delta = \frac{(\log n)^{1/2}}{n^{1/2}}$ and that $\eta \geq \delta$. We can now apply Azuma's inequality to obtain
    \[ \pr{\Phi(t + 1) - \ex{\Phi(t + 1) \mid \Phi(t)} \geq \frac{\Phi(t)}{18}} \leq 2\exp \left( - \frac{\Phi(t)^2}{18^2 \cdot 2 \cdot (2n^3M^4)} \right) = 2e^{-\Omega(\log n)} = n^{-\Omega(1)}.\]
    Thus, by a union bound, the following holds for all iterations with $\Phi(t) > 5 \eta n^2 M^2$ w.h.p.\
    \[ \Phi(t + 1) \leq \frac{5\Phi(t)}{9} + 2 \beta n^2 M^2 \leq \frac{5\Phi(t)}{9} + 2 \eta n^2 M^2.\]

    Now, suppose that $\Phi(t) \leq 5 \eta n^2 M^2$. Then, again by applying Azuma's inequality, we have
    \[\pr{\Phi(t + 1) - \ex{\Phi(t + 1) \mid \Phi(t)} \geq \frac{\eta n^2 M^2}{2}} \leq 2\exp \left( - \frac{\eta^2 n^4 M^4}{2^2 \cdot 2 \cdot (2n^3M^4)} \right) = 2e^{-\Omega(\log n)} = n^{-\Omega(1)},\]
    where we have used that $\eta^2 \ge \delta^2 = \frac{\log n}{n}$.
    Thus, by a union bound, the following holds for all iterations with $\Phi(t) \leq 5 \eta n^2 M^2$ w.h.p
    \[ \Phi(t + 1) \leq \frac{\Phi(t)}{2} + 2 \beta n^2 M^2 + \frac{\eta n^2 M^2}{2} \leq 5 \eta n^2 M^2. \]
    
    Thus, putting it all together with a union bound, we have
    \[ \Phi(t + 1) \leq \Phi(0) \cdot \left(\frac{5}{9} \right)^T + \frac{2 \eta n^2 M^2}{1 - \frac{5}{9}} \leq 5 \eta n^2 M^2,\]
    where we have used the fact that $\Phi(0) \leq \binom{n}{2} (M)^2 \leq \frac{n^2 M^2}{2}$, and that $\left(\frac{5}{9} \right)^T \leq \eta$.
\end{proof}

Next, we will show that $\psi(T)$ does not deviate much from $\psi(0)$.

\begin{lemma}
    \label{lem:valuesum}
    $|\psi(T) - \psi(0)| \leq \frac{3\varepsilon}{8} n M$ with high probability.
\end{lemma}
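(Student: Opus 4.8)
The plan is to control the change of $\psi$ round by round and sum up. Writing
$\psi(T)-\psi(0)=\sum_{t=0}^{T-1}\bigl(\psi(t+1)-\psi(t)\bigr)$,
I would bound each single-round increment $\psi(t+1)-\psi(t)$ by splitting it into a \emph{drift} term $\ex{\psi(t+1)\mid S_t}-\psi(t)$ and a \emph{fluctuation} term $\psi(t+1)-\ex{\psi(t+1)\mid S_t}$, where $S_t$ denotes the configuration of node values at the start of round $t$. The drift will be handled deterministically by Lemma~\ref{lem:expval}, and the fluctuation by a concentration argument plus a union bound over rounds.

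For the drift, summing the bound \eqref{eq:expected-mean-bounds} of Lemma~\ref{lem:expval} over all $v\in V$ gives $\bigl|\ex{\psi(t+1)\mid S_t}-\psi(t)\bigr|\le 2\beta nM$. For the fluctuation, the key observation is that, once we condition on $S_t$ and on the at most $\beta n$ values emitted by the corrupted nodes in round $t$ — recall that each pulled node sends out a single value, which after the clipping step in Algorithm~\ref{alg:pullavg} lies in $[0,M]$ whether or not it was corrupted — the new value $x_v(t+1)$ of each node $v$ is a deterministic function of $v$'s own two random partners only. Since the partner choices are independent across nodes, the variables $\{x_v(t+1)\}_{v\in V}$ are then mutually independent and each lies in $[0,M]$, so $\psi(t+1)$ is a sum of $n$ independent bounded variables. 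Hoeffding's inequality then gives $\bigl|\psi(t+1)-\ex{\psi(t+1)\mid S_t}\bigr|=O\bigl(M\sqrt{n\log n}\bigr)$ with probability $1-n^{-\Omega(1)}$, and a union bound over the $T$ rounds makes this hold simultaneously for all $t<T$.

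Combining the drift and fluctuation bounds with the triangle inequality yields, with high probability,
\[ |\psi(T)-\psi(0)| \le 2\beta nMT + O\!\left(TM\sqrt{n\log n}\right). \]
It then remains to check that each term is at most $\tfrac{3\varepsilon}{16}nM$. For the second term I would use that $\eta\ge\delta=\sqrt{\log n/n}$, hence $T=\ceil{\log_{9/5}(1/\eta)}=O(\log n)$, so that $TM\sqrt{n\log n}=O\bigl(nM\,(\log n)^{3/2}/\sqrt n\bigr)$, which is $o(\varepsilon nM)$ by the hypothesis $\varepsilon=\Omega\bigl((\log n)^{6/5}/n^{1/5}\bigr)$ (indeed $(\log n)^{6/5}/n^{1/5}$ dominates $(\log n)^{3/2}/\sqrt n$ for large $n$). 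For the first term I would instead use $\eta\ge\beta$, hence $T\le\log_{9/5}(1/\beta)+1$, so $\beta T=O\bigl(\beta\log(1/\beta)\bigr)$; since $x\mapsto x\log(1/x)$ is increasing for small $x$ and $\beta\le(\varepsilon/100)^{2.5}$, this is $O\bigl(\varepsilon^{2.5}\log(1/\varepsilon)\bigr)$, which is $o(\varepsilon)$ and, thanks to the large constant $100^{2.5}$, stays well below $\tfrac{3\varepsilon}{16}$.

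The two concentration steps are routine. The one point that needs care is the independence claim for $\{x_v(t+1)\}_{v\in V}$, which is exactly what lets us sidestep any difficulty with an adversary adapting within a round: it is crucial that a pulled (possibly corrupted) node broadcasts the \emph{same} value to everyone pulling from it, so that the adversary's round-$t$ action is a vector of at most $\beta n$ values and, conditioned on it, each $x_v(t+1)$ depends only on $v$'s independent partner draws. Beyond that, the remaining work is the parameter bookkeeping sketched above, which is where the precise hypotheses on $\beta$ and $\varepsilon$ (and the choice of $T$) are used.
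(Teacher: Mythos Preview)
Your proposal is correct and follows essentially the same approach as the paper: split each round's increment into a drift term controlled by Lemma~\ref{lem:expval} and a fluctuation term controlled by concentration, then union-bound over the $T$ rounds and check the parameter bookkeeping using $\beta\le(\varepsilon/100)^{2.5}$ and $\varepsilon=\Omega((\log n)^{6/5}/n^{1/5})$. The only cosmetic difference is in the concentration step: the paper views $\psi(t+1)$ as a function of the $2n$ independent partner choices (each with effect at most $M/2$) and applies Azuma, whereas you observe that, conditioned on the adversary's round-$t$ action, the $n$ values $x_v(t+1)$ are themselves independent and bounded in $[0,M]$, and apply Hoeffding directly---both yield a deviation bound of order $M\sqrt{n\log n}$. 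One small point of care: since you condition on the adversary's action $A_t$ for the Hoeffding step, your drift term should really be $\ex{\psi(t+1)\mid S_t,A_t}-\psi(t)$ rather than $\ex{\psi(t+1)\mid S_t}-\psi(t)$; but the $2\beta nM$ bound (in fact even $\beta nM$) holds for this finer conditioning as well, so the argument goes through unchanged.
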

\begin{proof}
    First, notice that 
    \[\beta T = \beta \left \lceil \log_{9/5}\frac{1}{\eta} \right \rceil \leq (3.5)\beta \log \frac{1}{\eta} \leq (3.5) \beta \log \frac{1}{\beta} \leq (3.5) \beta \left(\frac{1}{\beta} \right)^{0.2} \leq (3.5) \left( \frac{\varepsilon}{100} \right)^2 \leq \frac{\varepsilon}{100},\] 
    where the fourth inequality holds because $\frac{1}{\beta} \ge \left(\frac{100}{\varepsilon}\right)^{2.5} \ge 200^{2.5}$ and $\log x \le x^{0.2}$ for all $x\ge 200^{2.5}$, and in the penultimate step, we have used the fact that $\beta \leq \left(\frac{\varepsilon}{100}\right)^{2.5}$. 
    
    Thus, for all $0 \leq t < T$, using Lemma \ref{lem:expval} we have
    \[\ex{\psi(t + 1) \mid \psi(t)} = \sum_{u \in V} \ex{x_{u}(t+1) \mid \psi(t)} \leq n \left(\frac{\psi(t)}{n} + 2\beta M \right) \leq \psi(t) + \frac{\varepsilon n M}{4T}.\]
    Similarly, we also get that 
    \[\ex{\psi(t + 1) \mid \psi(t)} = \sum_{u \in V} \ex{x_{u}(t+1) \mid \psi(t)} \geq n \left(\frac{\psi(t)}{n} - 2\beta M \right) \geq \psi(t) -\frac{\varepsilon n M}{4T}.\] 
    Now, notice that $\psi(t + 1)$ is a function of $2n$ independent random variables that correspond to the $2$ nodes chosen by each of the $n$ nodes in this round. Each of these variables can affect $\psi(t+1)$ by at most $\frac{M}{2}$. Thus, the sum of squares of the effects of all these random variables on $\psi(t + 1)$ is bounded by $\frac{n M^2}{2}$. Under our assumption that $\varepsilon = \Omega \left(\frac{(\log n)^{6/5}}{n^{1/5}}\right)$, we can deduce that $\frac{\varepsilon}{T} = \frac{\varepsilon}{O(\log n)} = \Omega \left(\frac{(\log n)^{1/5}}{n^{1/5}}\right) = \Omega \left(\frac{(\log n)^{1/2}}{n^{1/2}}\right)$, and apply Azuma's inequality to obtain
    \[ \pr{|\psi(t + 1) - \ex{\psi(t + 1) \mid \psi(t)}| \geq \frac{\varepsilon n M}{8T}} \leq 2 \cdot \exp \left( - \frac{(\varepsilon n M)^2}{8^2 T^2 \cdot 2 \cdot \frac{n M^2}{2}} \right) \leq 2 e^{-\Omega(\log n)} = n^{-\Omega(1)}.\]
    Thus, by a union bound, the following holds for all iterations w.h.p.\
    \[ \psi(t) - \frac{3\varepsilon n M}{8T} \leq \psi(t + 1) \leq \psi(t) + \frac{3\varepsilon n M}{8T}.\]
    Thus, putting it all together with a union bound, we may bound the total deviation of $\psi(T)$ from the initial value $\psi(0)$ by
    \[ \psi(0) - \frac{3\varepsilon n M}{8} \leq \psi(T) \leq \psi(0) + \frac{3\varepsilon n M}{8},\]
    which gives us the desired result.
\end{proof}

Now, we are ready to show that after $T$ iterations, most of the nodes are close to the true value.

\begin{lemma}
    \label{lem:iterend}
    At the end of $T$ iterations, w.h.p.\ at most a $\frac{\eta^{0.2}}{2}$ fraction of the nodes deviate from the true average (namely, $\frac{\psi(0)}{n}$) by more than $\varepsilon M$.
\end{lemma}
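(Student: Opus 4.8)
The plan is to derive the claim from the two preceding lemmas by a simple counting (Markov-type) argument on the potential $\Phi$. First I would condition on the intersection of the two high-probability events established just above: $\Phi(T) \le 5\eta n^2 M^2$ (Lemma~\ref{lem:potdec}) and $|\psi(T) - \psi(0)| \le \frac{3\varepsilon}{8} n M$ (Lemma~\ref{lem:valuesum}). A union bound keeps the total failure probability at $n^{-\Omega(1)}$, so everything below holds w.h.p.

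Next, by the second identity in Lemma~\ref{lem:alt} we have $\Phi(T) = n \sum_{u \in V}\bigl(x_u(T) - \tfrac{\psi(T)}{n}\bigr)^2$, so on the first event $\sum_{u \in V}\bigl(x_u(T) - \tfrac{\psi(T)}{n}\bigr)^2 \le 5\eta n M^2$. Hence for any threshold $a>0$, the number of nodes $u$ with $\bigl|x_u(T) - \tfrac{\psi(T)}{n}\bigr| > a$ is at most $\frac{5\eta n M^2}{a^2}$, since each such node contributes more than $a^2$ to the sum. I would then transfer this to deviations from the true average $\tfrac{\psi(0)}{n}$ via the triangle inequality: on the second event $\bigl|\tfrac{\psi(T)}{n} - \tfrac{\psi(0)}{n}\bigr| \le \tfrac{3\varepsilon}{8} M$, so if $\bigl|x_u(T) - \tfrac{\psi(0)}{n}\bigr| > \varepsilon M$ then $\bigl|x_u(T) - \tfrac{\psi(T)}{n}\bigr| > \varepsilon M - \tfrac{3\varepsilon}{8}M = \tfrac{5\varepsilon}{8} M$. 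Plugging $a = \tfrac{5\varepsilon}{8}M$ into the bound above, the number of nodes that deviate from $\tfrac{\psi(0)}{n}$ by more than $\varepsilon M$ is at most $\frac{5\eta n M^2}{(5\varepsilon M/8)^2} = \frac{64\,\eta}{5\,\varepsilon^2}\, n$.

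Finally I would check that $\frac{64\eta}{5\varepsilon^2} \le \frac{\eta^{0.2}}{2}$, which is equivalent to $\eta^{0.8} \le \frac{5\varepsilon^2}{128}$. This follows from the standing bound $\eta \le (\varepsilon/100)^{2.5}$ recorded at the start of the section, since then $\eta^{0.8} \le (\varepsilon/100)^{2} = \varepsilon^2/10^4 \le \frac{5\varepsilon^2}{128}$. There is essentially no obstacle here: the argument is immediate once Lemmas~\ref{lem:potdec}, \ref{lem:valuesum} and \ref{lem:alt} are in place, and the only points needing a little care are intersecting the two w.h.p.\ events and tracking the numerical constants in the last inequality.
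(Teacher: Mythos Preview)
Your proposal is correct and follows essentially the same approach as the paper: both combine Lemmas~\ref{lem:potdec}, \ref{lem:valuesum}, and \ref{lem:alt} via the triangle inequality to reduce to deviations from $\psi(T)/n$ with threshold $\tfrac{5\varepsilon}{8}M$, and then use the same numerical inequality $\eta^{0.8} \le \tfrac{5\varepsilon^2}{128}$ (from $\eta \le (\varepsilon/100)^{2.5}$). The only cosmetic difference is that the paper phrases the counting step as a proof by contradiction rather than your direct Markov-type bound.
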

\begin{proof}
    Notice that by Lemma \ref{lem:valuesum}, w.h.p.\ $|\psi(T) - \psi(0)| \leq \frac{3\varepsilon}{8} n M$. Hence, it suffices to show that at most a $\frac{\eta^{0.2}}{2}$ fraction of the nodes deviate from $\frac{\psi(T)}{n}$ by more than $\frac{5\varepsilon}{8} M$.
    
    Assume the contrary. Then, using Lemma \ref{lem:alt} we have
    \[ \Phi(T) = n\cdot \sum_{u \in V} \left( x_{u}(T) - \frac{\psi(T)}{n} \right)^2  > n \cdot \frac{\eta^{0.2}}{2} n \cdot \left( \frac{5}{8} \right)^2 \varepsilon^2 M^2 \geq 5 \eta n^2 M^2\]
    where we have used the fact that $\eta \leq \left(\frac{\varepsilon}{100}\right)^{2.5}$. But, this contradicts Lemma \ref{lem:potdec}. Hence, our assumption must be wrong and the claim must be true.
\end{proof}

Finally, we analyze the second and final phase of the algorithm.

\begin{lemma}\label{lem:final-count}
    W.h.p.\ all but a $\gamma$ fraction of the nodes outputs an average in $\left[ \frac{\psi(0)}{n} - \varepsilon M, \frac{\psi(0)}{n} + \varepsilon M\right]$
\end{lemma}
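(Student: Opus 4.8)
The plan is to reuse the structure of the proof of Lemma~\ref{lem:final-median-approx}, starting from Lemma~\ref{lem:iterend} and then analyzing the Phase-2 median vote whenever it is executed. Call a node \emph{bad} if the value it currently stores lies outside $\left[\frac{\psi(0)}{n}-\varepsilon M,\ \frac{\psi(0)}{n}+\varepsilon M\right]$. By Lemma~\ref{lem:iterend}, w.h.p.\ at most a $\frac{\eta^{0.2}}{2}$ fraction of the nodes is bad at the end of Phase~1, and I would condition on this event. First I would handle the case where Phase~2 does not run: then $\eta=\min\!\left(\gamma^5,\left(\tfrac{\varepsilon}{100}\right)^{2.5}\right)\le\gamma^5$, so the bad fraction is at most $\frac{\eta^{0.2}}{2}\le\frac\gamma2\le\gamma$, and nothing more is needed.

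For the remaining case Phase~2 runs, so $\eta=\max(\beta,\delta)>\min\!\left(\gamma^5,\left(\tfrac{\varepsilon}{100}\right)^{2.5}\right)$; in particular $\beta\le\eta$, and since $\delta\ll\left(\tfrac{\varepsilon}{100}\right)^{2.5}$ and $\varepsilon\le1$ we also get $\eta\le\left(\tfrac{\varepsilon}{100}\right)^{2.5}\le10^{-5}$. Fixing a node $v$ and looking at its $K$ Phase-2 samples, I would argue that $v$ can output a wrong value only if at least $K/2$ of its samples are bad or adversarially corrupted in that round: if the median of the sampled values lies above the target interval, at least $K/2$ of them are themselves above it and hence bad or corrupted, and symmetrically if it lies below. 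A single sample is bad-or-corrupted with probability at most $p\coloneqq\frac{\eta^{0.2}}{2}+\beta\le\frac{\eta^{0.2}}{2}+\eta\le\eta^{0.2}$ (using $\eta\le2^{-1.25}$), and since the samples are independent,
\[\pr{v\text{ outputs a wrong value}}\ \le\ 2^{K}\,p^{K/2}\ \le\ \left(2\eta^{0.1}\right)^{K}.\]

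Next I would check that $\left(2\eta^{0.1}\right)^{K}\le(\gamma/2)^2$ for the algorithm's choice $K=\max\!\left(100,\ \ceil{40\log_{32\beta}(\gamma/2)}\right)$, in two regimes. If $\eta=\beta$, I would use $K\ge40\log_{32\beta}(\gamma/2)=40\,\frac{\ln(2/\gamma)}{\ln(1/(32\beta))}$, which reduces the claim to $\frac{\ln(1/(2\beta^{0.1}))}{\ln(1/(32\beta))}\ge\frac1{20}$; this holds because for $\beta\le10^{-5}$ the left-hand side is a decreasing function of $\beta$ whose value at $\beta=10^{-5}$ is already about $0.057$. If instead $\eta=\delta$, then $\delta=\sqrt{\log n/n}$ is polynomially small in $n$, so $K\ge100$ alone gives $\left(2\delta^{0.1}\right)^{100}=2^{100}\delta^{10}=O\!\left((\log n)^5/n^5\right)$, which is at most $(\gamma/2)^2$ since $\gamma=\Omega(1/n)$ and $n$ is large. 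Either way, the expected number of bad nodes after Phase~2 is at most $(\gamma/2)^2 n\le\frac{\gamma n}{2}$; since the events ``$v$ outputs a wrong value'' are conditionally independent given the Phase-1 configuration and the round's adversarial set, I would conclude exactly as in Lemma~\ref{lem:final-median-approx}: a Chernoff bound when $\gamma=\Omega(\log n/n)$ and Markov's inequality otherwise, both giving probability $n^{-\Omega(1)}$ that more than $\gamma n$ nodes are bad.

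The hard part will be this last case distinction: the single threshold $K=\max\!\left(100,\ \ceil{40\log_{32\beta}(\gamma/2)}\right)$ must be simultaneously large enough whether the residual error $\eta^{0.2}$ is controlled by $\beta$ or by the sampling noise $\delta$, and whether $\gamma$ is minuscule (so that $K$ has to grow like $\frac{\log(1/\gamma)}{\log(1/\beta)}$) or only moderately small. Reconciling the constants $32$ and $40$ with the per-sample error estimate $p\le\eta^{0.2}\le\beta^{0.2}$ is precisely where the hypothesis $\beta\le\left(\tfrac{\varepsilon}{100}\right)^{2.5}\le10^{-5}$ is used.
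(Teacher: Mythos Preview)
Your proposal is correct and follows essentially the same approach as the paper's proof: start from Lemma~\ref{lem:iterend}, dispose of the case where Phase~2 is skipped, and otherwise bound the probability that the median of $K$ samples falls outside the target interval, splitting according to whether $\eta=\beta$ or $\eta=\delta$, and finish with Chernoff/Markov exactly as in Lemma~\ref{lem:final-median-approx}.

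Two minor differences are worth noting. First, you explicitly add $\beta$ to the per-sample failure probability to account for adversarial corruption during Phase~2, whereas the paper uses only $\eta^{0.2}/2$; your treatment is more careful here (and still works since $\beta\le\eta\ll\eta^{0.2}$). Second, because of this you end up with the per-node bound $(2\eta^{0.1})^K\le(\gamma/2)^2$ rather than the paper's $(2\eta^{0.2})^{K/2}=(32\eta)^{K/10}\le(\gamma/2)^4$; the weaker exponent $2$ is still enough for the final Markov step, since in the regime $\gamma=o(\log n/n)$ one has $\gamma/4=n^{-\Omega(1)}$. Your verification in the $\eta=\beta$ case via the ratio $\frac{\ln(1/(2\beta^{0.1}))}{\ln(1/(32\beta))}\ge\tfrac1{20}$ is equivalent to (and slightly more explicit than) the paper's direct use of the identity $(2\beta^{0.2})^{K/2}=(32\beta)^{K/10}$.
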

\begin{proof}
    By Lemma \ref{lem:iterend}, we know that after $T$ iterations, at most a $\frac{\eta^{0.2}}{2}$ fraction of the nodes hold a value outside the desired interval.

    First, suppose that $\eta = \min(\gamma^5, \left(\frac{\varepsilon}{100}\right)^{2.5})$. Then, notice that $\frac{\eta^{0.2}}{2} \leq \frac{\gamma}{2} \leq \gamma$, and hence the required claim follows immediately, as Phase $2$ does nothing in this case. 

    Now, suppose that $\eta = \delta$. Now, a fixed node $v$ outputs a value in the required range if at least $K/2$ of the nodes that it pulls from in the final part of the algorithm has an approximation in the desired range. This means that the probability that this this does not happen is at most 
    \[ \binom{K}{K/2} \left(\frac{\delta^{0.2}}{2}\right)^{K/2} \leq (2 \delta^{0.2})^{K/2} \leq (2 \delta^{0.2})^{50} = \frac{2^{50}\log^5 n}{n^5} \le \left(\frac{1}{2n}\right)^4 \leq \left(\frac{\gamma}{2}\right)^4. \]
    
     Now, suppose that $\eta = \beta$. By the same reasoning as above, the probability that a fixed node does not output a value in the desired range is upper bounded by
    \[ \binom{K}{K/2} \left(\frac{\beta^{0.2}}{2}\right)^{K/2} \leq  (2\beta^{0.2})^{K/2} = (32\beta)^{K/10} \leq \left(\frac{\gamma}{2}\right)^4, \]
    where the last inequality comes from the fact that $K \ge \ceil{40 \cdot \log_{32\beta} \left( \frac{\gamma}{2} \right)}$. 

    So whenever $\eta=\delta$ or $\eta=\beta$, the expected number of nodes that do not output a correct approximate median is at most $\frac{\gamma^4 n}{16}$.
    First, suppose that $\gamma = \Omega \left( \frac{\log n}{n}\right)$. Notice that the expected number of nodes is $\frac{\gamma^4 n}{16} \leq \frac{\gamma n}{2}$. Now, using Chernoff bounds, the probability that the number of nodes that output the wrong answer is more than $\gamma n$ is upper bounded by $\exp(-\frac{1}{3} \cdot  (\frac{\gamma n}{2})) = \exp(- \Omega(\log n)) = n^{-\Omega(1)}$.

    Else, we have $\gamma = o \left( \frac{\log n}{n}\right)$. Now, we can apply Markov's inequality to conclude that the probability that more than $\gamma n$ nodes output the wrong answer is upper bounded by $\frac{\gamma^4 n}{16 \cdot \gamma n} = o \left( \frac{\log^3 n}{n^3} \right) = n^{-\Omega(1)}$.
\end{proof}
This concludes the correctness proof of our algorithm. We tie up this section by proving the claimed runtime bound.
\begin{theorem}
    \label{thm:run-time}
    The  algorithm terminates in $O \left(\log \frac{1}{\varepsilon} + \log \frac{1}{\gamma + \beta} + \frac{\log \gamma}{\log \beta} \right)$ rounds.
\end{theorem}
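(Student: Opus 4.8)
The plan is to note that the round complexity of Algorithm~\ref{alg:pullavg} is $O(T+K)$, where $T=\lceil\log_{9/5}(1/\eta)\rceil$ is the number of Phase~1 iterations and $K=\max(100,\lceil 40\log_{32\beta}(\gamma/2)\rceil)$ is the number of samples taken in Phase~2 (each Phase~1 iteration and the Phase~2 sampling cost $O(1)$ rounds), and then to bound $T$ and $K$ separately in terms of $\varepsilon,\gamma,\beta$.

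For $T$, I would sidestep the case analysis implicit in the definition $\eta=\max(\beta,\delta,\min(\gamma^5,(\varepsilon/100)^{2.5}))$ by only using the two lower bounds $\eta\ge\beta$ and $\eta\ge\min(\gamma^5,(\varepsilon/100)^{2.5})$, which give $\log\frac1\eta\le\min\!\big(\log\frac1\beta,\ \max(5\log\frac1\gamma,\,2.5\log\frac{100}{\varepsilon})\big)$. Then the elementary inequalities $\min(a,b+c)\le\min(a,b)+c$ and $\min(a,5b)\le 5\min(a,b)$ for nonnegative reals, together with $\max(\beta,\gamma)\ge\frac{\gamma+\beta}{2}$, collapse this to $5\log\frac{2}{\gamma+\beta}+2.5\log\frac{100}{\varepsilon}=O\!\big(\log\frac{1}{\gamma+\beta}+\log\frac1\varepsilon+1\big)$. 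Since $\log_{9/5}(1/\eta)=\Theta(\log(1/\eta))$, this is also the bound on $T$. The point worth highlighting is that the $\delta$ term in $\eta$ only helps here (it only increases $\eta$, hence decreases $T$), so it may simply be dropped from the $\min$.

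For $K$, I would first use the assumption $\beta\le(\varepsilon/100)^{2.5}\le(1/100)^{2.5}=10^{-5}$ (recall $\varepsilon\le 1$) to see that the base $32\beta$ is a well-defined (i.e.\ $<1$) logarithm base, and more quantitatively that $\log(1/(32\beta))=\log(1/\beta)-\log 32\ge\tfrac12\log(1/\beta)=\Theta(\log(1/\beta))$, using $\log(1/\beta)\ge\log 10^5$. Combined with $\log(2/\gamma)=O(1+\log(1/\gamma))$, this yields $\log_{32\beta}(\gamma/2)=O\!\big(1+\frac{\log(1/\gamma)}{\log(1/\beta)}\big)$, hence $K=O\!\big(1+\frac{\log\gamma}{\log\beta}\big)$ (and when Phase~2 is skipped this contribution vanishes). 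Adding the two bounds gives $O(T+K)=O\!\big(\log\frac1\varepsilon+\log\frac{1}{\gamma+\beta}+\frac{\log\gamma}{\log\beta}+1\big)$, and the additive $1$ is absorbed exactly as in the proof of Theorem~\ref{thm:runtime}.

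The only mildly delicate step is the manipulation that turns the nested $\max$/$\min$ defining $\eta$ into the clean three-term estimate without losing more than constant factors — in particular checking that $\log\frac{1}{\gamma+\beta}$ and $\frac{\log\gamma}{\log\beta}$ genuinely dominate the leftover $\Theta(1)$ and $\Theta\!\big(\frac{1}{\log(1/\beta)}\big)$ terms, which holds precisely because $\beta$ lies below a small absolute constant. Everything else is routine bookkeeping.
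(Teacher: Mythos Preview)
Your proposal is correct and follows essentially the same approach as the paper: both start from $O(T+K)$, drop the $\delta$ term in $\eta$ (since it only increases $\eta$), unpack $\log(1/\eta)$ via the nested $\min/\max$ into $O(\log\frac{1}{\varepsilon}+\log\frac{1}{\gamma+\beta})$, and bound $K$ by $O(\frac{\log\gamma}{\log\beta})$ using that $32\beta$ is bounded away from $1$. The only cosmetic difference is that you make the elementary inequalities $\min(a,b+c)\le\min(a,b)+c$ and $\min(a,5b)\le 5\min(a,b)$ explicit, whereas the paper passes directly from $\log\frac{1}{\min(\max(\beta,\gamma^5),(\varepsilon/100)^{2.5})}$ to $\max(\log\frac{1}{\varepsilon},\min(\log\frac{1}{\gamma},\log\frac{1}{\beta}))$ and then to the sum.
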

\begin{proof}
    Notice that the total number of rounds is $O(T + K) = O \left( \log \frac{1}{\eta} + \frac{\log \gamma}{ \log \beta} \right)$.  Now, recall that $\eta$ can be alternatively written as
    \[ \eta = \min(\max(\beta, \delta, \gamma^5), \left(\tfrac{\varepsilon}{100}\right)^{2.5}) \geq \min(\max(\beta, \gamma^5), \left(\tfrac{\varepsilon}{100}\right)^{2.5}).\] 
    Thus, one can rewrite the time complexity as 
    \begin{align*}
       O \left(\log \frac{1}{\min(\max(\beta, \gamma^5), \left(\frac{\varepsilon}{100}\right)^{2.5})} + \frac{\log \gamma}{\log \beta} \right) &= O \left( \max \left(\log \frac{1}{\varepsilon}, \min \left(\log \frac{1}{\gamma}, \log \frac{1}{\beta} \right)\right) + \frac{\log \gamma}{\log \beta} \right) \\
       &= O \left(\log \frac{1}{\varepsilon} + \min \left(\log \frac{1}{\gamma}, \log \frac{1}{\beta} \right) + \frac{\log \gamma}{\log \beta} \right) \\
       &= O \left(\log \frac{1}{\varepsilon} + \log \frac{1}{\gamma + \beta} + \frac{\log \gamma}{\log \beta} \right).
    \end{align*}
\end{proof}

\section{Lower Bounds on Round Complexity}\label{sec:low_bound_appendix}

This section provides the proof of Proposition \ref{prop:lower-bound}.

\lowerboundprop*

\begin{proof}
    We first note that the addition of adversarial faults can only make the problem harder, in the sense that the adversaries can choose to not corrupt any messages and the lower bound in this scenario should still be a lower bound for the original problem. Thus, we will ignore the presence of adversarial faults in the following analysis.

    Consider the following two scenarios. The first is when each node is associated with a distinct value from $\{1, 2, \ldots, n\}$, and the second is when each node is associated with a distinct value from $\{1 + \lfloor 3 \varepsilon n \rfloor, \ldots, n + \lfloor 3 \varepsilon n \rfloor\}$. Notice that $\left \lfloor \frac{n + 1}{2} \right \rfloor$ is a median for the first scenario and $\left \lfloor \frac{n + 1}{2} \right \rfloor + \lfloor 3 \varepsilon n \rfloor$ is a median for the second scenario. Similarly, $\frac{n + 1}{2}$ and $\frac{n + 1}{2} + \lfloor 3 \varepsilon n \rfloor$ are the means of the first and second scenarios respectively. Thus, the mean (median) for one of these scenarios is not a correct approximation for the mean (median) of the other scenario.
    
    Without loss of generality we can assume that $n-\lfloor 3 \varepsilon n \rfloor$ nodes start with the same values in the two scenarios, and we denote by $S$ the set of $\lfloor 3 \varepsilon n \rfloor$ nodes that have a different starting value depending on the scenario. 
    A node can distinguish between the two scenarios only if it receives a value from the set $S$. 
    Let us call a node \textit{good} if it has received a value from $S$ (either directly or indirectly), and \textit{bad} otherwise. Note that a bad node cannot output a correct answer with probability better than $\frac{1}{2}$ (random guessing). Initially there are at most $\lfloor 3 \varepsilon n \rfloor \leq 3 \varepsilon n$ good nodes.

    Let $X_i$ denote the number of good nodes at the end of round $i$, with $X_0 = \lfloor 3 \varepsilon n \rfloor$. Given a bad node $v$, it can become a good if it pulls from a good node, or if it is pushed to from a good node. Let $Y_v$ denote the event that $v$ pulls from a good node. We have $\pr{Y_v \mid X_i} = X_i / n$. Also, the pushes from the good nodes can only generate another at most $X_i$ good nodes. Therefore,
    \[ \ex{X_{i + 1} \mid X_i} \leq 2X_i + \sum_{v\text{ bad}}\ex{Y_v \mid X_i} \leq 3X_i.\]
    Let $B_i$ denote the set of bad nodes at the end of round $i$. Since all pulls and pushes are independent, we have by Chernoff $\pr{\sum_{v \in B_i} Y_v > 2X_i \mid X_i} \leq e^{-X_i / 3} \leq e^{-5 \log n} = 1 / n^5$, since $X_i \geq X_0 \geq 15 \log n$. Therefore, with probability at least $1 - 1 / n^5$, $X_{i + 1} \leq 4X_i$. By taking a union bound over such events for the first $t' = \log _4 (1 / 6\varepsilon)$ rounds, we conclude with probability at least $1 - 1/n^4$, $X_{t'} \leq (3 \varepsilon n) 4^{t'} \leq n / 2$.

    Let $t_0$ be the last round such that $X_{t_0} \leq n / 2$. Define $Z_i \coloneqq |B_i| / n$. A node $v$ remains in $B_i$ if it did not pull from a good node and it was not pushed to from any good node. Denoting such an event by $W_v$, we have $\pr{W_v \mid B_i} \geq Z_i \cdot \left(1 - \frac{1}{n} \right)^{n - 1} \geq Z_i \cdot e^{-1}$. In the second inequality, we used that for all $n \ge 1$ , it holds that  $\left(1 -\frac{1}{n} \right)^{n - 1} \geq e^{-1}$ (Corollary 1.4.6. in~\cite{doerr2020probabilistic}).
    \[ \ex{|B_{i + 1}| \mid B_i} = \ex{ \sum_{v \in B_i} W_v \mid B_i} \geq \left( \sum_{v \in B_i} Z_i \cdot e^{-1} \right) = |B_i| \cdot Z_i \cdot e^{-1}. \]

    Note that the events $\{ W_v \}_{v \in B_i}$ are negatively associated, which means that Chernoff bounds hold for their sum. Suppose that $Z_i \cdot |B_i| \geq 40 e \log n$. By Chernoff, 
    \[ \pr{|B_{i + 1}| \leq |B_i| \cdot Z_i \cdot e^{-1} / 2 \mid B_i} \leq e^{-Z_i \cdot |B_i| \cdot e^{-1} / 8} \leq 1 / n^5. \]

    In other words, $\pr{Z_{i + 1} \leq Z_i^2 / (2e) \mid Z_i} \leq 1/n^5$. Suppose that $t_1 = O(n)$; then, we can take a union bound over the subsequent $t_1$ rounds (after the number of good nodes reaches $n/2$, i.e.\ after $t_0$ rounds) to show that with probability at least $1 - O(1/n^4)$, $Z_{i + 1} \geq Z_i^2 / (2e)$ holds for these rounds, as long as $Z^2_{t_0 + t_1 - 1} \geq 40 e \log n / n$.

    Let $z_0 \coloneqq 1/2$ and $z_{i + 1} \coloneqq z_i^2 / (2e)$. If $z_{t_1} \geq 20 \log n / n$, we have $Z_{t_0 + t_1 - 1} \geq z_{t_1 - 1}$ with probability at least $1 - O(1/n^4)$. Let $t_1 \coloneqq \left \lfloor \log_2 \log_{4e} \left(\frac{1}{\gamma} \right) \right \rfloor = O(n)$; then by definition of $z_{t_1}$, we have 
    \[z_{t_1} = \left( \frac{1}{2e}\right)^{2^{t_1} - 1} \cdot z_{0}^{2^{t_1}} \geq \left( \frac{1}{4e}\right)^{2^{t_1}} \geq \gamma.\]
    Notice that $\gamma \geq \frac{20 \log n}{n}$ by assumption, which means that $z_{t_1} \geq \frac{20 \log n}{n}$, and hence $Z^2_{t_0 + t_1 - 1} \geq z^2_{t_1 - 1} = (2e) z_{t_1} \geq \frac{40e \log n}{n}$ with probability at least $1 - O(1/n^4)$.

    Recall that $t_0 \geq \log_4 \frac{1}{6 \varepsilon}$ with probability at least $1 - 1 / n^4$. Thus, a union bound over the two events implies that with probability at least $1 - O(1/n^4)$, at least a $\gamma$ fraction of the nodes are bad after $t_0 + t_1$ rounds.

    Therefore, the probability that all but a $\gamma$ fraction of the nodes compute the correct output at the end of round $t_0 + t_1$ is at most $1/2 + O(1/n^4)$, which proves our required claim.
\end{proof}

\end{document}